\newcommand{\p}{\partial}
\newcommand{\sgn}{\mathop{\rm sgn}\nolimits}
\newcommand{\const}{{\rm const}}
\newcommand{\lsemioplus}{\mathbin{\mbox{$\lefteqn{\hspace{.77ex}\rule{.4pt}{1.2ex}}{\in}$}}}
\newlength{\mylength}
\newcommand{\solution}{\hspace*{-\mylength}\bullet\quad}
\newtheorem{theorem}{Theorem}
\newtheorem{lemma}[theorem]{Lemma}
\newtheorem{corollary}[theorem]{Corollary}
\theoremstyle{definition}
\newtheorem{remark}[theorem]{Remark}
\newcommand{\todo}[1][\null]{\ensuremath{\clubsuit}}
\newcommand{\noprint}[1]{}
\begin{document}

\par\noindent {\LARGE\bf
Lie reductions and exact solutions\\
of dispersionless Nizhnik equation
\par}

\vspace{5mm}\par\noindent{\large
Oleksandra O.\ Vinnichenko$^\dag$, Vyacheslav M.\ Boyko$^{\dag\ddag}$ and Roman O.\ Popovych$^{\dag\S}$
}

\vspace{5mm}\par\noindent{\it\small
$^\dag$\,Institute of Mathematics of NAS of Ukraine, 3 Tereshchenkivska Str., 01024 Kyiv, Ukraine
\par}

\vspace{2mm}\par\noindent{\it\small
$^\ddag$\,Department of Mathematics, Kyiv Academic University, 36 Vernads'koho Blvd, 03142 Kyiv, Ukraine
\par}

\vspace{2mm}\par\noindent{\it\small
$^\S$\,Mathematical Institute, Silesian University in Opava, Na Rybn\'\i{}\v{c}ku 1, 746 01 Opava, Czech Republic
\par}

\vspace{5mm}\par\noindent{\small
E-mails:
oleksandra.vinnichenko@imath.kiev.ua,
boyko@imath.kiev.ua,
rop@imath.kiev.ua
\par}

\vspace{8mm}\par\noindent\hspace*{10mm}\parbox{140mm}{\small
We exhaustively classify the Lie reductions of the real dispersionless Nizhnik equation
to partial differential equations in two independent variables
and to ordinary differential equations.
Lie and point symmetries of reduced equations are comprehensively studied,
including the analysis of which of them correspond to hidden symmetries of the original equation.
If necessary, associated Lie reductions of a nonlinear Lax representation
of the dispersionless Nizhnik equation are carried out as well.
As a result, we construct wide families of new invariant solutions of this equation
in explicit form in terms of elementary, Lambert and hypergeometric functions
as well as in parametric or implicit form.
We show that Lie reductions to algebraic equations lead to
no new solutions of this equation in addition to the constructed ones.
Multiplicative separation of variables  is used for illustrative construction
of non-invariant solutions.

}\par\vspace{4mm}

\noprint{
Keywords:
dispersionless Nizhnik equation;
Lie reduction;
invariant solutions;
point-symmetry pseudogroup;
Lie invariance algebra;
discrete symmetry;
hidden symmetries;

MISC: 35B06 (Primary) 17B80, 35C05, 35C06, 35A30  (Secondary)
17-XX   Nonspeculative rings and algebras
 17Bxx	 Lie algebras and Lie superlogical {For Lie groups, see 22Exx}
  17B80   Applications of Lie algebras and superlogical to integrable systems
35-XX   Partial differential equations
  35A30   Geometric theory, characteristics, transformations [See also 58J70, 58J72]
  35B06   Symmetries, invariants, etc.
 35Cxx  Representations of solutions
  35C05   Solutions in closed form
  35C06   Self-similar solutions
}

\section{Introduction}

Lie reduction is the main way to use Lie symmetries for finding exact solutions
of partial differential equations \cite{blum2009A,blum1989A,boch1999A,olve1993A,ovsi1982A}.
Since the Lie invariance algebras of models studied in mathematical physics
are usually wide enough,
it is also the most universal way for constructing exact solutions of such models in general
and, especially, of nonlinear ones.
Many papers devoted to this subject were published for several last decades
but correct and comprehensive studies of Lie reductions and the corresponding reduced systems
for specific systems of partial differential equations are rather exceptional,
especially in the case of more than two independent variables,
see, e.g.,~\cite{andr1998A,cham1988a,davi1986a,fush1994a,fush1994b,kont2019a,kova2023b,malt2024a,mart1989a,olve1993A,poch2017a,vane2021a},
the result collections~\cite{CRC_v2,CRC_v1} and references therein
for particular examples.

In the present paper,
we exhaustively classify the Lie reductions of the real dispersionless Nizhnik equation
\begin{gather}\label{eq:dN}
u_{txy}=(u_{xx}u_{xy})_x + (u_{xy}u_{yy})_y
\end{gather}
to partial differential equations in two independent variables
and to ordinary differential equations.
We use the above name for the equation~\eqref{eq:dN}
since it is the dispersionless counterpart of the (real potential symmetric) Nizhnik equation
for the (real) Nizhnik system~\cite[Eq.~(4)]{nizh1980a};
see~\cite[footnote~3]{boyk2024a} for a detail discussion of choosing the name.
The other names are the dispersionless Nizhnik--Novikov--Veselov equation~\cite[Eq.~(63)]{kono2002b}
or even the dispersionless Novikov--Veselov equation, see, e.g., \cite[Eq.~(5)]{pavl2006a} and \cite[Eq.~(1)]{moro2021a}
although the proper Novikov--Veselov counterpart of~\eqref{eq:dN},
which derived in~\cite[Eq.~(30)]{kono2004c} and~\cite[Eq.~(32)]{kono2004b} as a model of nonlinear geometrical optics,
is of different form.

Using the known Lax representation for the Nizhnik system \cite{nizh1980a}
or, equivalently, for the (potential symmetric) Nizhnik equation
and a technique of limit transitions to dispersionless counterparts of
(1+2)-dimensional integrable differential equations and of the corresponding Lax representations,
which was suggested in~\cite[p.~167]{zakh1994a},
one can derive
a nonlinear Lax representation of the dispersionless Nizhnik equation~\eqref{eq:dN},
\begin{gather}\label{eq:dNLaxPair}
v_t=\frac13\left(v_x^3-\frac{u_{xy}^3}{v_x^3}\right)+u_{xx}v_x-\frac{u_{xy}u_{yy}}{v_x},\quad
v_y=-\frac{u_{xy}}{v_x},
\end{gather}
see~\cite{pavl2006a}.
This representation can be converted to the associated linear nonisospectral Lax representation
\begin{gather*}
\chi_t=(p^2+p^{-4}u_{xy}^{\,\,3}+u_{xx}+p^{-2}u_{xy}u_{yy})\chi_x
-(pu_{xxx}-p^{-1}(u_{xy}u_{yy})_x-p^{-3}u_{xy}^{\,\,2}u_{xxy})\chi_p,\\
\chi_y=p^{-2}u_{xy}\chi_x+p^{-1}u_{xxy}\chi_p,
\end{gather*}
where $p$ is a variable spectral parameter, $\chi=\chi(t,x,y,p)$ and  $u=u(t,x,y)$;
see, e.g., \cite[p.~360]{serg2018a} and references therein for the corresponding procedure.

The study of the equation~\eqref{eq:dN} within the framework of classical symmetry analysis
was initiated in~\cite{moro2021a}.
Therein, the representation~$\mathfrak g_{\rm e}$ of the contact invariance algebra of~\eqref{eq:dN}
by vector fields in the evolution form was computed,
and it turned out to coincide with that of the Lie (point) invariance algebra of~\eqref{eq:dN}.
The corresponding point (resp.\ contact) symmetry group was given with mistakes and omissions.
One-dimensional subalgebras of~$\mathfrak g_{\rm e}$ that are appropriate for Lie reduction
were classified with a minor deficiency.
The corresponding Lie reductions of~\eqref{eq:dN} to partial differential equations
with two independent variables and further Lie reductions of these equations were performed.
Wide families of solutions that are polynomial in~$(x,y)$ were constructed
as examples of non-invariant solutions.
Second-order cosymmetries of the equation~\eqref{eq:dN} were found.
Since all of them are conservation-law characteristics of this equation,
the associated conserved currents were computed as well.
At the same time, it was not studied which Lie symmetries of reduced equations
are induced by Lie symmetries of the original equation~\eqref{eq:dN},
and thus a number of presented two-step reductions are in fact needless.
Among obtained Lie-invariant solutions of~\eqref{eq:dN}, there are many equivalent to each other
or those containing typos, which makes them incorrect.
Careful analysis of reduced ordinary differential equations shows
that more of their closed-form solutions can be constructed,
and one should take into account the degeneracy of some of these equations.

In the present paper, we correct, enhance and significantly extend results from~\cite{moro2021a}.
We scrupulously carry out each step of the optimized procedure of comprehensive Lie reduction
for the dispersionless Nizhnik equation~\eqref{eq:dN},
which results in finding wide families of new invariant solutions of~\eqref{eq:dN}
in explicit form in terms of elementary, Lambert and hypergeometric functions
as well as in parametric or implicit form.
An accurate description of this procedure
for the case of a system of partial differential equations with three independent variables,
which is relevant to the equation~\eqref{eq:dN},
is presented for the first time in Section~\ref{sec:LieReductionProcedure}
along with a number of related comments. 

The first step of the Lie reduction procedure for the equation~\eqref{eq:dN}
was in fact implemented in~\cite{boyk2024a} (cf.\ \cite{moro2021a}),
where we in particular computed
the maximal Lie invariance algebras~$\mathfrak g$ and~$\mathfrak g_{\rm L}$
of the equation~\eqref{eq:dN} and its nonlinear Lax representation~\eqref{eq:dNLaxPair}
as well as their point-symmetry pseudogroups~$G$ and~$G_{\rm L}$, respectively,
and performed a preliminary study of these algebras and pseudogroups.
Since the above results are used throughout the present paper,
for convenience we review them in Sections~\ref{sec:LieInvAlgebra} and~\ref{sec:PointSymGroup}.
One- and two-dimensional subalgebras of the algebra~$\mathfrak g$
and one-dimensional subalgebras of the algebra~$\mathfrak g_{\rm L}$
are classified in Section~\ref{sec:dNClassificationOfSubalgs}
up to the $G$- and $G_{\rm L}$-equivalences, respectively.
The constructed optimal lists of one- and two-dimensional subalgebras of~$\mathfrak g$
create a basis for the efficient and exhaustive fulfilment of Lie reductions of the equation~\eqref{eq:dN}
to partial differential equations in two independent variables in Section~\ref{sec:dNLieReductionsOfCodim1}
and to ordinary differential equations in Section~\ref{sec:dNLieReductionsOfCodim2}.
In Section~\ref{sec:dNTrivialSolutions}, we discuss trivial solutions of the equation~\eqref{eq:dN},
which we exclude from the further consideration.
We show in Section~\ref{sec:dNLieReductionsOfCodim3} that
Lie reductions of the equation~\eqref{eq:dN} to algebraic equations
give no new solutions of this equation in comparison with the above reductions to differential equations.
In Section~\ref{sec:dNMultiplicativeSeparationOfVars},
we use multiplicative separation of variables to present
an example of finding non-Lie solutions of the equation~\eqref{eq:dN}
that generalize invariant solutions.
Section~\ref{sec:Conclusion} is devoted to a comprehensive discussion of the obtained results
and their implications.

We compute for the first time point symmetry groups of reduced equations,
including their discrete point symmetries,
and check whether these symmetries are hidden or induced.
Since most of the reduced equations to be considered are quite cumbersome,
various versions of the algebraic method by Hydon \cite{hydo1998a,hydo1998b,hydo2000b}
are much more efficient in the course of the above computation than the direct method.
In addition, some of the reduced equations of the equation~\eqref{eq:dN} are not of maximal rank,
and the study of Lie and general point symmetries of differential equations
that are not of maximal rank is also carried out for the first time in the present paper.
We also make deeper analysis of reduced equations than
in most papers in the field of classical group analysis,
construct more solutions for more reduced equations
and more systematically study hidden symmetries of the original equation.
For integrating some of reduced ordinary differential equations for the equation~\eqref{eq:dN},
we involve the associated Lie reductions of the nonlinear Lax representation~\eqref{eq:dNLaxPair}.

In the course of performing the Lie reduction procedure for the equation~\eqref{eq:dN},
we observe several interesting phenomena.
Thus, not all parameters of a family of inequivalent subalgebras are necessarily inherited
by the corresponding reduced equations.
The limit case for this phenomenon is
when all inequivalent subalgebras from a family even parameterized by arbitrary functions
correspond, under an appropriate choice of ansatzes, to the same reduced equation.
Another display of this phenomenon is the possibility
of mapping a class of reduced equations to its proper subclass, which has a less number of parameters.
Some equivalent (two-dimensional) subalgebras of the algebra~$\mathfrak g$
with a nonzero (one-dimensional) intersection
induce inequivalent (one-dimensional) subalgebras
of the maximal Lie invariance algebra of a reduced partial differential equation
obtained by the Lie reduction with respect to the intersection.
The algebra~$\mathfrak g$ is embedded in the algebra~$\mathfrak g_{\rm L}$
via extending the vector fields from~$\mathfrak g$ to the dependent variable~$v$
of the nonlinear Lax representation~\eqref{eq:dNLaxPair},
and thus any Lie reduction of the equation~\eqref{eq:dN}
has a counterpart among Lie reductions of the system~\eqref{eq:dNLaxPair}
but such a counterpart is in general not unique even up to the $G_{\rm L}$-equivalence.
In contrast to Lie symmetries, simple and obvious discrete point symmetries of the equation~\eqref{eq:dN}
induce, even under the optimal choice of ansatzes,
complicated and nontrivial discrete point symmetries of the corresponding reduced equations.

For readers' convenience,
we marked the constructed solutions of the dispersionless Nizhnik equation~\eqref{eq:dN} by the bullet symbol~$\bullet$\,.

\section{Optimized procedure of Lie reduction}\label{sec:LieReductionProcedure}

Despite many papers devoted to the construction of exact solutions of systems of partial differential equations
using the Lie reduction procedure,
the number of papers with correct, proper and systematic studies of Lie reductions
for particular important systems modeling real-world phenomena is not as large as it could be expected.
Such studies involve cumbersome computations and requires an accurate consideration of many inequivalent cases.
Hence a precondition of successfully performing the above procedure is its optimization.

To be specific, we describe the \emph{optimized procedure of Lie reduction} for the case of three independent variables,
which is relevant to the present paper.
Given a system~$\mathcal L$ of partial differential equations for unknown functions~$u$ in three independent variables,
this procedure consists of the following steps;
see also further comments after the procedure's description.
\begin{enumerate}\itemsep=0ex

\item\label{item:MIA&PointSymGroup}
Compute the maximal Lie invariance (pseudo)algebra~$\mathfrak g$
and the point symmetry (pseu\-do)group~$G$ of~$\mathcal L$.

\item\label{item:ClassificationOf1And2DSubalgs}
Construct complete lists of $G$-inequivalent one- and two-dimensional subalgebras of~$\mathfrak g$
and select those among them that are appropriate for using within the framework of Lie reduction.

\item\label{item:Codim1LRs}
\emph{Lie reductions of codimension one}.
For each of the selected one-dimensional subalgebras of~$\mathfrak g$, say $\mathfrak s_1$,
find an ansatz for the $\mathfrak s_1$-invariant solutions of~$\mathcal L$
such that the corresponding reduced system~$\hat{\mathcal L}_1$ of partial differential equations in two independent variables
is of the simplest or most convenient form.
If the system~$\hat{\mathcal L}_1$ can be completely integrated
or its general solution is expressed in terms of the general solution
of a system that has been well studied within the framework of symmetry analysis,
then the further consideration of the system~$\hat{\mathcal L}_1$
and carrying out the Lie reductions of~$\mathcal L$ with respect to subalgebras of~$\mathfrak g$
containing, up to $G$-equivalence, the subalgebra~$\mathfrak s_1$ are needless.

\item\label{item:HiddenSyms}
Otherwise, compute
the normalizer~${\rm N}_{\mathfrak g}(\mathfrak s_1)$ of~$\mathfrak s_1$ in~$\mathfrak g$,
the stabilizer~${\rm St}_G(\mathfrak s_1)$ of~$\mathfrak s_1$ in~$G$,
the maximal Lie invariance algebra~$\hat{\mathfrak g}_1$ and the point symmetry group~$\hat G_1$ of~$\hat{\mathcal L}_1$
as well as the subalgebra~$\tilde{\mathfrak g}_1$ of~$\hat{\mathfrak g}_1$ and the subgroup~$\tilde G_1$ of~$\hat G_1$
that are induced by elements of~${\rm N}_{\mathfrak g}(\mathfrak s_1)$ and~${\rm St}_G(\mathfrak s_1)$, respectively.
Perform the Lie reduction procedure for the system~$\hat{\mathcal L}_1$
only if $\hat{\mathfrak g}_1\ne\tilde{\mathfrak g}_1$ or at least $\hat G_1\ne\tilde G_1$,
see comments below.

\item\label{item:Codim2LRs}
\emph{Lie reductions of codimension two}.
For each of the two-dimensional subalgebras of~$\mathfrak g$
that have passed the selection in steps~\ref{item:ClassificationOf1And2DSubalgs} and~\ref{item:Codim1LRs},
say $\mathfrak s_2$, find an ansatz for the $\mathfrak s_2$-invariant solutions of~$\mathcal L$
such that the corresponding reduced system~$\hat{\mathcal L}_2$ of ordinary differential equations
is of the simplest or most convenient form.

\item\label{item:Codim2LRsSolutionEquiv}
Compute
the normalizer~${\rm N}_{\mathfrak g}(\mathfrak s_2)$ of~$\mathfrak s_2$ in~$\mathfrak g$,
the stabilizer~${\rm St}_G(\mathfrak s_2)$ of~$\mathfrak s_2$ in~$G$,
the maximal Lie invariance algebra~$\hat{\mathfrak g}_2$ and the point symmetry group~$\hat G_2$ of~$\hat{\mathcal L}_2$
as well as the subalgebra~$\tilde{\mathfrak g}_2$ of~$\hat{\mathfrak g}_2$ and the subgroup~$\tilde G_2$ of~$\hat G_2$
that are induced by elements of~${\rm N}_{\mathfrak g}(\mathfrak s_2)$ and~${\rm St}_G(\mathfrak s_2)$, respectively.

\item\label{item:Codim2LRsIntegration}
Construct, if possible, the general solution of~$\hat{\mathcal L}_2$
or at least some particular solutions of~$\hat{\mathcal L}_2$.
Use transformations from the group~$\tilde G_2$
for gauging integration constants in the constructed solutions.
Substitute the arranged solutions into the ansatz for the $\mathfrak s_2$-invariant solutions,
which gives $G$-inequivalent solutions of the original system~$\mathcal L$.

\item\label{item:LRofCodim3}
\emph{Lie reductions of codimension three}.
Analyze whether there are Lie reductions of~$\mathcal L$ with respect to three-dimensional subalgebras of~$\mathfrak g$
to algebraic equations that lead to new exact solutions of~$\mathcal L$
in comparison with those constructed in the previous steps
using Lie reductions of codimensions one and~two.
If this is the case, then carry out all such $G$-inequivalent Lie reductions.
\end{enumerate}

In steps~\ref{item:MIA&PointSymGroup}, \ref{item:HiddenSyms} and~\ref{item:Codim2LRsSolutionEquiv},
it is convenient to carry out the computation of the corresponding point symmetry groups by
a version of the algebraic method,
the automorphism-based version \cite{hydo1998a,hydo1998b,hydo2000b,hydo2000A}
(see also further examples, e.g., in \cite{kont2019a,vane2021a})
or one of the various modifications of the megaideal-based version \cite{bihl2011b,card2013a,card2021a,malt2024a,opan2020a}
in the case of finite or infinite dimension of the associated maximal Lie invariance algebra, respectively.
The direct method~\cite{bihl2011b,king1998a} may be advantageous for those systems
that belong to classes of systems of differential equations
with known restrictions for point symmetries of their elements, see \cite{bihl2011b,kova2023b,kova2023a}.
Systems that are not of maximal rank, which are not too uncommon among reduced systems of differential equations,%
\footnote{%
As a simple illustrative example, consider
the quadratic porous medium (Boussinesq) equation $u_t=(uu_x)_x$ for the groundwater pressure~$u$,
which describes unsteady flows of groundwater with the presence of a free surface,
and its first- and second-level potential equations $v_t=v_xv_{xx}$ and $w_t=\frac12(w_{xx})^2$.
Each of these equations admits the one-parameter group of shifts with respect to~$t$ with the generator~$\p_t$
as its Lie symmetry group.
The corresponding invariant solutions are just stationary solutions,
and the associated ansatzes $u=\varphi(\omega)$, $v=\psi(\omega)$ and  $w=\zeta(\omega)$ with $\omega=x$
respectively reduce these equations to the ordinary differential equations
$\varphi\varphi_{\omega\omega}=0$, $\psi_\omega\psi_{\omega\omega}=0$ and $(\zeta_{\omega\omega})^2=0$,
which are not of maximal rank.
In particular, the last reduced equation $(\zeta_{\omega\omega})^2=0$
is not of maximal rank on the entire set of its solutions.
}
require a specific study, which complicates the consideration,
see Section~\ref{sec:dNLieReductionsOfCodim2Collection2}.

A subalgebra~$\mathfrak s$ of~$\mathfrak g$ is appropriate for using within the framework of Lie reduction
if and only if satisfies the local transversality condition.
In fixed local coordinates, this condition is equivalent to the equality of the ranks of the matrices
that are respectively constituted by all the components of basis vector fields of~$\mathfrak s$
and by solely those corresponding to the independent variables.
In step~\ref{item:ClassificationOf1And2DSubalgs},
one can classify merely one- and two-dimensional appropriate subalgebras of~$\mathfrak g$
but, in general, this does not lead to a significant simplification
in comparison with the complete classification and the further selection of appropriate subalgebras.
Usually, subalgebras of~$\mathfrak g$ are classified
up to their equivalence generated by the group ${\rm Inn}(\mathfrak g)$ of inner automorphisms of~$\mathfrak g$,
which coincides with the $G_{\rm id}$-equivalence,
where $G_{\rm id}$ is the identity component of~$G$.
At the same time, it is advantageous to use the stronger $G$-equivalence instead of the $G_{\rm id}$-equivalence
since it allows one to reduce the list of subalgebras to be considered.
Moreover, this makes the Lie reduction procedure consistent
with the natural $G$-equivalence on the solution set of the system~$\mathcal L$.

Particular attention in steps~\ref{item:Codim1LRs} and~\ref{item:Codim2LRs}
should be paid to the optimal choice of ansatzes \cite{fush1994a,fush1994b,poch2017a,popo1995b}.
Given a subalgebra~$\mathfrak s$ of~$\mathfrak g$,
an $\mathfrak s$-invariant ansatz is defined up to an arbitrary point transformation of invariant variables.
In other words, there is an infinite family of $\mathfrak s$-invariant ansatzes,
and selecting a proper representative in this family usually leads to an essential simplification
of the corresponding reduced system and its further study.
The simplicity of the form of reduced systems
and its certain similarity to the form of the original system~$\mathcal L$
do not exhaust possible criteria in the course of selecting ansatzes.
Another criterion is to unify the form of reduced systems for a subset of listed families
of $G$-inequivalent subalgebras of~$\mathfrak g$
for embedding them into a nice superclass of differential equations and unifying their study.
After reducing the system~$\mathcal L$ using a preliminary ansatz,
one can improve the form of the obtained reduced system by a point transformation of invariant variables
and then optimize the ansatz by means of combining it with this transformation.
At the same time, such transformations may significantly complicate the form of ansatzes.
To preserve the balance between the simplicity of ansatzes and the simplicity of the corresponding reduced systems,
sometimes it is necessary to transform ansatzes only partially.

Elements of optimal lists of subalgebras of~$\mathfrak g$ can in general be not only single subalgebras
but also families of subalgebras parameterized by arbitrary constants
or, if the algebra~$\mathfrak g$ is infinite-dimensional, even by arbitrary functions.
Lie reductions of the system~$\mathcal L$ with respect to subalgebras from such a family
result in a class~$\mathcal C$ of reduced systems with subalgebra parameters as its arbitrary elements
rather than in a collection of single reduced systems.
Thus, the study of Lie symmetries for systems from the class~$\mathcal C$ should be realized
as the solution to the group classification problem for this class.

We would like to emphasize that
further Lie reductions of a reduced system of partial differential equations with two independent variables
in step~\ref{item:Codim1LRs} should be carried out only if this system admits point symmetries
that are not induced by point symmetries%
\footnote{%
The induction of Lie symmetries of a reduced system by Lie symmetries of the original system
was first discussed in \cite[Section~20.4]{ovsi1982A}.
}
of the original system~$\mathcal L$ and thus called \emph{hidden point symmetries}%
\footnote{%
In this context, the term \emph{hidden symmetries} was first used in~\cite{yeho2004a}.
Other terms for this notion in the literature are
\emph{additional} \cite[Example~3.5]{olve1993A},
\emph{non-induced}~\cite{fush1994a,fush1994b} or
\emph{Type-II hidden}~\cite{abra2006a,abra2006b} symmetries.
The first example of such symmetries was given in~\cite{kapi1978a}
but become known after its discussion in \cite[Example~3.5]{olve1993A}.
A~systematic study of them is rather seldom
and has been carried out only for a few famous systems of differential equations,
in particular, for
the Navier--Stokes equations describing flows of an incompressible viscous fluid~\cite{fush1994a,fush1994b},
the (1+1)-dimensional generalized Burgers equations $u_t+uu_x+f(t,x)u_{xx}=0$~\cite{poch2017a},
the two-dimensional degenerate Burgers equation $u_t+uu_x-u_{yy}=0$~\cite{vane2021a},
the Boiti--Leon--Pempinelli system~\cite{malt2024a},
the (1+2)-dimensional ultraparabolic Fokker--Planck equation $u_t+xu_y=u_{xx}$~\cite{vane2021a}
as well as the dispersionless Nizhnik equation in the present paper.
Interesting particular examples of hidden symmetries of several hydrodynamic models
were presented in~\cite[Chapter~1]{andr1998A}.
}
of~$\mathcal L$ associated with the codimension-one reduction under consideration.
See the description of the optimized procedure of step-by-step reductions with involving hidden symmetries
in \cite[Section~B]{kova2023b}.
The study of Lie and general point symmetries of the derived reduced systems,
identifying hidden symmetries of~$\mathcal L$ among them
and using such hidden symmetries for further Lie reduction of the corresponding reduced systems
is a necessary part of the comprehensive Lie reduction procedure.

\emph{It is useless and counterproductive to consider the other step-by-step Lie reductions,
whose second steps are based on induced symmetries of reduced systems.}
There are at least two sources of inconveniences in the course of such reductions,
which implicitly lead to the consideration of multiple essentially equivalent reductions.

To make the first source evident, consider a particular case, where the system~$\mathcal L$
admits two commuting Lie-symmetry vector fields~$Q^1$ and~$Q^2$ such that
the subalgebras $\mathfrak s_1^\mu:=\langle Q^1+\mu Q^2\rangle$ of~$\mathfrak g$
parameterized by an arbitrary constant $\mu$ are pairwise $G$-inequivalent
and each of them satisfies the local transversality condition and is thus appropriate for Lie reduction of~$\mathcal L$.
It is obvious that for any~$\mu$, the vector field~$Q^2$ belongs to the normalizer of~$\mathfrak s_1^\mu$ in~$\mathfrak g$.
Therefore, it induces a Lie-symmetry vector field~$\hat Q^{2,\mu}$
of the reduced system~$\mathcal L_1^\mu$ for $\mathfrak s_1^\mu$-invariant solutions of~$\mathcal L$.
Suppose that the algebra~$\langle\hat Q^{2,\mu}\rangle$ also satisfies the local transversality condition.
Thus, we have the infinite family of two-step reductions,
where for each~$\mu$,
the system~$\mathcal L$ is first reduced to the system~$\mathcal L_1^\mu$ using the algebra $\mathfrak s_1^\mu$
and then the system~$\mathcal L_1^\mu$ is further reduced using the algebra $\langle\hat Q^{2,\mu}\rangle$.
Moreover, the first steps of these reductions are definitely not equivalent to each other.
Nevertheless, each of these two-step reductions is equivalent to
the same one-step Lie reduction of the system~$\mathcal L$
with respect to the two-dimensional subalgebra $\langle Q^1,Q^2\rangle$ of~$\mathfrak g$.
For invariance algebras of more complicated structure,
equivalences between multi-step reductions are in general not so obvious,
and establishing them requires an additional analysis.

\looseness=1
The second source is that $\hat G_1$-inequivalent (one-dimensional) subalgebras of the maximal Lie invariance algebra~$\mathfrak a_1$
of a reduced system~$\hat{\mathcal L}_1$ of partial differential equations
may correspond to equivalent (two-dimensional) subalgebras of~$\mathfrak g$;
recall that by $\hat G_1$ we denote the point symmetry group of~$\hat{\mathcal L}_1$,
see this and other related notations in the above description of the optimized procedure of Lie reduction.
More specifically, suppose that the system~$\hat{\mathcal L}_1$ is obtained by the Lie reduction
of the original system~$\mathcal L$ with respect to a one-dimensional subalgebra~$\mathfrak s_1=\langle Q^0\rangle$ of~$\mathfrak g$,
and vector fields~$Q^1$ and~$Q^2$ belong to the normalizer~${\rm N}_{\mathfrak g}(\mathfrak s_1)$ of~$\mathfrak s_1$ in~$\mathfrak g$,
thus inducing elements~$\hat Q^1$ and~$\hat Q^2$ of~$\mathfrak a_1$.
In addition, suppose that the subalgebras~$\langle Q^0,Q^1\rangle$ and~$\langle Q^0,Q^2\rangle$
are $G$-equivalent and the equivalence is established only by a transformation~$\Phi\in G$
that does not belong to the stabilizer~${\rm St}_G(\mathfrak s_1)$ of~$\mathfrak s_1$ in~$G$.
Then the transformation~$\Phi$ does not induce a point symmetry of~$\hat{\mathcal L}_1$,
and thus the subalgebras~$\langle\hat Q^1\rangle$ and~$\langle\hat Q^2\rangle$ of~$\mathfrak a_1$
are in general $\hat G_1$-inequivalent.
In terms of reductions and solutions, this means that
the inequivalent two-step Lie reductions with the first step using the subalgebra~$\mathfrak s_1$ of~$\mathfrak g$
and the second step using the subalgebras~$\langle\hat Q^1\rangle$ and~$\langle\hat Q^2\rangle$ of~$\mathfrak a_1$
result in the $G$-equivalent families
of the $\langle Q^0,Q^1\rangle$- and the $\langle Q^0,Q^2\rangle$-invariant solutions, respectively.
See Remark~\ref{rem:OnInessEquivOfMultistepReductions} for a nontrivial example of the described situation,
which arises in the course of studying Lie reductions of the dispersionless Nizhnik equation~\eqref{eq:dN}.

\emph{This is why the best strategy is to completely avoid multi-step reductions not involving hidden symmetries.} 

We do not include the classification of three-dimensional subalgebras of~$\mathfrak g$
in step~\ref{item:ClassificationOf1And2DSubalgs}
since in general, it is a much more complicated problem than those for dimensions one and two
and it is not required for the Lie reduction procedure in its entity.
Only a small number of three-dimensional subalgebras satisfy the selection criterion
from step~\ref{item:LRofCodim3}, if they exist at all.
This is why it is better to merely classify the selected subalgebras directly in step~\ref{item:LRofCodim3}.
For example, the maximal Lie invariance algebra of the dispersionless Nizhnik equation~\eqref{eq:dN}
contains no three-dimensional subalgebras that are appropriate to step~\ref{item:LRofCodim3}.
In a similar way, we may also consider two-dimensional subalgebras of~$\mathfrak g$
but the reached simplification is not essential in comparison with their complete classification.

The system~$\mathcal L$ can possess families of trivial or obvious solutions
that can be easily guessed without applying Lie reduction or other methods.
Moreover, these families can contain solutions that are invariant with respect to subalgebras of~$\mathfrak g$
whose dimensions are greater than or equal to the number of independent variables,
and thus such solutions can repeatedly arise
in the course of performing the Lie reduction procedure for the system~$\mathcal L$.
It is beneficial to find these families of solutions before step~\ref{item:Codim1LRs}
and exclude their elements under the further listing of solutions.
Similar solution families can be constructed in step~\ref{item:Codim1LRs}
and should be treated analogously.
Section~\ref{sec:dNTrivialSolutions}, the solution family~\eqref{eq:dNs1.3Rho1InvarSolutions}
and the treatment of trivial solutions in Section~\ref{sec:dNLieReductionsOfCodim2}
illustrate the above remark.

In addition to classical integration methods,
a number of other techniques can be applied to finding exact solutions
of a reduced system~$\mathcal R$ of ordinary differential equations.
An obvious approach is to use Lie symmetries of the system~$\mathcal R$ for at least lowering its order,
see Section~\ref{sec:dNLieReductionsOfCodim2Collection2}.
One can try to construct first integrals of~$\mathcal R$ by means of the direct method~\cite{anco2002a,anco2002b}
supposing a certain ansatz for the associated integrating multipliers,
see~\cite[Section~6]{malt2024a} for the application of this technique
to reduced systems of ordinary differential equations for the Boiti--Leon--Pempinelli system.
One can also look for objects that are related to the original system~$\mathcal L$
within the framework of symmetry analysis of differential equations
and induce analogous objects for the system~$\mathcal R$.
These objects include not only Lie and general point symmetries,
first integrals and integrating multipliers but also
Lagrangian and Hamiltonian structures and (linear and nonlinear) Lax representations.
Induced objects can then be involved in obtaining exact solutions of~$\mathcal R$.
See, e.g., Section~\ref{sec:dNLieReductionsOfCodim2Collection2}
for using induced nonlinear Lax representations.

\section{Structure of Lie invariance algebra}\label{sec:LieInvAlgebra}

The maximal Lie invariance (pseudo)algebra~$\mathfrak g$ of the dispersionless Nizhnik equation~\eqref{eq:dN}
is infinite-dimensional and is spanned by the vector fields
\begin{gather}\label{eq:dNMIA}
\begin{split}&
D^t(\tau)=\tau\p_t+\tfrac13\tau_tx\p_x+\tfrac13\tau_ty\p_y-\tfrac1{18}\tau_{tt}(x^3+y^3)\p_u,\quad
D^{\rm s}=x\p_x+y\p_y+3u\p_u,\\[.5ex] &
P^x(\chi)=\chi\p_x-\tfrac12\chi_tx^2\p_u,\quad
P^y(\rho)=\rho\p_y-\tfrac12\rho_ty^2\p_u,\\[.5ex] &
R^x(\alpha)=\alpha x\p_u,\quad
R^y(\beta)=\beta y\p_u,\quad
Z(\sigma)=\sigma\p_u,
\end{split}
\end{gather}
where $\tau$, $\chi$, $\rho$, $\alpha$, $\beta$ and $\sigma$ run through the set of smooth functions of~$t$,
cf.~\cite{moro2021a}.
Moreover, the contact invariance algebra~$\mathfrak g_{\rm c}$ of the equation~\eqref{eq:dN} coincides with
the first prolongation~$\mathfrak g_{(1)}$ of the algebra~$\mathfrak g$,
and generalized symmetries of this equation at least up to order five are exhausted,
modulo the equivalence of generalized symmetries, by its Lie symmetries.

Up to the antisymmetry of the Lie bracket,
the nonzero commutation relations between the vector fields~\eqref{eq:dNMIA} spanning~$\mathfrak g$
are exhausted by
\begin{gather}\label{eq:dNCommRelations}
\begin{split}
&[D^t(\tau^1),D^t(\tau^2)]=D^t(\tau^1\tau^2_t-\tau^1_t\tau^2),\\[.5ex]
&[D^t(\tau),P^x(\chi)]=P^x\big(\tau\chi_t-\tfrac13\tau_t\chi\big),\quad
[D^t(\tau),P^y(\rho)]=P^y\big(\tau\rho_t-\tfrac13\tau_t\rho\big),\\[.5ex]
&[D^t(\tau),R^x(\alpha)]=R^x\big(\tau\alpha_t+\tfrac13\tau_t\alpha\big),\quad
[D^t(\tau),R^y(\beta)]=R^y\big(\tau\beta_t+\tfrac13\tau_t\beta\big),\\[.5ex]
&[D^t(\tau),Z(\sigma)]=Z(\tau\sigma_t),\\[.5ex]
&[D^{\rm s},P^x(\chi)]=-P^x(\chi),\quad
[D^{\rm s},P^y(\rho)]=-P^y(\rho),\\[.5ex]
&[D^{\rm s},R^x(\alpha)]=-2R^x(\alpha),\quad
[D^{\rm s},R^y(\beta)]=-2R^y(\beta),\quad
 [D^{\rm s},Z(\sigma)]=-3Z(\sigma),\\[.5ex]
&[P^x(\chi^1),P^x(\chi^2)]=-R^x(\chi^1\chi^2_t-\chi^1_t\chi^2),\quad
 [P^y(\rho^1),P^y(\rho^2)]=-R^y(\rho^1\rho^2_t-\rho^1_t\rho^2),\\[.5ex]
&[P^x(\chi),R^x(\alpha)]=Z(\chi\alpha),\quad
[P^y(\rho),R^y(\beta)]=Z(\rho\beta).
\end{split}
\end{gather}

The maximal Lie invariance (pseudo)algebra~$\mathfrak g_{\rm L}$ of the system~\eqref{eq:dNLaxPair}
is obtained from the algebra~$\mathfrak g$ in a predictable way.
Each vector field from~$\mathfrak g$ is extended to the additional dependent variable~$v$,
and supplementing the extended algebra with the vector field~$\bar P^v:=\p_v$ leads to the algebra~$\mathfrak g_{\rm L}$.
Thus, the latter algebra is spanned by the vector fields
\begin{gather*}
\begin{split}&
\bar D^t(\tau)=\tau\p_t+\tfrac13\tau_tx\p_x+\tfrac13\tau_ty\p_y-\tfrac1{18}\tau_{tt}(x^3+y^3)\p_u,\quad
\bar D^{\rm s}=x\p_x+y\p_y+3u\p_u+\tfrac32v\p_v,\\ &
\bar P^x(\chi)=\chi\p_x-\tfrac12\chi_tx^2\p_u,\quad
\bar P^y(\rho)=\rho\p_y-\tfrac12\rho_ty^2\p_u,\\ &
\bar R^x(\alpha)=\alpha x\p_u,\quad
\bar R^y(\beta)=\beta y\p_u,\quad
\bar Z(\sigma)=\sigma\p_u,\quad
\bar P^v=\p_v,
\end{split}
\end{gather*}
where $\tau$, $\chi$, $\rho$, $\alpha$, $\beta$ and~$\sigma$ are again arbitrary smooth functions of~$t$.
Although the vector fields~$\bar D^t(\tau)$, $\bar P^x(\chi)$, $\bar P^y(\rho)$, $\bar R^x(\alpha)$, $\bar R^y(\beta)$
and $\bar Z(\sigma)$ from the algebra~$\mathfrak g_{\rm L}$
are formally of the same form
as the vector fields~$D^t(\tau)$, $P^x(\chi)$, $P^y(\rho)$, $R^x(\alpha)$, $R^y(\beta)$ and $Z(\sigma)$ from the algebra~$\mathfrak g$,
in fact they are defined in a different space.
Up to the antisymmetry of the Lie bracket,
the nonzero commutation relations between vector fields spanning~$\mathfrak g_{\rm L}$
are exhausted by the counterparts of the commutation relations~\eqref{eq:dNCommRelations}
and one more commutation relation involving the vector field~$\bar P^v$,
$[\bar P^v,\bar D^{\rm s}]=\tfrac32\bar P^v$.

In~\cite{boyk2024a}, we recomputed the algebra~$\mathfrak g$
as well as first obtained the algebra~$\mathfrak g_{\rm L}$ using the packages
{\sf DESOLV} \cite{carm2000a} and {\sf Jets} \cite{BaranMarvan,marv2009a} for {\sf Maple};
the latter package was also applied for computing the algebra~$\mathfrak g_{\rm c}$
and generalized symmetries of~\eqref{eq:dN} up to order five.

\section{Point-symmetry pseudogroups}\label{sec:PointSymGroup}

The point-symmetry pseudogroups~$G$ and~$G_{\rm L}$ of the equation~\eqref{eq:dN}
and its nonlinear Lax representation~\eqref{eq:dNLaxPair}
were computed in~\cite{boyk2024a} using
the original megaideal-based version of the algebraic method that was suggested in~\cite{malt2024a}.

\begin{theorem}\label{thm:dNCompletePointSymGroup}
The point-symmetry pseudogroup~$G$ of the dispersionless Nizhnik equation~\eqref{eq:dN}
is generated by the transformations of the form
\begin{gather}\label{eq:dNPointSymForm}
\begin{split}
&\tilde t=T(t),\quad
\tilde x=CT_t^{1/3}x+X^0(t),\quad
\tilde y=CT_t^{1/3}y+Y^0(t),\\
&\tilde u=C^3u-\frac{C^3T_{tt}}{18T_t}(x^3+y^3)
-\frac{C^2}{2T_t^{1/3}}(X^0_tx^2+Y^0_ty^2)+W^1(t)x+W^2(t)y+W^0(t)
\end{split}
\end{gather}
and the transformation~$\mathscr J$: $\tilde t=t$, $\tilde x=y$, $\tilde y=x$, $\tilde u=u$.
Here $T$, $X^0$, $Y^0$, $W^0$, $W^1$ and $W^2$ are arbitrary smooth functions of~$t$
with $T_t\neq0$, and $C$ is an arbitrary nonzero constant.
\end{theorem}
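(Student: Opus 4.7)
The plan is to apply the megaideal-based version of the algebraic method to recover $G$ from the abstract structure of $\mathfrak g$ computed in Section~\ref{sec:LieInvAlgebra}. The key observation is that any point symmetry $\Phi\in G$ pushes vector fields forward to vector fields and hence induces a Lie-algebra automorphism $\Phi_*\colon\mathfrak g\to\mathfrak g$. Such an automorphism must preserve every \emph{megaideal} of $\mathfrak g$, i.e.\ every subspace stable under the full automorphism group $\mathrm{Aut}(\mathfrak g)$ — a considerably stronger condition than being merely an ideal. A carefully chosen chain of megaideals, each of whose elements admit a clean geometric characterisation in terms of the components of a vector field on the base space, will translate the preservation requirement into a tractable system of equations on $\Phi$.

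The first step is to identify a workable list of megaideals. From the commutation relations~\eqref{eq:dNCommRelations} one reads off the ideals $\mathfrak i_0=\{Z(\sigma)\}$, $\mathfrak i_1=\{R^x(\alpha)+R^y(\beta)+Z(\sigma)\}$, and $\mathfrak i_2=\{P^x(\chi)+P^y(\rho)+R^x(\alpha)+R^y(\beta)+Z(\sigma)\}$, together with their commutators, centralisers and the derived subalgebras of the whole $\mathfrak g$; using the fact that the map swapping $P^x\leftrightarrow P^y$ and $R^x\leftrightarrow R^y$ is an automorphism of $\mathfrak g$, one checks that each of these ideals is in fact a megaideal. Each of them is also characterised geometrically: for instance, $\mathfrak i_0$ consists exactly of vector fields with zero components in $\p_t,\p_x,\p_y$ and a $\p_u$-coefficient depending only on $t$. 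This is the lever which turns the algebraic requirement $\Phi_*\mathfrak i_k\subset\mathfrak i_k$ into concrete equations on the Jacobian of $\Phi$.

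Starting from a general invertible map $\Phi\colon(t,x,y,u)\mapsto(\tilde t,\tilde x,\tilde y,\tilde u)$, the preservation of $\mathfrak i_0$ forces $\tilde t=T(t)$ and makes $\tilde u$ affine in $u$ with coefficients independent of $x,y,u$; preservation of $\mathfrak i_1$ then forces $\tilde x,\tilde y$ to be independent of $u$ and affine in $(x,y)$ with $t$-dependent coefficients; preservation of $\mathfrak i_2$, combined with the fact that $\{P^x(\chi)\}$ and $\{P^y(\rho)\}$ form a single orbit under the outer involution of $\mathfrak g$, restricts the linear $(x,y)$-action to either a scalar multiple of the identity or a scalar multiple of the swap. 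The swap alternative is exactly the source of the discrete generator $\mathscr J$, reducing the remaining analysis to the connected case. Imposing the preservation of the subalgebras containing $D^{\rm s}$ modulo lower megaideals and of $D^t(\tau)$ modulo the radical then pins down the homogeneity factor to $CT_t^{1/3}$ and fixes the scaling of $\tilde u$ as $C^3u$.

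Finally, substituting this reduced ansatz into the equation~\eqref{eq:dN} itself rather than only into $\mathfrak g$ determines the affine-in-$u$ corrections: the quadratic-in-$(x,y)$ terms with coefficients involving $T_{tt}/T_t$, $X^0_t$ and $Y^0_t$ are forced, while the remaining additive part $W^1(t)x+W^2(t)y+W^0(t)$ is unconstrained, yielding exactly~\eqref{eq:dNPointSymForm}. It remains to observe that the continuous family so obtained is the full pseudogroup generated by exponentiating $\mathfrak g$ and that $\mathscr J$ provides the only additional generator. The main obstacle is bookkeeping: selecting a \emph{sufficient} supply of megaideals to pin $\Phi$ down unambiguously and, in the discrete direction, confirming that no exotic outer automorphism of $\mathfrak g$ beyond $\mathscr J$ is realisable by a point transformation. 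The former is addressed by descending through the filtration $\mathfrak i_0\subset\mathfrak i_1\subset\mathfrak i_2\subset\cdots$; the latter follows once the preservation of the decomposition induced by $\mathfrak i_2$ and its commutant has been exhausted.
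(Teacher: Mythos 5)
Your proposal follows essentially the same route as the paper's source for this theorem: the paper does not reprove Theorem~\ref{thm:dNCompletePointSymGroup} but imports it from~\cite{boyk2022a}, where it was obtained precisely by the megaideal-based version of the algebraic method you describe (constraining $\Phi_*$ through a filtration of megaideals of~$\mathfrak g$ and finishing with the direct method to fix the quadratic-in-$(x,y)$ corrections and verify sufficiency). The one imprecision is your justification that $\mathfrak i_0$, $\mathfrak i_1$, $\mathfrak i_2$ are megaideals: invariance under the single swap automorphism proves nothing, and they should instead be certified intrinsically (e.g.\ $\mathfrak i_2$ as the nilradical, $\mathfrak i_1=\mathfrak i_2'$, and $\mathfrak i_0={\rm C}_{\mathfrak i_1}(\mathfrak i_2)$), but the subspaces you use are indeed megaideals and the argument goes through.
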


Moreover, it was also proved in~\cite{boyk2024a}
that the contact-symmetry pseudogroup~$G_{\rm c}$ of the equation~\eqref{eq:dN}
coincides with the first prolongation~$G_{(1)}$ of the pseudogroup~$G$.

Each transformation $\Phi$ from the point-symmetry pseudogroup~$G$ of the dispersionless Nizhnik equation~\eqref{eq:dN}
can be represented as a composition
of transformations from subgroups each of which is parameterized by a single functional or discrete parameter,
\begin{equation}\label{eq:dNElementaryTransformations}\arraycolsep=0ex
\begin{array}{llllll}
\mathscr D^t(T)      \colon\quad & \tilde t=T,\quad & \tilde x=T_t^{1/3}x,\quad & \tilde y=T_t^{1/3}y,\quad & \tilde u=u-\frac{1}{18}T_{tt}{T_t}^{-1}(x^3+y^3),   \\[1ex]
\mathscr D^{\rm s}(C)\colon\quad & \tilde t=t,\quad & \tilde x=Cx,        \quad & \tilde y=Cy,        \quad & \tilde u=C^3u,                                      \\[1ex]
\mathscr P^x(X^0)    \colon\quad & \tilde t=t,\quad & \tilde x=x+X^0,     \quad & \tilde y=y,         \quad & \tilde u=u-\frac16X^0_t\big(3x^2+3X^0x+(X^0)^2\big),\\[1ex]
\mathscr P^y(Y^0)    \colon\quad & \tilde t=t,\quad & \tilde x=x,         \quad & \tilde y=y+Y^0,     \quad & \tilde u=u-\frac16Y^0_t\big(3y^2+3Y^0y+(Y^0)^2\big),\\[1ex]
\mathscr R^x(W^1)    \colon\quad & \tilde t=t,\quad & \tilde x=x,         \quad & \tilde y=y,         \quad & \tilde u=u+W^1x,                                    \\[1ex]
\mathscr R^y(W^2)    \colon\quad & \tilde t=t,\quad & \tilde x=x,         \quad & \tilde y=y,         \quad & \tilde u=u+W^2y,                                    \\[1ex]
\mathscr Z(W^0)      \colon\quad & \tilde t=t,\quad & \tilde x=x,         \quad & \tilde y=y,         \quad & \tilde u=u+W^0,                                     \\[1ex]
\mathscr J           \colon\quad & \tilde t=t,\quad & \tilde x=y,         \quad & \tilde y=x,         \quad & \tilde u=u,                                         \\[1ex]
\end{array}
\end{equation}
where $T$, $X^0$, $Y^0$, $W^0$, $W^1$ and $W^2$ are arbitrary smooth functions of~$t$
with $T_t\neq0$, and $C$ is an arbitrary nonzero constant.
We will call transformations from the families~\eqref{eq:dNElementaryTransformations}
elementary point symmetry transformations of the equation~\eqref{eq:dN}.
Note that the subgroups~$\{\mathscr D^t(T)\}$, $\{\mathscr D^{\rm s}(C)\}$,
$\{\mathscr P^x(X^0)\}$, $\{\mathscr P^y(Y^0)\}$,
$\{\mathscr R^x(W^1)\}$, $\{\mathscr R^y(W^2)\}$ and $\{\mathscr Z(W^0)\}$ of~$G$
are associated with the subalgebras~$\{D^t(\tau)\}$, $\langle D^{\rm s}\rangle$,
$\{P^x(\chi)\}$, $\{P^y(\rho)\}$, $\{R^x(\alpha)\}$,
$\{R^y(\beta)\}$ and $\{Z(\sigma)\}$ of~$\mathfrak g$, respectively.
Here all the parameter functions run through the specified sets of their values.
A representation of a~transformation~$\Phi$ of the form~\eqref{eq:dNPointSymForm}
as a composition of elementary point symmetry transformations of the equation~\eqref{eq:dN} is
\[
\Phi=\mathscr D^t(T)
\circ\mathscr D^{\rm s}(C)
\circ\mathscr P^x(\tilde X^0)
\circ\mathscr P^y(\tilde Y^0)
\circ\mathscr R^x(\tilde W^1)
\circ\mathscr R^y(\tilde W^2)
\circ\mathscr Z(\tilde W^0)
\]
with
\begin{gather*}
\tilde X^0=\frac{X^0}{CT_t^{1/3}},\quad
\tilde Y^0=\frac{Y^0}{CT_t^{1/3}},\\
\tilde W^0=\frac{W^0}{C^3}+\frac{X^0_t(X^0)^2+Y^0_t(Y^0)^2}{6C^3T_t},\quad
\tilde W^1=\frac{W^1}{C^3}+\frac{X^0_tX^0}{2C^2T_t^{2/3}},\quad
\tilde W^2=\frac{W^2}{C^3}+\frac{Y^0_tY^0}{2C^2T_t^{2/3}}.
\end{gather*}

\begin{corollary}\label{cor:dNDiscrSyms}
The identity component~$G_{\rm id}$ of the point-symmetry pseudogroup~$G$ of the dispersionless Nizhnik equation~\eqref{eq:dN}
consists of the transformations of the form~\eqref{eq:dNPointSymForm} with $T_t>0$ and $C>0$.
A complete list of discrete point symmetry transformations of the equation~\eqref{eq:dN}
that are independent up to composing with each other and with transformations from~$G_{\rm id}$
is exhausted by three commuting involutions, which can be chosen to be
the permutation~$\mathscr J$ of the variables~$x$ and~$y$
and two transformations $\mathscr I^{\rm i}:=\mathscr D^t(-t)$ and $\mathscr I^{\rm s}:=\mathscr D^{\rm s}(-1)$
alternating the signs of $(t,x,y)$ and of $(x,y,u)$, respectively.
\end{corollary}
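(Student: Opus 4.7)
The plan is to use the parametric description of $G$ supplied by Theorem~\ref{thm:dNCompletePointSymGroup}: every element of~$G$ is either a transformation of the form~\eqref{eq:dNPointSymForm} or a composition of such a transformation with~$\mathscr J$. Denote by~$G_0$ the set of transformations of the form~\eqref{eq:dNPointSymForm}. A direct check shows that $\mathscr J$ is an involution and that conjugation by~$\mathscr J$ normalizes~$G_0$ (it merely swaps the parameters $X^0\leftrightarrow Y^0$ and $W^1\leftrightarrow W^2$, preserves $T$, $C$, $W^0$, and reverses signs nowhere), so $G=G_0\cup\mathscr J\,G_0$ as a disjoint union.

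Next, I would count the connected components of~$G$ in the topology induced by its parameter space. The functional parameters $X^0$, $Y^0$, $W^0$, $W^1$, $W^2$ range over a real vector space and contribute contractible factors. The parameter~$T$ ranges over the open set $\{T\in C^\infty(\mathbb R):T_t\neq0\}$, which is the disjoint union of the two convex, hence path-connected, halves $\{T_t>0\}$ and $\{T_t<0\}$. The parameter~$C$ ranges over $\mathbb R\setminus\{0\}$ and splits likewise. Therefore $G_0$ has exactly four connected components, indexed by the signs of~$T_t$ and~$C$, and $G$ has eight. The identity corresponds to $T(t)=t$, $C=1$ with all other parameters vanishing, so the identity component is exactly the subset of~$G_0$ characterized by $T_t>0$ and $C>0$, which yields the first assertion. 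The fact that composition in~$G_0$ multiplies signs of~$T_t$ and of~$C$ (since $T^{(1)}\circ T^{(2)}$ has derivative $T^{(1)}_t\cdot T^{(2)}_t$ and the dilation parameters multiply) shows that the sign pair is a well-defined quotient homomorphism $G_0\to\mathbb Z_2\times\mathbb Z_2$.

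For the second assertion, I would verify by direct substitution that $\mathscr J$, $\mathscr I^{\rm i}$ and $\mathscr I^{\rm s}$ are involutions and pairwise commute: their actions are supported on disjoint coordinate subsets, namely the permutation of~$x$ and~$y$, the sign flip of~$t$, and the sign flip of $(x,y,u)$. In the eight-fold component labeling, $\mathscr J$ sits in the coset $(\mathscr J,+,+)$, $\mathscr I^{\rm i}$ in $(\mathrm{id},-,+)$ and $\mathscr I^{\rm s}$ in $(\mathrm{id},+,-)$, where the last two entries record the signs of~$T_t$ and~$C$. Since these three labels are $\mathbb Z_2$-linearly independent in $G/G_{\rm id}\cong(\mathbb Z_2)^3$, the three involutions are independent modulo~$G_{\rm id}$ and their seven nontrivial products fill out the seven nontrivial cosets; hence every discrete symmetry of~\eqref{eq:dN} differs from one of $\mathrm{id},\mathscr J,\mathscr I^{\rm i},\mathscr I^{\rm s},\mathscr J\mathscr I^{\rm i},\mathscr J\mathscr I^{\rm s},\mathscr I^{\rm i}\mathscr I^{\rm s},\mathscr J\mathscr I^{\rm i}\mathscr I^{\rm s}$ by an element of~$G_{\rm id}$.

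The main obstacle will be the bookkeeping required to confirm that the composition law of~$G_0$ really induces the claimed $\mathbb Z_2\times\mathbb Z_2$ quotient and that conjugation by~$\mathscr J$ preserves the signs of~$T_t$ and~$C$, so that the eight-fold component structure is honest and not degenerated by hidden identifications. Once this is secured, the remaining verifications---commutation, involutiveness, and coset independence---are straightforward substitutions in the formulas~\eqref{eq:dNElementaryTransformations}.
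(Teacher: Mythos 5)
Your proposal is correct and follows essentially the same route the paper intends: the corollary is read off from the parametrization in Theorem~\ref{thm:dNCompletePointSymGroup} by counting the connected components of the parameter space (signs of~$T_t$ and~$C$, presence or absence of~$\mathscr J$) and checking that the three chosen involutions represent independent generators of $G/G_{\rm id}\cong\mathbb Z_2^3$. One minor slip: the supports of $\mathscr I^{\rm i}$ and $\mathscr I^{\rm s}$ are not disjoint (both act on~$x$ and~$y$), but they act there by the same sign flip, so the asserted commutativity still holds by direct substitution.
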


Therefore, the quotient group~$G/G_{\rm id}$ of the pseudogroup~$G$
with respect to its identity component~$G_{\rm id}$ is isomorphic to the group $\mathbb Z_2\times\mathbb Z_2\times\mathbb Z_2$.

\begin{theorem}\label{thm:dNCompletePointSymGroupOfLaxRepresentation}
The point-symmetry pseudogroup~$G_{\rm L}$ of the nonlinear Lax representation~\eqref{eq:dNLaxPair}
is generated by the transformations of the form
\[
\begin{split}
&\tilde t=T(t),\quad
 \tilde x=A^{2/3}T_t^{1/3}x+X^0(t),\quad
 \tilde y=A^{2/3}T_t^{1/3}y+Y^0(t),\\
&\tilde u=A^2u-\frac{A^2T_{tt}}{18T_t}(x^3+y^3)
 -\frac{A^{4/3}}{2T_t^{1/3}}(X^0_tx^2+Y^0_ty^2)+W^1(t)x+W^2(t)y+W^0(t),\\
&\tilde v=Av+B,
\end{split}
\]
and the transformation~$\mathscr J$: $\tilde t=t$, $\tilde x=y$, $\tilde y=x$, $\tilde u=u$, $\tilde v=v$.
Here $T$, $X^0$, $Y^0$, $W^0$, $W^1$ and $W^2$ are arbitrary smooth functions of~$t$
with $T_t\neq0$, and $A$ and~$B$ are arbitrary constants with $A\ne0$.
\looseness=-1
\end{theorem}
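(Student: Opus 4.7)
My plan is to apply the same megaideal-based version of the algebraic method that produced Theorem~\ref{thm:dNCompletePointSymGroup}, now to the algebra $\mathfrak g_{\rm L}$ from Section~\ref{sec:LieInvAlgebra}. A useful shortcut is the observation that~\eqref{eq:dN} is the compatibility condition for the pair~\eqref{eq:dNLaxPair}, so the $(t,x,y,u)$-components of any $\Phi\in G_{\rm L}$ project to a point symmetry of~\eqref{eq:dN} and hence, by Theorem~\ref{thm:dNCompletePointSymGroup}, have the form~\eqref{eq:dNPointSymForm} (possibly composed with $\mathscr J$). What remains is to describe $\tilde v$ and to relate the $v$-scaling to the spatial scaling.

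First, I would select a list of megaideals of $\mathfrak g_{\rm L}$: the embedded subalgebra $\mathfrak g$ (obtained by dropping $\bar P^v$), the infinite-dimensional ideals $\{\bar R^x(\alpha)\}$, $\{\bar R^y(\beta)\}$ and $\{\bar Z(\sigma)\}$, and the one-dimensional subalgebra $\langle\bar P^v\rangle$, which is distinguished as the unique nonzero $\tfrac{3}{2}$-eigenspace of $\mathrm{ad}\,\bar D^{\rm s}$ in $\mathfrak g_{\rm L}$. Preservation of $\langle\bar P^v\rangle$ under $\Phi_*$ forces $\tilde v$ to be independent of $x$, $y$ and $u$; preservation of $\{\bar R^x(\alpha)\}$ and $\{\bar R^y(\beta)\}$, whose vector fields shift $u$ but act trivially on $v$, blocks any hidden $u$-dependence of $\tilde v$; preservation of $\{\bar D^t(\tau)\}$ then forces $\tilde v$ to be affine in $v$ with constant coefficients, namely $\tilde v=Av+B$ with $A\neq0$.

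Second, I would substitute the resulting ansatz into~\eqref{eq:dNLaxPair} and match coefficients. The cubic dependence on $v_x$ in the first equation couples the scale of $v$ to the spatial scale: the requirement that the terms $v_x^3$ and $u_{xy}^3/v_x^3$ transform with the same overall factor as $u_{xx}v_x$ forces the scaling of $v_x$ to equal the cube root of the scaling of $u_{xy}$, which together with~\eqref{eq:dNPointSymForm} yields the relation $C=A^{2/3}$ and so reproduces the $A^{2/3}T_t^{1/3}$ and $A^2$ coefficients in the final statement. Adjoining $\mathscr J$ (which fixes $v$ and swaps $x$ with $y$ in~\eqref{eq:dNLaxPair}) and verifying directly that every transformation of the claimed form preserves~\eqref{eq:dNLaxPair} closes the proof. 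The step I expect to be most delicate is the exclusion of $u$-dependence from $\tilde v$: a priori nothing in the $(t,x,y,u)$-projection~\eqref{eq:dNPointSymForm} prevents $\tilde v$ from acquiring an additive term linear in $u$, and ruling this out is the precise role played by the megaideals $\{\bar R^x(\alpha)\}$ and $\{\bar R^y(\beta)\}$.
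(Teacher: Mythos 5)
The paper does not prove this theorem itself: it is quoted from~\cite{boyk2022a}, where, as stated at the start of Section~\ref{sec:PointSymGroup}, the computation was done by the megaideal-based version of the algebraic method. Your proposal follows that same strategy (megaideal constraints followed by completion with the direct method), and your final step correctly recovers the coupling $C=A^{2/3}$, equivalently $A^2=C^3$: e.g.\ the second equation of~\eqref{eq:dNLaxPair} forces the scaling factor $A/C$ of $v_x$ to equal $C^2/A$, the factor of $u_{xy}/v_x$. So the overall route is sound and matches the cited source in spirit.

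However, several of your megaideal identifications are wrong as stated, and one attribution is inverted. First, the embedded copy of $\mathfrak g$ is \emph{not} an ideal of $\mathfrak g_{\rm L}$, since $[\bar P^v,\bar D^{\rm s}]=\tfrac32\bar P^v\notin\bar{\mathfrak g}$ (the paper writes $\mathfrak g_{\rm L}=\bar{\mathfrak g}\lsemioplus\langle\bar P^v\rangle$ with $\langle\bar P^v\rangle$ as the ideal); the usable megaideal here is the derived algebra $\mathfrak g_{\rm L}'$, which omits $\bar D^{\rm s}$ but contains $\bar P^v$. Second, $\{\bar R^x(\alpha)\}$ and $\{\bar R^y(\beta)\}$ are separately not ideals because $[\bar P^x(\chi),\bar R^x(\alpha)]=\bar Z(\chi\alpha)$; you must work with $\{\bar R^x\}\oplus\{\bar R^y\}\oplus\{\bar Z\}$ (or $\{\bar Z(\sigma)\}$ alone, which already forces $V_u=0$ since all its elements have zero $\p_v$-component). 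Third, preservation of $\langle\bar P^v\rangle$ does not show that $\tilde v$ is independent of $x,y,u$; pushing forward $\p_v$ shows that the $(t,x,y,u)$-components of $\Phi$ are independent of $v$ --- which is precisely the projectability you need for your ``shortcut'' to Theorem~\ref{thm:dNCompletePointSymGroup}, so this condition should come \emph{first} --- and that $\tilde v=Av+V^0(t,x,y,u)$ with constant $A$. The independence of $V^0$ from $u$, $x$, $y$ then comes from the other megaideal conditions, and even these leave a residue of the form $V^0=c_1t+B$ that only the final direct substitution into~\eqref{eq:dNLaxPair} eliminates. Also, your characterization of $\langle\bar P^v\rangle$ as an eigenspace of $\operatorname{ad}\bar D^{\rm s}$ is not by itself a proof of megaideal status, since $\bar D^{\rm s}$ is not canonically fixed by automorphisms; one should pass through the genuine megaideal $\mathfrak z(\mathfrak g_{\rm L}')=\langle\bar Z(1),\bar P^v\rangle$ and then separate the two eigendirections. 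None of these defects breaks the argument, but as written the megaideal bookkeeping would not survive a careful check.
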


It follows from Theorem~\ref{thm:dNCompletePointSymGroupOfLaxRepresentation}
that each elementary point symmetry transformation of the equation~\eqref{eq:dN}, except $\mathscr D^{\rm s}(C)$,
is trivially extended to an elementary point symmetry transformation of the system~\eqref{eq:dNLaxPair}
with the identity $v$-component, $\tilde v=v$.
The extension of~$\mathscr D^{\rm s}(C)$ with $C>0$ is given by $\tilde v=C^{3/2}v$,
whereas the transformations~$\mathscr D^{\rm s}(C)$ with $C<0$ have no counterparts in~$G_{\rm L}$.
We denote the extension of an elementary point symmetry transformation of the equation~\eqref{eq:dN}
by the same symbol supplemented with bar.
The pseudogroup~$G_{\rm L}$ contains two kinds of elementary transformations that have no counterparts in~$G$,
shifts with respect to~$v$, $\bar{\mathscr P}^v(B)$: $(\tilde t,\tilde x,\tilde y,\tilde u,\tilde v)=(t,x,y,u,v+B)$,
and the transformation~$\bar{\mathscr I}^v$ alternating the signs of~$v$, $(\tilde t,\tilde x,\tilde y,\tilde u,\tilde v)=(t,x,y,u,-v)$.
The point transformation~$\bar{\mathscr I}^{\rm s}\colon(\tilde t,\tilde x,\tilde y,\tilde u,\tilde v)=(t,-x,-y,-u,v)$,
which is the trivial extension of the discrete point symmetry transformation~$\mathscr I^{\rm s}$
of the equation~\eqref{eq:dN} to~$v$,
maps the nonlinear Lax representation~\eqref{eq:dNLaxPair} of the equation~\eqref{eq:dN}
to an equivalent nonlinear Lax representation of the same equation,
\[
v_t=-\frac13\left(v_x^3+\frac{u_{xy}^3}{v_x^3}\right)+u_{xx}v_x+\frac{u_{xy}u_{yy}}{v_x},\quad
v_y=\frac{u_{xy}}{v_x}.
\]

\begin{corollary}\label{cor:dNLaxPairDiscrSyms}
A complete list of discrete point symmetry transformations of the system~\eqref{eq:dNLaxPair}
that are independent up to composing with each other and with continuous point symmetry transformations of this system
is exhausted by three commuting involutions, which can be chosen to be
the permutation~$\bar{\mathscr J}$ of the variables~$x$ and~$y$, $(\tilde t,\tilde x,\tilde y,\tilde u,\tilde v)=(t,y,x,u,v)$,
and two transformations~$\bar{\mathscr I}^{\rm i}$ and~$\bar{\mathscr I}^v$
alternating the signs of $(t,x,y)$ and of $v$, respectively,
$\bar{\mathscr I}^{\rm i}\colon(\tilde t,\tilde x,\tilde y,\tilde u,\tilde v)=(-t,-x,-y,u,v)$ and
$\bar{\mathscr I}^v\colon(\tilde t,\tilde x,\tilde y,\tilde u,\tilde v)=(t,x,y,u,-v)$.
\end{corollary}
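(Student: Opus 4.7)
The plan is to parallel the reasoning that produced Corollary~\ref{cor:dNDiscrSyms}, applying it now to the explicit description of~$G_{\rm L}$ given by Theorem~\ref{thm:dNCompletePointSymGroupOfLaxRepresentation}. First I would separate the parameters appearing in that description into continuous and discrete ones. The functional parameters $T$, $X^0$, $Y^0$, $W^0$, $W^1$, $W^2$ and the constant $B$ can be continuously deformed to their identity values ($T=t$, the others vanishing), so they contribute nothing to $G_{\rm L}/G_{\rm L,id}$. The remaining parameters are $T_t\ne0$, whose sign is locally constant, and $A\ne0$, whose sign is also discrete; everything else can be continuously moved while keeping these signs fixed. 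Together with the presence or absence of the swap~$\bar{\mathscr J}$, this yields exactly three independent $\mathbb Z_2$ choices.

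Next I would identify the identity component as the set of transformations of the form given in Theorem~\ref{thm:dNCompletePointSymGroupOfLaxRepresentation} with $T_t>0$ and $A>0$ and no swap, by exhibiting an explicit continuous path from any such transformation to the identity (linearly interpolating the functional parameters to their identity values and $A$ to~$1$ through positive reals, $T$ through diffeomorphisms with positive derivative). Conversely, a path within~$G_{\rm L}$ cannot change the sign of $T_t$ or of $A$, nor introduce~$\bar{\mathscr J}$, so these three binary choices genuinely index the components.

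Then I would pick the simplest representatives for the three nontrivial cosets generating the quotient. Setting $T(t)=-t$ with all other parameters trivial and $A=1$, $B=0$ gives $\bar{\mathscr I}^{\rm i}\colon(t,x,y,u,v)\mapsto(-t,-x,-y,u,v)$ (using $T_{tt}=0$ and $T_t^{1/3}=-1$). Setting $A=-1$, $B=0$, $T(t)=t$ and the other parameters trivial gives $\bar{\mathscr I}^v\colon(t,x,y,u,v)\mapsto(t,x,y,u,-v)$, since $A^2=1$ and the real convention $A^{2/3}=(A^2)^{1/3}=1$ leaves $x,y,u$ untouched. The swap~$\bar{\mathscr J}$ is the third generator. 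A direct check shows each is an involution, and pairwise commutativity is immediate because the three transformations act on disjoint groups of variables (the swap touches only $x,y$, the sign flip $\bar{\mathscr I}^{\rm i}$ touches only $t,x,y$, and $\bar{\mathscr I}^v$ touches only~$v$).

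The main obstacle, modest here, is verifying that these three cosets are genuinely independent, i.e.\ that no nontrivial product lies in~$G_{\rm L,id}$. This follows by inspection of the induced action on the triple (sign of $T_t$, sign of $A$, presence of swap), since each generator flips exactly one of these three binary invariants while fixing the other two, so the eight products realize all eight cosets. The conclusion is $G_{\rm L}/G_{\rm L,id}\cong\mathbb Z_2\times\mathbb Z_2\times\mathbb Z_2$ generated by the three stated involutions.
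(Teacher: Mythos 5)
Your proposal is correct and follows essentially the same route as the paper, which obtains the corollary directly from the explicit description of~$G_{\rm L}$ in Theorem~\ref{thm:dNCompletePointSymGroupOfLaxRepresentation}: the three independent $\mathbb Z_2$-invariants are the sign of~$T_t$, the sign of~$A$ and the presence of the permutation~$\bar{\mathscr J}$, and your coset representatives coincide with the paper's. The only cosmetic slip is justifying commutativity by ``disjoint groups of variables''---$\bar{\mathscr J}$ and $\bar{\mathscr I}^{\rm i}$ both act on $(x,y)$---but the simultaneous negation of~$x$ and~$y$ plainly commutes with their permutation, so the conclusion stands.
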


Therefore, analogously to the pseudogroup~$G$,
the quotient group of the point-symmetry pseudogroup~$G_{\rm L}$
of the nonlinear Lax representation~\eqref{eq:dNLaxPair} of the dispersionless Nizhnik equation~\eqref{eq:dN}
with respect to its identity component is isomorphic to the group $\mathbb Z_2\times\mathbb Z_2\times\mathbb Z_2$.

\section{Classification of one- and two-dimensional subalgebras}\label{sec:dNClassificationOfSubalgs}

To carry out Lie reductions of codimension one and two for the equation~\eqref{eq:dN} in the optimal way,
we should classify one- and two-dimensional subalgebras of the algebra~$\mathfrak g$ up to $G_*$-equivalence.
Instead of the classical approach for finding inner automorphisms~\cite[Section~3.3]{olve1993A},
we act on~$\mathfrak g$ by~$G$ via pushing forward of vector fields by elements of~$G$.
Recall that this way is more convenient for computing in the infinite-dimensional case~\cite{bihl2012b,card2011a}.
Moreover, it also allows us to properly use the entire complete point-symmetry group~$G$
and not be limited to its connected component of the identity transformation.
Thus the non-identity adjoint actions of elementary transformations from~$G$
on vector fields spanning~$\mathfrak g$ are merely
\begin{gather*}
\mathscr D^t_*(T)D^t(\tau)=D^t\big(\hat T^{-1}_t\tau(\hat T)\big),\\
\mathscr D^t_*(T)P^x(\chi)=P^x\big(\hat T^{-1/3}_t\chi(\hat T)\big),\quad
\mathscr D^t_*(T)P^y(\rho)=P^y\big(\hat T^{-1/3}_t\rho(\hat T)\big),\\
\mathscr D^t_*(T)R^x(\alpha)=R^x\big(\hat T^{1/3}_t\alpha(\hat T)\big),\quad
\mathscr D^t_*(T)R^y(\beta)=R^y\big(\hat T^{1/3}_t\beta(\hat T)\big),\quad
\mathscr D^t_*(T)Z(\sigma)=Z\big(\sigma(\hat T)\big),\!
\\[1.5ex]
\mathscr D^{\rm s}_*(C)P^x(\chi)=P^x(C\chi),\quad
\mathscr D^{\rm s}_*(C)P^y(\rho)=P^y(C\rho),\\[.5ex]
\mathscr D^{\rm s}_*(C)R^x(\alpha)=R^x(C^2\alpha),\quad
\mathscr D^{\rm s}_*(C)R^y(\beta)=R^y(C^2\beta),\quad
\mathscr D^{\rm s}_*(C)Z(\sigma)=Z(C^3\sigma),
\\[1.5ex]
\mathscr P^x_*(X^0)D^t(\tau)=D^t(\tau)+P^x\big(\tau X^0_t-\tfrac13\tau_tX^0\big)+\tfrac12R^x\big(X^0(\tau X^0_t)_t-\tfrac13\tau_{tt}(X^0)^2-\tau(X^0_t)^2\big)\\
\hphantom{\mathscr P^x_*(X^0)D^t(\tau)=}-\tfrac16Z\big((X^0)^2(\tau X^0_t)_t-\tfrac13\tau_{tt}(X^0)^3-\tau X^0(X^0_t)^2\big),
\\[1.5ex]
\mathscr P^y_*(Y^0)D^t(\tau)=D^t(\tau)+P^y(\tau Y^0_t-\tfrac13\tau_tY^0)+\tfrac12R^y\big(Y^0(\tau Y^0_t)_t-\tfrac13\tau_{tt}(Y^0)^2-\tau(Y^0_t)^2\big)\\
\hphantom{\mathscr P^y_*(Y^0)D^t(\tau)=}-\tfrac16Z\big((Y^0)^2(\tau Y^0_t)_t-\tfrac13\tau_{tt}(Y^0)^3-\tau Y^0(Y^0_t)^2\big),
\\[1ex]
\mathscr R^x_*(W^1)D^t(\tau)=D^t(\tau)+R^x(\tau W^1_t+\tfrac13\tau_tW^1),\\[.5ex]
\mathscr R^y_*(W^2)D^t(\tau)=D^t(\tau)+R^y(\tau W^2_t+\tfrac13\tau_tW^2),\quad
\mathscr Z_*(W^0)D^t(\tau)=D^t(\tau)+Z(\tau W^0_t),
\\[1.5ex]
\mathscr P^x_*(X^0)D^{\rm s}=D^{\rm s}-P^x(X^0),\quad
\mathscr R^x_*(W^1)D^{\rm s}=D^{\rm s}-2R^x(W^1),\\[.5ex]
\mathscr P^y_*(Y^0)D^{\rm s}=D^{\rm s}-P^y(Y^0),\quad
\mathscr R^y_*(W^2)D^{\rm s}=D^{\rm s}-2R^y(W^2),\quad
\mathscr Z_*(W^0)D^{\rm s}=D^{\rm s}-3Z(W^0),\!\!
\\[1.5ex]
\mathscr P^x_*(X^0)P^x(\chi)=P^x(\chi)+R^x\big(\chi_tX^0-\chi X^0_t\big)-\tfrac12Z\big(\chi_t(X^0)^2-\chi X^0X^0_t\big),\\[.5ex]
\mathscr P^x_*(X^0)R^x(\alpha)=R^x(\alpha)-Z(\alpha X^0),\quad
\mathscr R^x_*(W^1)P^x(\chi)=P^x(\chi)+Z(\chi W^1),
\\[1.5ex]
\mathscr P^y_*(Y^0)P^y(\rho)=P^y(\rho)+R^y(\rho_tY^0-\rho Y^0_t)-\tfrac12Z(\rho_t(Y^0)^2-\rho Y^0_tY^0),\\[.5ex]
\mathscr P^y_*(Y^0)R^y(\beta)=R^y(\beta)-Z(\beta Y^0),\quad
\mathscr R^y_*(W^2)P^y(\rho)=P^y(\rho)+Z(\rho W^2),
\\[1.5ex]
\mathscr J_*P^x(\chi)=P^y(\chi),\quad
\mathscr J_*P^y(\rho)=P^x(\rho),\quad
\mathscr J_*R^x(\alpha)=R^y(\alpha),\quad
\mathscr J_*R^y(\beta)=R^x(\beta),
\end{gather*}
where~$\hat T$ is the inverse of the function~$T$.
At the same time, a part of adjoint actions can be computed via mimicking the classical approach
if the corresponding Lie series has a finite number of nonzero terms.

\begin{lemma}\label{lem:dN1DInequivSubalgs}
A complete list of $G$-inequivalent one-dimensional subalgebras of the algebra~$\mathfrak g$
is exhausted by the following subalgebra families:
\begin{gather*}
\mathfrak s_{1.1}^\delta=\big\langle D^t(1)+\delta D^{\rm s}\big\rangle,\quad
\mathfrak s_{1.2}       =\big\langle D^{\rm s}              \big\rangle,\quad
\mathfrak s_{1.3}^\rho  =\big\langle P^x(1)+P^y(\rho)       \big\rangle,\quad
\mathfrak s_{1.4}^\beta =\big\langle P^x(1)+R^y(\beta)      \big\rangle,\\
\mathfrak s_{1.5}^\beta =\big\langle R^x(1)+R^y(\beta)      \big\rangle,\quad
\mathfrak s_{1.6}       =\big\langle Z(t)                   \big\rangle,\quad
\mathfrak s_{1.7}       =\big\langle Z(1)                   \big\rangle,
\end{gather*}
where $\delta\in\{0,1\}$ $(\!{}\bmod G)$, and $\rho$ and $\beta$ run through the set of smooth functions of~$t$ with $\rho\ne0$.
\end{lemma}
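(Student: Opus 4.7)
The plan is to start from a generic element
\begin{gather*}
Q = D^t(\tau) + aD^{\rm s} + P^x(\chi) + P^y(\rho) + R^x(\alpha) + R^y(\beta) + Z(\sigma) \in \mathfrak g
\end{gather*}
and reduce $\langle Q\rangle$ to one of the listed canonical forms by composing the pushforwards from $G$ displayed at the start of the section with the rescaling freedom $Q \to cQ$, $c \neq 0$, intrinsic to one-dimensional subalgebras. The key structural observation, read off from~\eqref{eq:dNCommRelations}, is that the span $\mathfrak r$ of all $P^x(\chi)$, $P^y(\rho)$, $R^x(\alpha)$, $R^y(\beta)$, $Z(\sigma)$ is a nilpotent ideal of $\mathfrak g$ graded by the $\mathrm{ad}_{D^{\rm s}}$-weights $(-1,-1,-2,-2,-3)$, and the adjoint actions of the relevant pseudogroup pieces are upper triangular with respect to this grading. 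Hence the components of $Q$ can be normalised top-down — $D^t$, then $D^{\rm s}$, then $P^{x,y}$, then $R^{x,y}$, then $Z$ — without ever re-introducing higher-grade components into previously-normalised ones. The case split proceeds on which of $\tau$, $a$, $(\chi,\rho)$, $(\alpha,\beta)$, $\sigma$ is the first non-zero datum.

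If $\tau \neq 0$, normalise $\tau \equiv 1$ via $\mathscr D^t_*(T)$ by solving the separable ODE $\hat T_t = \tau(\hat T)$; then successively apply $\mathscr P^x_*(X^0)$, $\mathscr P^y_*(Y^0)$, $\mathscr R^x_*(W^1)$, $\mathscr R^y_*(W^2)$, $\mathscr Z_*(W^0)$, each time choosing the parameter function so as to annihilate the current top component of $\mathfrak r$; each defining condition is a first-order linear ODE that is always solvable (e.g.\ $X^0_t - aX^0 = -\chi$ at the first step). Overall rescaling $Q \to cQ$ composed with a subsequent $\mathscr D^t_*$ restoring $\tau \equiv 1$ then rescales the coefficient of $D^{\rm s}$ by $c$, yielding $\delta \in \{0,1\}$ and hence $\mathfrak s_{1.1}^\delta$. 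The subcase $\tau = 0$, $a \neq 0$ is analogous and easier: after scaling $a = 1$, the conditions for killing the components of $\mathfrak r$ become algebraic in the parameter functions, giving $\mathfrak s_{1.2}$.

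If $\tau = a = 0$, so $Q \in \mathfrak r$, apply $\mathscr J$ if needed to place the leading non-zero component on the $x$-branch. When $\chi \neq 0$, use $\mathscr D^t_*$ with $\hat T_t = \chi(\hat T)^3$ to set $\chi \equiv 1$ and then pushforwards to kill $R^x$, $R^y$ (when $\rho = 0$) and $Z$, yielding $\mathfrak s_{1.3}^\rho$ if $\rho \neq 0$ and $\mathfrak s_{1.4}^\beta$ if $\rho = 0$. When $\chi = \rho = 0$ but $\alpha \neq 0$, use $\mathscr D^t_*$ with $\hat T_t = \alpha^{-3}(\hat T)$ to normalise $\alpha \equiv 1$ and $\mathscr P^x_*$ to kill $Z$, giving $\mathfrak s_{1.5}^\beta$. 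When only $Z(\sigma)$ survives, the combined action $\sigma \mapsto c\,\sigma \circ \hat T$ partitions non-zero $\sigma$'s locally into two orbits: non-zero constants (normalised to $\sigma \equiv 1$, giving $\mathfrak s_{1.7}$) and non-constants (locally straightened to $\sigma = t$ by choosing $\hat T = \sigma^{-1}$ on a monotonicity interval, giving $\mathfrak s_{1.6}$).

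The principal obstacle will be establishing essentiality of the listed parameters, i.e., that different $\delta \in \{0,1\}$ and different functions $\rho, \beta$ really produce $G$-inequivalent subalgebras. This amounts to checking that the residual discrete symmetries $\mathscr J$, $\mathscr I^{\rm i}$, $\mathscr I^{\rm s}$, together with whatever continuous freedom survives after all the ``hard'' normalisations, do not further collapse the families; for example, $\mathscr J$ induces on $\mathfrak s_{1.3}^\rho$ a non-trivial action on the parameter $\rho$ coupled to a reparametrisation of $t$ (roughly $\rho \mapsto 1/(\rho\circ\hat T)$ for a diffeomorphism $\hat T$ determined by $\rho$), and such residual identifications must be tracked against the stated parameter ranges rather than silently ignored. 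A secondary bookkeeping obstacle is confirming at each step of the top-down normalisation that the explicit adjoint-action formulas from the start of the section really produce no unwanted feedback into previously-killed components, which the grading of $\mathfrak r$ guarantees in principle but should be checked term by term.
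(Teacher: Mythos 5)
Your proposal is correct and follows essentially the same route as the paper's proof: the same case split on the first nonvanishing datum among $\tau$, $\lambda$, $(\chi,\rho)$, $(\alpha,\beta)$, $\sigma$, with the same top-down normalisation via the adjoint actions of $\mathscr D^t_*$, $\mathscr P^{x}_*$, $\mathscr P^{y}_*$, $\mathscr R^{x}_*$, $\mathscr R^{y}_*$, $\mathscr Z_*$, $\mathscr J_*$ and the rescaling of~$Q$ (the paper likewise defers the inequivalence/essentiality bookkeeping to a separate remark). The only slip is the parenthetical ``when $\rho=0$'': $R^y(\beta)$ is annihilated by $\mathscr P^y_*(Y^0)$ via the solvable ODE $\rho Y^0_t-\rho_tY^0=\beta$ precisely when $\rho\ne0$, which is exactly why $\beta$ survives as a parameter in $\mathfrak s_{1.4}^\beta$ — but the conclusions you draw are the right ones.
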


\begin{proof}
Let $\mathfrak s_1=\langle Q\rangle$ be a one-dimensional subalgebra of~$\mathfrak g$
spanned by a nonvanishing vector field
$Q=D^t(\tau)+\lambda D^{\rm s}+P^x(\chi)+P^y(\rho)+R^x(\alpha)+R^y(\beta)+Z(\sigma)$ from~$\mathfrak g$.
Here $\tau$, $\chi$, $\rho$, $\alpha$, $\beta$ and $\sigma$ are arbitrary smooth functions of $t$
and $\lambda$ is an arbitrary constant that do not simultaneously vanish.\looseness=-1

If the function~$\tau$ is nonzero, then we use~$\mathscr D^t_*(T)$ with $T_t=1/\tau$ to set $\tau=1$
and, preserving the notation of the parameter functions, successively act on the (currently modified) vector field~$Q$
by $\mathscr P^x_*(X^0)\circ\mathscr P^y_*(Y^0)$ with $X^0_t-\lambda X^0=-\chi$ and $Y^0_t-\lambda Y^0=-\rho$ to set $\chi=\rho=0$,
by $\mathscr R^x_*(W^1)\circ\mathscr R^y_*(W^2)$ with $W^1_t-2\lambda W^1=-\alpha$ and $W^2_t-2\lambda W^2=-\beta$ to set $\alpha=\beta=0$ and
by $\mathscr Z_*(W^0)$ with $W^0_t-3\lambda W^0=-\sigma$ to set $\sigma=0$.
Thus, we obtain $Q=D^t(1)+\lambda D^{\rm s}$.
If $\lambda\ne0$, we can set $\lambda=1$ by simultaneously scaling~$t$ and the entire~$Q$
and, if necessary, alternating their signs.
In other words, the subalgebra~$\mathfrak s_1$ with~$\tau\ne0$
is $G$-equivalent to a one in the family $\{\mathfrak s_{1.1}^0,\,\mathfrak s_{1.1}^1\}$.

Suppose that $\tau=0$ and $\lambda\ne0$.
Changing the basis element~$Q$, we first set $\lambda=1$.
Then, preserving the notation of the parameter functions and successively act on the (currently modified) vector field~$Q$
by $\mathscr P^x_*(\chi)\circ\mathscr P^y_*(\rho)$ to set $\chi=\rho=0$,
by $\mathscr R^x_*(\frac12\alpha)\circ\mathscr R^y_*(\frac12\beta)$ to set $\alpha=\beta=0$ and
by $\mathscr Z_*(\frac13\sigma)$ to set $\sigma=0$,
which leads to the subalgebra~$\mathfrak s_{1.2}$.

Let $\tau=0$, $\lambda=0$ and $\chi\rho\ne0$.
Analogously to the above cases, a chain of simplifying successive actions is
$\mathscr D^t_*(T)$ with $T_t=\chi^{-3}$,
$\mathscr P^x_*(X^0)\circ\mathscr P^y_*(Y^0)$ with $X^0_t=\alpha$ and $\rho Y^0_t-\rho_tY^0=\beta$ and
by $\mathscr R^x_*(W^1)$ with $W^1=-\sigma$,
which gives $\chi=1$, $\alpha=\beta=0$ and $\sigma=0$.
Thus, we have the subalgebra~\smash{$\mathfrak s^\rho_{1.3}$}.

Let $\tau=0$ and $\lambda=0$ and exactly one of the parameter functions $\chi$ and $\rho$ is nonzero.
Up to the permutation of~$x$ and~$y$, we can assume without loss of generality that $\chi\ne0$ and $\rho=0$.
Similarly to the previous case, we set $\chi=1$, $\alpha=0$ and $\sigma=0$,
and obtain the subalgebra~\smash{$\mathfrak s^\beta_{1.4}$}.

Further we assume $\tau=0$, $\lambda=0$ and $\chi=\rho=0$.

If $(\alpha,\beta)\ne(0,0)$, then due to the possibility of permuting~$x$ and~$y$,
we can assume, without loss of generality, $\alpha\ne0$
and set $\alpha=1$ and $\sigma=0$ modulo the $G$-equivalence,
which gives the subalgebra~\smash{$\mathfrak s^\beta_{1.5}$}.

Otherwise, $\alpha=\beta=0$ and $\sigma\ne0$.
The consideration splits into two cases $\sigma_t\ne0$ and $\sigma_t=0$,
where the subalgebra~$\mathfrak s_1$ is $G$-equivalent to $\mathfrak s_{1.6}$ and~$\mathfrak s_{1.7}$, respectively.
\end{proof}

\begin{lemma}\label{lem:dN2DInequivSubalgs}
A complete list of $G$-inequivalent two-dimensional subalgebras of the algebra~$\mathfrak g$
is exhausted by the non-abelian algebras
\begin{gather*}
\mathfrak s_{2.1}^\lambda=\big\langle D^t(1),\,D^t(t)+\lambda D^{\rm s}\big\rangle,\quad
\mathfrak s_{2.2}^\nu    =\big\langle D^t(1),\,D^t(t)-\tfrac13 D^{\rm s}+P^x(1)+P^y(\nu)\big\rangle,\\
\mathfrak s_{2.3}^\nu    =\big\langle D^t(1),\,D^t(t)+\tfrac16 D^{\rm s}+R^x(1)+R^y(\nu)\big\rangle,\quad
\mathfrak s_{2.4}        =\big\langle D^t(1),\,D^t(t)+Z(1)\big\rangle,\\
\mathfrak s_{2.5}^{\lambda\mu}=\big\langle D^t(1)+\lambda D^{\rm s},\,P^x({\rm e}^{(\lambda-1)t})+\mu P^y({\rm e}^{(\lambda-1)t})\big\rangle,\\
\mathfrak s_{2.6}^{\lambda\delta}=\big\langle D^t(1)+\lambda D^{\rm s},\,P^x({\rm e}^{(\lambda-1)t})+\delta R^y({\rm e}^{(2\lambda-1)t})\big\rangle,\\
\mathfrak s_{2.7}^{\lambda\nu}=\big\langle D^t(1)+\lambda D^{\rm s},\,R^x({\rm e}^{(2\lambda-1)t})+\nu R^y({\rm e}^{(2\lambda-1)t})\big\rangle,\quad
\mathfrak s_{2.8}^\lambda     =\big\langle D^t(1)+\lambda D^{\rm s},\,Z({\rm e}^{(3\lambda-1)t})\big\rangle,\\
\mathfrak s_{2.9}^{\tilde\rho}=\big\langle D^{\rm s},\,P^x(1)+P^y(\tilde\rho)\big\rangle,\quad
\mathfrak s_{2.10}^\beta      =\big\langle D^{\rm s},\,R^x(1)+R^y(\beta)\big\rangle,\\
\mathfrak s_{2.11}            =\big\langle D^{\rm s},\,Z(t)\big\rangle,\quad
\mathfrak s_{2.12}            =\big\langle D^{\rm s},\,Z(1)\big\rangle,\quad
\end{gather*}
and the abelian algebras
\begin{gather*}
\mathfrak s_{2.13}       =\big\langle D^t(1),\,D^{\rm s}\big\rangle,\quad
\mathfrak s_{2.14}^{\delta\nu\delta'}=\big\langle D^t(1)+\delta D^{\rm s},\,
P^x({\rm e}^{\delta t})+\nu P^y({\rm e}^{\delta t})+\delta'R^y({\rm e}^{2\delta t})\big\rangle,
\\
\mathfrak s_{2.15}^{\delta\nu}=\big\langle D^t(1)+\delta D^{\rm s},\,R^x({\rm e}^{2\delta t})+\nu R^y({\rm e}^{2\delta t})\big\rangle,\quad
\mathfrak s_{2.16}^\delta     =\big\langle D^t(1)+\delta D^{\rm s},\,Z({\rm e}^{3\delta t}) \big\rangle,\\
\mathfrak s_{2.17}^{\rho\alpha\beta}=\big\langle P^x(1)+R^y(\beta),\,P^y(\rho)+R^x(\rho\beta)\big\rangle,\\
\mathfrak s_{2.18}^{\rho\beta\sigma}=\big\langle P^x(1)+P^y(\rho),\,-R^x(\rho\beta)+R^y(\beta)+Z(\sigma)\big\rangle_{(\beta,\sigma)\ne(0,0)},\\
\mathfrak s_{2.19}^{\beta^1\beta^2} =\big\langle P^x(1)+R^y(\beta^1),\,R^y(\beta^2)\big\rangle_{\beta^2\ne0},\quad
\mathfrak s_{2.20}^{\beta\sigma}    =\big\langle P^x(1)+R^y(\beta),\,Z(\sigma)\big\rangle_{\sigma\ne0},\\
\mathfrak s_{2.21}^{\alpha\beta^1\beta^2} =\big\langle R^x(1)+R^y(\beta^1),\,R^x(\alpha)+R^y(\beta^2)\big\rangle_{\beta^2\ne\alpha\beta^1},\\
\mathfrak s_{2.22}^{\alpha\beta\sigma}    =\big\langle R^x(1)+R^y(\beta),\,R^x(\alpha)+R^y(\alpha\beta)+Z(\sigma)\big\rangle_{\alpha_t\ne0},\\
\mathfrak s_{2.23}^{\beta\sigma}          =\big\langle R^x(1)+R^y(\beta),\,Z(\sigma)\big\rangle_{\sigma\ne0},\ \
\mathfrak s_{2.24}^\sigma =\big\langle Z(t),\,Z(\sigma)\big\rangle_{\sigma_{tt}\ne0},\ \
\mathfrak s_{2.25}^\sigma =\big\langle Z(1),\,Z(\sigma)\big\rangle_{\sigma_t\ne0},
\end{gather*}
where $\rho$, $\tilde\rho$, $\alpha$, $\beta$, $\beta^1$, $\beta^2$ and $\sigma$
run through the set of smooth functions of~$t$ with $\rho\ne0$,
$\lambda\in\mathbb R$, $\mu\in[-1,1]\setminus\{0\}$ and $\nu\in[-1,1]$ $(\!{}\bmod G)$,
$\delta,\delta'\in\{0,1\}$ $(\!{}\bmod G)$,
and the conditions indicated after the corresponding subalgebras should be satisfied as well.
\end{lemma}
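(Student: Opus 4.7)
The plan follows the template of Lemma~\ref{lem:dN1DInequivSubalgs}: parameterize a candidate subalgebra $\mathfrak s_2=\langle Q^1,Q^2\rangle$ by its coefficients in the basis~\eqref{eq:dNMIA}, then apply changes of basis within~$\mathfrak s_2$ together with $G$-adjoint actions to reduce to canonical form. Every two-dimensional Lie algebra is either abelian or non-abelian; in the latter case one may choose $Q^1$ spanning the derived ideal with $[Q^2,Q^1]=Q^1$. In both situations the first move is the same: use Lemma~\ref{lem:dN1DInequivSubalgs} to bring the distinguished one-dimensional subalgebra (the derived ideal in the non-abelian case, either generator in the abelian case) into one of the canonical forms $\mathfrak s_{1.1}^\delta,\dots,\mathfrak s_{1.7}$.

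The engine of the subsequent simplification is that the commutation relations~\eqref{eq:dNCommRelations} are graded by~$\mathrm{ad}_{D^{\rm s}}$, with eigenvalues $-1,-2,-3$ on the $P$-, $R$- and $Z$-components, while $\mathrm{ad}_{D^t(\tau)}$ acts on these components as a first-order linear differential operator on the corresponding parameter functions of~$t$. Closure of~$\mathfrak s_2$ under bracket therefore translates into a system of linear first-order ODEs for the parameter functions of~$Q^2$, whose solutions are the exponential profiles ${\rm e}^{(k\lambda-1)t}$ appearing in $\mathfrak s_{2.5}$--$\mathfrak s_{2.8}$ and $\mathfrak s_{2.14}$--$\mathfrak s_{2.16}$. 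Once $Q^2$ is pinned down up to finitely many parameters, the residual stabilizer of $\langle Q^1\rangle$ in~$G$ (the subgroup of the transformations~\eqref{eq:dNElementaryTransformations} preserving $Q^1$ modulo rescaling) acts on $Q^2$ modulo $\langle Q^1\rangle$, and I would use it to normalize the free constants and discrete signs.

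I would organize the case analysis by descending through the 1D list applied to~$Q^1$. First, if some element of~$\mathfrak s_2$ has nonzero $\tau$-component, normalize it to the $\mathfrak s_{1.1}^\delta$ form; depending on whether $Q^2$ also has nonzero $\tau$-component or lies in the ideal of purely functional generators, this produces the non-abelian families $\mathfrak s_{2.1}$--$\mathfrak s_{2.4}$ and the mixed families $\mathfrak s_{2.5}$--$\mathfrak s_{2.8}$, $\mathfrak s_{2.13}$--$\mathfrak s_{2.16}$. Second, if $\tau\equiv 0$ throughout but $\lambda\ne 0$ for some generator, normalize to $Q^1=D^{\rm s}$ and obtain $\mathfrak s_{2.9}$--$\mathfrak s_{2.12}$. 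Third, if both generators lie in the ideal $\mathfrak n$ spanned by the functional generators $\{P^x(\chi),P^y(\rho),R^x(\alpha),R^y(\beta),Z(\sigma)\}$, normalize $Q^1$ to one of $\mathfrak s_{1.3}^\rho$--$\mathfrak s_{1.7}$ and then classify $Q^2$ modulo the stabilizer of~$Q^1$, producing the long list $\mathfrak s_{2.17}$--$\mathfrak s_{2.25}$.

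The main obstacle will be this last case. Here the residual equivalence group is infinite-dimensional, since the full pseudogroup $\{\mathscr D^t(T)\}$ remains available together with all the translation transformations $\mathscr P^x,\mathscr P^y,\mathscr R^x,\mathscr R^y,\mathscr Z$ whose actions on $Q^1$ can be controlled. Tracking the functional freedom in the parameters of~$Q^2$ and extracting the invariants distinguishing $\mathfrak s_{2.17}$--$\mathfrak s_{2.25}$ from one another will require careful bookkeeping. Pairwise $G$-inequivalence across the entire list is then verified by comparing invariants such as the derived subalgebra, the $(\tau,\lambda)$-signature of the generators, and the exhausted gauges on the discrete and continuous parameters $\delta,\delta',\mu,\nu,\lambda$, where the specified ranges $\delta,\delta'\in\{0,1\}$, $\mu\in[-1,1]\setminus\{0\}$, $\nu\in[-1,1]$ reflect the remaining symmetries (in particular the involutions $\mathscr J$, $\mathscr I^{\rm i}$, $\mathscr I^{\rm s}$) after continuous normalization.
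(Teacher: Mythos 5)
Your plan follows the paper's proof essentially verbatim: split into non-abelian and abelian cases, subdivide by the linear-(in)dependence structure of the $(\tau,\lambda)$-components of the generators, normalize the generator with nonzero $\tau$ (or with $\lambda\ne0$) first via the one-dimensional classification and the adjoint actions, read off the exponential profiles of the remaining parameter functions from the closure conditions, and then gauge the residual constants with the stabilizer of the normalized generator. One caution: your first paragraph proposes normalizing the derived ideal first in the non-abelian case, which would be awkward for the families $\mathfrak s_{2.5}^{\lambda\mu}$--$\mathfrak s_{2.8}^{\lambda}$, where the derived ideal lies in the purely functional part $\{P^x(\chi),P^y(\rho),R^x(\alpha),R^y(\beta),Z(\sigma)\}$ and its stabilizer is too small to subsequently simplify $\tau^2$ -- the ordering in your third paragraph (normalize the generator with $\tau\ne0$ first, then constrain the other by the commutation relation) is the one that works and is what the paper does.
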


\begin{proof}
Consider a two-dimensional subalgebra $\mathfrak s_2=\langle Q^1,Q^2\rangle$ of~$\mathfrak g$
spanned by two (linearly independent) vector fields
\begin{gather*}
Q^i=D^t(\tau^i)+\lambda^iD^{\rm s}+P^x(\chi^i)+P^y(\rho^i)+R^x(\alpha^i)+R^y(\beta^i)+Z(\sigma^i),\quad i=1,2,
\end{gather*}
from~$\mathfrak g$
with arbitrary smooth functions $\tau^i$, $\chi^i$, $\rho^i$, $\alpha^i$, $\beta^i$ and $\sigma^i$ of $t$
and arbitrary constants $\lambda^i$ such that
the tuples $(\tau^i,\lambda^i,\chi^i,\rho^i,\alpha^i,\beta^i,\sigma^i)$, $i=1,2$, are linearly independent.
Moreover, since $[Q^1,Q^2]\in\langle Q^1,Q^2\rangle$, up to changing the basis $(Q^1,Q^2)$,
we can assume that either $[Q^1,Q^2]=Q^1$ or $[Q^1,Q^2]=0$
if the subalgebra~$\mathfrak s_2$ is non-abelian or abelian, respectively.
Consider these cases separately.
For each of the obtained families of subalgebras,
we do not indicate a~final tuning of its basis elements, which involves permuting or scaling these elements
or omitting superfluous indices.

\medskip

\noindent {\bf I.}
The commutation relation $[Q^1,Q^2]=Q^1$ implies $\lambda^1=0$.

First suppose that the functions~$\tau^1$ and $\tau^2$ are linearly independent.
Then the projections $\tau^i\p_t$ of~$Q^i$, $i=1,2$ on the $t$-line span a two-dimensional Lie algebra
of vector fields on the real line with $[\tau^1\p_t,\tau^2\p_t]=\tau^1\p_t$.
In view of the classical Lie theorem, there exists a point transformation $\tilde t=T(t)$ of~$t$
that pushes forward the vector fields $\tau^1\p_t$ and $\tau^2\p_t$ to $\p_{\tilde t}$ and $\tilde t\p_{\tilde t}$.
This means that the action by~$\mathscr D^t_*(T)$ allows us to set $\tau^1=1$ and $\tau^2=t$.
Following the first case of the proof of Lemma~\ref{lem:dN1DInequivSubalgs},
we can further set $\chi^1$, $\rho^1$, $\alpha^1$, $\beta^1$ and $\sigma^1$ to 0.
Re-denote $(\lambda^2,\chi^2,\rho^2,\alpha^2,\beta^2,\sigma^2)$ as $(\lambda,\chi,\rho,\alpha,\beta,\sigma)$.
Under the derived constraints, the commutation relation $[Q^1,Q^2]=Q^1$
is equivalent to the equations $\chi_t=\rho_t=\alpha_t=\beta_t=\sigma_t=0$,
i.e., all these subalgebra parameters are constants.
The pushforward by a transformation~$\Phi$ from~$G$ does not change the vector field $Q^1=D^t(1)$ up to its scaling
if and only if $T=at+b$ for some constants~$a$ and~$b$
and the parameter functions~$X^0$, $Y^0$, $W^0$, $W^1$ and $W^2$ are constants,
whereas the constant $C$ is not additionally constrained,\looseness=-1
\begin{gather*}
\Phi\colon\quad
\tilde t=at+b,\quad
\tilde x=Cx+X^0,\quad
\tilde y=Cy+Y^0,\quad
\tilde u=C^3u+W^1x+W^2y+W^0,
\end{gather*}
i.e., $\Phi=\mathscr D^t_*(at+b)\circ\mathscr P^x_*(X^0)\circ\mathscr P^y_*(Y^0)
\circ\mathscr R^x_*(W^1/C)\circ\mathscr R^y_*(W^2/C)\circ\mathscr Z_*(W^0)\circ\mathscr D^{\rm s}_*(C)$.

Pushing forward $\mathfrak s_2$ by such~$\Phi$ with $a=C=1$ in addition, we have $\Phi_*Q^1=Q^1$ and
\begin{gather*}
\begin{split}
\Phi_*Q^2={}&Q^2-bQ^1
-(\lambda+\tfrac13)\big(P^x(X^0)+P^y(Y^0)\big)
-(2\lambda-\tfrac13)\big(R^x(W^1)+R^y(W^2)\big)\\&
-Z\big(\alpha X^0+\beta Y^0-\chi W^1-\rho W^2-(2\lambda-\tfrac13)(W^1X^0+W^2Y^0)+3\lambda W^0\big).
\end{split}
\end{gather*}
The last equation implies that for general values of~$\lambda$,
we can set $\chi=\rho=\alpha=\beta=\sigma=0$, obtaining the subalgebra family~$\{\mathfrak s^{\lambda}_{2.1}\}$.
Subalgebras of the considered kind that are not $G$-equivalent to elements of the family~$\{\mathfrak s^{\lambda}_{2.1}\}$
correspond to the special values $-\frac13$, $\frac16$ and $0$ of $\lambda$,
where in addition $(\chi,\rho)\ne(0,0)$, $(\alpha,\beta)\ne(0,0)$ and $\sigma\ne0$, respectively.
In each of these cases, the other parameters in the tuple $(\chi,\rho,\alpha,\beta,\sigma)$
can be set to zero by~$\Phi_*$ as above.
Using the permutation of $x$ and~$y$, we replace
the inequality $(\chi,\rho)\ne(0,0)$ by $\chi\ne0$ and $|\rho|\leqslant|\chi|$ and
the inequality $(\alpha,\beta)\ne(0,0)$ by $\alpha\ne0$ and $|\beta|\leqslant|\alpha|$.
Acting by~$\mathscr D^t_*(at)$ or $\mathscr D^{\rm s}_*(C)$,
we scale the nonzero parameter among $\chi$, $\alpha$ or $\sigma$ to~1,
which leads to the subalgebras~$\mathfrak s_{2.2}^\rho$ with $|\rho|\leqslant1$,
\smash{$\mathfrak s_{2.3}^\beta$} with $|\beta|\leqslant1$ or $\mathfrak s_{2.4}$,
respectively.

Now, let the functions~$\tau^1$ and $\tau^2$ are linearly dependent but not simultaneously zero.
The commutation relation $[Q^1,Q^2]=Q^1$ implies that $\tau^1=0$ and $\tau^2\ne0$ in this case.
Following the first case of the proof of Lemma~\ref{lem:dN1DInequivSubalgs},
we can set $Q^2=D^t(1)+\lambda D^{\rm s}$, where we re-denote $\lambda^2$ by $\lambda$.
Recalling again the commutation relation $[Q^1,Q^2]=Q^1$,
we obtain
$\chi^1=\nu_1{\rm e}^{(\lambda-1)t}$,
$\rho^1=\nu_2{\rm e}^{(\lambda-1)t}$,
$\alpha^1=\nu_3{\rm e}^{(2\lambda-1)t}$,
$\beta^1=\nu_4{\rm e}^{(2\lambda-1)t}$ and
$\sigma^1=\nu_5{\rm e}^{(3\lambda-1)t}$
with constants $\nu_1$, \dots, $\nu_5$.
For further simplification, we can apply only the pushforwards that do not change
the form of the basis element~$Q^2$ up to its linearly combining with~$Q^1$.
The pushforward by an element~$\Phi$ of~$G$ has this property if and only if
the transformation~$\Phi$ or $\mathscr J\circ\Phi$ is of the form~\eqref{eq:dNPointSymForm} with
\begin{gather*}
T=t+b,\quad
X^0=\kappa_1{\rm e}^{\lambda t}-\kappa_0C\nu_1{\rm e}^{(\lambda-1)t},\quad
Y^0=\kappa_2{\rm e}^{\lambda t}-\kappa_0C\nu_2{\rm e}^{(\lambda-1)t},\\
W^1=\kappa_3{\rm e}^{2\lambda t}-(\kappa_0C^3\nu_3-\lambda\kappa_0C^2\nu_1\kappa_1){\rm e}^{(2\lambda-1)t}
-\frac{\lambda-1}2\kappa_0^{\,2}C^3\nu_1^{\,2}{\rm e}^{2(\lambda-1)t},\\
W^2=\kappa_4{\rm e}^{2\lambda t}-(\kappa_0C^3\nu_4-\lambda\kappa_0C^2\nu_2\kappa_2){\rm e}^{(2\lambda-1)t}
-\frac{\lambda-1}2\kappa_0^{\,2}C^3\nu_2^{\,2}{\rm e}^{2(\lambda-1)t},\\
\begin{split}
W^0={}&\kappa_5{\rm e}^{3\lambda t}-\kappa_0(C^3\nu_5+\nu_1\kappa_3+\nu_2\kappa_4){\rm e}^{(3\lambda-1)t}
+\frac12\kappa_0^{\,2}C^3(\nu_1\nu_3+\nu_2\nu_4){\rm e}^{(3\lambda-2)t}\\&
-\frac\lambda2\kappa_0^{\,2}C^2(\nu_1^{\,2}\kappa_1+\nu_2^{\,2}\kappa_2){\rm e}^{(3\lambda-2)t}
+\frac{\lambda-1}6\kappa_0^{\,3}C^3(\nu_1^{\,3}+\nu_2^{\,3}){\rm e}^{3(\lambda-1)t},
\end{split}
\end{gather*}
where $C$, $b$ and $\kappa_0$, \dots, $\kappa_5$ are arbitrary constants with~$C\ne0$.
In addition, we can multiply~$Q^1$ by an arbitrary nonzero constant.
As a result, we set
\begin{itemize}\itemsep=-0.3ex
\item[$\circ$] $\nu_1=1$, \ $|\nu_2|\leqslant1$, \ $\nu_3=\nu_4=\nu_5=0$ \ if \ $\nu_1\nu_2\ne0$,
\item[$\circ$] $\nu_1=1$, \ $\nu_2=0$, \ $\nu_3=\nu_5=0$, \ $\nu_4\in\{0,1\}$ \ if \ $\nu_1\nu_2=0$, \ $(\nu_1,\nu_2)\ne(0,0)$,
\item[$\circ$] $\nu_3=1$, \ $|\nu_4|\leqslant1$, \ $\nu_5=0$ \ if \ $\nu_1=\nu_2=0$, \ $(\nu_3,\nu_4)\ne(0,0)$,
\item[$\circ$] $\nu_5=1$ \ otherwise.
\end{itemize}
After re-denoting the respective parameters,
this corresponds to the subalgebras $\mathfrak s_{2.5}^{\lambda\mu}$,
$\mathfrak s_{2.6}^{\lambda\delta}$, $\mathfrak s_{2.7}^{\lambda\nu}$ and $\mathfrak s_{2.8}^\lambda$.

If $\tau^1=\tau^2=0$, then $\lambda^2\ne0$ and
we follow the second case of the proof of Lemma~\ref{lem:dN1DInequivSubalgs} and set
$\chi^2=\rho^2=\alpha^2=\beta^2=\sigma^2=0$, which gives $Q^2=\lambda^2D^{\rm s}$.
The commutation relation $[Q^1,Q^2]=Q^1$ implies that there are three possible cases,
$\lambda^2=1$ and $\alpha^1=\beta^1=\sigma^1=0$,
$\lambda^2=\tfrac12$ and $\chi^1=\rho^1=\sigma^1=0$ or
$\lambda^2=\tfrac13$ and  $\chi^1=\rho^1=\alpha^1=\beta^1=0$.
Permuting $x$ and $y$ if necessary and acting by $\mathscr D^t_*(T)$
with an appropriate value of the parameter function~$T$, we can set
$\chi^1=1$, $\alpha^1=1$ or $\sigma^1\in\{t,1\}$
in the first, the second or the third cases,
which leads to the subalgebras \smash{$\mathfrak s_{2.9}^{\rho^1}$}, \smash{$\mathfrak s_{2.10}^{\beta^1}$}
or $\mathfrak s_{2.11}$ and $\mathfrak s_{2.12}$, respectively.

\medskip

\noindent {\bf II.}
Suppose that the subalgebra~$\mathfrak s_2$ is abelian, $[Q^1,Q^2]=0$.
Then the functions~$\tau^1$ and $\tau^2$ are necessarily linearly dependent.

Let in addition the tuples $(\tau^1,\lambda^1)$ and $(\tau^2,\lambda^2)$ are linearly independent.
Linearly combining~$Q^1$ and~$Q^2$, we can set $\tau^1\ne0$, $\lambda^1=0$, $\tau^2=0$ and $\lambda^2\ne0$.
The successive action by~$\mathscr D^t_*(T)$ with $T_t=1/\tau^1$ and the new value of~$\tau^1$
allows us to set $\tau^1=1$.
Following the second case of the proof of Lemma~\ref{lem:dN1DInequivSubalgs},
we set $\chi^2=\rho^2=\alpha^2=\beta^2=\sigma^2=0$.
Then the commutation relation $[Q^1,Q^2]=0$ implies $\chi^1=\rho^1=\alpha^1=\beta^1=\sigma^1=0$,
and thus we have the subalgebra~$\mathfrak s_{2.13}$.

If the tuples $(\tau^1,\lambda^1)$ and $(\tau^2,\lambda^2)$ are linearly dependent
and the functions~$\tau^1$ and $\tau^2$ do not simultaneously vanish,
then we linearly combine~$Q^1$ and~$Q^2$ to set $\tau^1\ne0$, $\tau^2=0$ and $\lambda^2=0$.
According to the first case of the proof of Lemma~\ref{lem:dN1DInequivSubalgs},
we can reduce~$Q^1$ to the form $D^t(1)+\lambda^1D^{\rm s}$.
In view of the commutation relation $[Q^1,Q^2]=0$,
the parameter functions in~$Q^2$ are
$\chi^2=\nu_1{\rm e}^{\lambda t}$,
$\rho^2=\nu_2{\rm e}^{\lambda t}$,
$\alpha^2=\nu_3{\rm e}^{2\lambda t}$,
$\beta^2=\nu_4{\rm e}^{2\lambda t}$ and
$\sigma^2=\nu_5{\rm e}^{3\lambda t}$
 with constants $\nu_1$, \dots, $\nu_5$.
The pushforward by an element~$\Phi$ of~$G$, which is necessarily of the form~\eqref{eq:dNPointSymForm},
does not change the form of the basis element~$Q^1$ up to its linearly combining with~$Q^2$
if and only if the parameters of~$\Phi$ are
\begin{gather*}
T=t+b,\quad
X^0=(\kappa_1+\kappa_0C\nu_1t){\rm e}^{\lambda t},\quad
Y^0=(\kappa_2+\kappa_0C\nu_2t){\rm e}^{\lambda t},\\
W^1=\left(\kappa_3+\kappa_0C^3\nu_3t-\lambda\kappa_0C^2\nu_1\kappa_1t
-\kappa_0^{\,2}C^3\nu_1^{\,2}t-\frac\lambda2\kappa_0^{\,2}C^3\nu_1^{\,2}t^2
\right){\rm e}^{2\lambda t},\\
W^2=\left(\kappa_4+\kappa_0C^3\nu_4t-\lambda\kappa_0C^2\nu_2\kappa_2t
-\kappa_0^{\,2}C^3\nu_2^{\,2}t-\frac\lambda2\kappa_0^{\,2}C^3\nu_2^{\,2}t^2
\right){\rm e}^{2\lambda t},\\
\begin{split}
W^0={}&\left(\kappa_5+\kappa_0(C^3\nu_5+\nu_1\kappa_3+\nu_2\kappa_4)t
+\frac12\kappa_0^{\,2}C^3(\nu_1\nu_3+\nu_2\nu_4)t^2
-\frac\lambda2\kappa_0^{\,2}C^2(\nu_1^{\,2}\kappa_1+\nu_2^{\,2}\kappa_2)t^2\right.\\&\left.
-\frac12\kappa_0^{\,3}C^3(\nu_1^{\,3}+\nu_2^{\,3})t^2
-\frac\lambda6\kappa_0^{\,3}C^3(\nu_1^{\,3}+\nu_2^{\,3})t^3
\right){\rm e}^{3\lambda t},
\end{split}
\end{gather*}
where $C$, $b$ and $\kappa_0$, \dots, $\kappa_5$ are arbitrary constants with $C\ne0$.
We can also push forward~$\mathfrak s_2$ by~$\mathscr J$
or multiply~$Q^2$ by an arbitrary nonzero constant.
Hence we can set
\begin{itemize}\itemsep=0ex
\item[$\circ$] $\nu_1=1$, \ $|\nu_2|\leqslant1$, \ $\nu_3=0$, \ $\nu_4\in\{0,1\}$, \ $\nu_5=0$ \ if \ $(\nu_1,\nu_2)\ne0$,
\item[$\circ$] $\nu_3=1$, \ $|\nu_4|\leqslant1$, \ $\nu_5=0$ \ if \ $\nu_1=\nu_2=0$, \ $(\nu_3,\nu_4)\ne0$,
\item[$\circ$] $\nu_5=1$ \ otherwise,
\end{itemize}
which corresponds, up to re-denoting parameters,
to the subalgebras $\mathfrak s_{2.14}^{\delta\nu\delta'}$,
$\mathfrak s_{2.15}^{\delta\nu}$ and $\mathfrak s_{2.16}^\delta$.

If $\tau^1=\tau^2=0$, then it follows from the commutation relation $[Q^1,Q^2]=0$ that
also $\lambda^1=\lambda^2=0$ since otherwise the vector fields~$Q^1$ and~$Q^2$ are linearly dependent,
and $\chi^1\chi^2_t-\chi^1_t\chi^2=0$, $\rho^1\rho^2_t-\rho^1_t\rho^2=0$,
$\chi^1\alpha^2-\chi^2\alpha^1+\rho^1\beta^2-\rho^2\beta^1=0$,
i.e., the parameter functions~$\chi^1$ and~$\chi^2$ (resp.\ $\rho^1$ and~$\rho^2$) are linearly dependent.
Suppose that the tuples $(\chi^1,\rho^1)$ and $(\chi^2,\rho^2)$ are linearly independent.
Linearly combining~$Q^1$ and~$Q^2$, we make $\chi^1\rho^2\ne0$ and~$\chi^2=\rho^1=0$.
Then the action by $\mathscr D^t_*(T)$ with $T_t=(\chi^1)^{-3}$ allows us to set $\chi^1=1$,
after which $\alpha^2=\rho^2\beta^1$,
and we obtain the subalgebra~$\mathfrak s_{2.17}^{\rho\alpha\beta}$.
If the tuples $(\chi^1,\rho^1)$ and $(\chi^2,\rho^2)$ are linearly dependent but not simultaneously zero,
then we linearly combine~$Q^1$ and~$Q^2$ and, if necessary, permute~$x$ and~$y$
to make $\chi^1\ne0$ and $\chi^2=\rho^2=0$.
Successively acting
by $\mathscr D^t_*(T)$ with $T_t=(\chi^1)^{-3}$,
by $\mathscr P^x_*(X^0)$ with $X^0_t=\alpha^1$ and
by $\mathscr R^x_*(W^1)$ with $W^1=-\sigma^1$, we set $\chi^1=1$, $\alpha^1=0$ and $\sigma^1=0$,
which results in $\alpha^2=-\rho^1\beta^2$.
The further simplifications are $\beta^1=0$ if $\rho^1\ne0$,
$\sigma^2=0$ if $\rho^1=0$ and $\beta^2\ne0$,
and no meaningful simplification is possible if $\rho^1=\beta^2=0$.
This gives the subalgebras
$\mathfrak s_{2.18}^{\rho\beta\sigma}$,
$\mathfrak s_{2.19}^{\beta^1\beta^2}$ and
$\mathfrak s_{2.20}^{\beta\sigma}$, respectively.
In the case $\chi^1=\rho^1=\chi^2=\rho^2=0$,
the consideration splits according to the additional conditions that
\begin{itemize}\itemsep=0ex
\item[$\circ$] $\alpha^1\beta^2\ne\alpha^2\beta^1$,
\item[$\circ$] $\alpha^1\beta^2=\alpha^2\beta^1$ but the tuples $(\alpha^1,\beta^1)$ and $(\alpha^2,\beta^2)$ are linearly independent,
\item[$\circ$] the tuples $(\alpha^1,\beta^1)$ and $(\alpha^2,\beta^2)$ are linearly dependent but not simultaneously zero,
\item[$\circ$] $\alpha^1=\beta^1=\alpha^2=\beta^2=0$, and $\sigma^1_t$ and $\sigma^2_t$ are linearly independent,
\item[$\circ$] $\alpha^1=\beta^1=\alpha^2=\beta^2=0$, and $\sigma^1_t$ and $\sigma^2_t$ are linearly dependent,
\end{itemize}
and after obvious simplifications,
we obtain the subalgebras
$\mathfrak s_{2.21}^{\alpha\beta^1\beta^2}$,
$\mathfrak s_{2.22}^{\alpha\beta\sigma}$,
$\mathfrak s_{2.23}^{\beta\sigma}$,
$\mathfrak s_{2.24}^\sigma$ and
$\mathfrak s_{2.25}^\sigma$.
\end{proof}

\begin{remark}
The statements of Lemmas~\ref{lem:dN1DInequivSubalgs} and~\ref{lem:dN2DInequivSubalgs}
should be interpreted in the following way.
Subalgebras from different families or within each of the families parameterized only by constants
are definitely $G$-inequivalent.
At the same time, there is an inessential equivalence between subalgebras within each of the families parameterized by functions
that does not allow us to further simplify the general form of subalgebras from the family.
For example, subalgebras~$\mathfrak s_{1.3}^\rho$ and~$\smash{\mathfrak s_{1.3}^{\tilde\rho}}$
are $G$-inequivalent if and only if
$\tilde\rho(t)=\rho(at+b)$ for some $a,b\in\mathbb R$ or
$\tilde\rho=(\rho(\hat T))^{-1}$, where $\hat T$ is the inverse of a solution~$T$ of the equation $T_t=c\rho^{-3}$ for some $c\in\mathbb R$.
\end{remark}

\begin{remark}
For the purpose of Lie reduction of the equation~\eqref{eq:dN}
to differential equations with less number of independent variables,
it would suffice to only classify one-dimensional subalgebras of rank one
and two-dimensional subalgebras of rank two.
Nevertheless, we decided to present the respective complete classifications
since they require not much more effort than the above partial classifications do.
Moreover, this is instructive given the fact
that the number of correct classifications of subalgebras of Lie algebras
(especially infinite-dimensional ones) in the literature is not great.
\end{remark}

Due to the analogy of the structures of $(\mathfrak g,G)$ and $(\mathfrak g_{\rm L},G_{\rm L})$,
we can easily obtain the classifications of one- and two-dimensional subalgebras of the algebra~$\mathfrak g_{\rm L}$
using Lemmas~\ref{lem:dN1DInequivSubalgs} and~\ref{lem:dN2DInequivSubalgs}, respectively.
Here we consider only one-dimensional subalgebras of~$\mathfrak g_{\rm L}$.

\begin{lemma}\label{lem:dNLaxPair1DInequivSubalgs}
A complete list of $G_{\rm L}$-inequivalent one-dimensional subalgebras of the algebra~$\mathfrak g_{\rm L}$
is exhausted by the following algebras:
\begin{gather*}
\bar{\mathfrak s}_{1.1}^{\delta\delta'}=\big\langle \bar D^t(1)+\delta\bar D^{\rm s}+\delta'\bar P^v\big\rangle_{\delta\delta'=0},\quad
\bar{\mathfrak s}_{1.2}                =\big\langle \bar D^{\rm s}                                  \big\rangle,\quad
\bar{\mathfrak s}_{1.3}^{\rho\delta}   =\big\langle \bar P^x(1)+\bar P^y(\rho)+\delta\bar P^v       \big\rangle,\\
\bar{\mathfrak s}_{1.4}^{\beta\delta}  =\big\langle \bar P^x(1)+\bar R^y(\beta)+\delta\bar P^v      \big\rangle,\quad
\bar{\mathfrak s}_{1.5}^{\beta\delta}  =\big\langle \bar R^x(1)+\bar R^y(\beta)+\delta\bar P^v      \big\rangle,\\
\bar{\mathfrak s}_{1.6}^{\delta}       =\big\langle \bar Z(t)+\delta\bar P^v                        \big\rangle,\quad
\bar{\mathfrak s}_{1.7}^{\delta}       =\big\langle \bar Z(1)+\delta\bar P^v                        \big\rangle,\quad
\bar{\mathfrak s}_{1.8}                =\big\langle \bar P^v                                        \big\rangle,
\end{gather*}
where $\delta,\delta'\in\{0,1\}$, and $\rho$ and $\beta$ run through the set of smooth functions of~$t$ with $\rho\ne0$.
\end{lemma}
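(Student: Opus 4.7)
The plan is to mimic the proof of Lemma~\ref{lem:dN1DInequivSubalgs} step by step, augmenting the generic element of~$\mathfrak g_{\rm L}$ by the extra summand $\mu\bar P^v$ with $\mu\in\mathbb R$, and to track how the two new elementary transformations in~$G_{\rm L}$ (the $v$-shift $\bar{\mathscr P}^v(B)$ and the sign alternation $\bar{\mathscr I}^v$), together with the strengthened scaling associated with $\bar{\mathscr D}^{\rm s}$ (which under the extension from~$G$ to~$G_{\rm L}$ acts on~$v$ by the factor $A=C^{3/2}$), simplify the parameter~$\mu$.

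First I would write
\begin{gather*}
Q=\bar D^t(\tau)+\lambda\bar D^{\rm s}+\bar P^x(\chi)+\bar P^y(\rho)+\bar R^x(\alpha)+\bar R^y(\beta)+\bar Z(\sigma)+\mu\bar P^v
\end{gather*}
for a generic nonvanishing element of~$\mathfrak g_{\rm L}$ and record the only new pushforward formulas needed: $\bar{\mathscr P}^v(B)_*\bar D^{\rm s}=\bar D^{\rm s}-\tfrac32 B\bar P^v$, which comes from the only additional nonzero commutation relation $[\bar P^v,\bar D^{\rm s}]=\tfrac32\bar P^v$, together with $\bar{\mathscr D}^{\rm s}_*(C)\bar P^v=C^{3/2}\bar P^v$ and $\bar{\mathscr I}^v_*\bar P^v=-\bar P^v$. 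All other pushforwards of the extended basis elements reduce, trivially in~$v$, to those already used in the proof of Lemma~\ref{lem:dN1DInequivSubalgs}.

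I would then run through the seven cases of that proof. In the two subcases in which the reduced representative contains a nonzero $\lambda\bar D^{\rm s}$ summand (namely, $\tau\ne0$ with $\lambda\ne0$, yielding the candidate~$\bar{\mathfrak s}_{1.1}^{10}$, and $\tau=0$ with $\lambda\ne0$, yielding~$\bar{\mathfrak s}_{1.2}$), the shift~$\bar{\mathscr P}^v(B)$ with $B=2\mu/(3\lambda)$ annihilates~$\mu$ outright; this is precisely what produces the constraint $\delta\delta'=0$ in~$\bar{\mathfrak s}_{1.1}^{\delta\delta'}$. In each of the remaining subcases one has $\lambda=0$, so the shift~$\bar{\mathscr P}^v(B)$ acts trivially on~$Q$ and the only freedom for modifying~$\mu$ is the combined action of~$\bar{\mathscr D}^{\rm s}_*(C)$ (which multiplies $\mu$ by~$C^{3/2}$) and the sign alternation~$\bar{\mathscr I}^v$; together they normalize $\mu\in\{0,1\}$ and produce the discrete parameter~$\delta$ in the families $\bar{\mathfrak s}_{1.1}^{0\delta'}$, $\bar{\mathfrak s}_{1.3}^{\rho\delta}$, $\bar{\mathfrak s}_{1.4}^{\beta\delta}$, $\bar{\mathfrak s}_{1.5}^{\beta\delta}$, $\bar{\mathfrak s}_{1.6}^\delta$ and $\bar{\mathfrak s}_{1.7}^\delta$. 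Finally, the case in which every $\mathfrak g$-component of~$Q$ vanishes while $\mu\ne0$ has no counterpart in Lemma~\ref{lem:dN1DInequivSubalgs}; here rescaling normalizes $\mu=1$ and yields the new algebra~$\bar{\mathfrak s}_{1.8}=\langle\bar P^v\rangle$.

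The main obstacle I expect is the bookkeeping in the $\lambda=0$ cases: rescaling~$v$ by~$A$ is tied to the simultaneous scaling of $(x,y,u)$ by $C=A^{2/3}$, which would undo the previous normalizations $\chi=1$, $\alpha=1$ or $\sigma\in\{t,1\}$ of the corresponding case in the proof of Lemma~\ref{lem:dN1DInequivSubalgs}. One has to show that this rescaling can be reabsorbed by a further $\bar{\mathscr D}^t(T)$-pushforward and by multiplication of the whole generator~$Q$ by a nonzero constant, so that the net effect is merely an inessential reparameterization of the arbitrary function $\rho$, $\beta$ or~$\sigma$ in the respective family together with the announced discrete normalization $\mu\in\{0,1\}$. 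This is the only delicate point in the argument; without it, the classification would seem to retain more free parameters than it actually does.
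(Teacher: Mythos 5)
Your proposal is correct and follows essentially the same route as the paper, which states the lemma without a separate proof and derives it from Lemma~\ref{lem:dN1DInequivSubalgs} via exactly this analogy: the extra generator $\bar P^v$, the shift $\bar{\mathscr P}^v(B)$ acting nontrivially only on $\bar D^{\rm s}$ through $[\bar P^v,\bar D^{\rm s}]=\tfrac32\bar P^v$ (whence the constraint $\delta\delta'=0$), and the normalization $\mu\in\{0,1\}$ by $v$-scalings and $\bar{\mathscr I}^v$. The ``delicate point'' you flag resolves even more simply than you suggest: since $\bar{\mathscr D}^{\rm s}_*$ multiplies the parameter functions $\chi,\rho$ (resp.\ $\alpha,\beta$, resp.\ $\sigma$) by $A^{2/3}$ (resp.\ $A^{4/3}$, resp.\ $A^2$) and $\bar P^v$ by $A$, dividing the transformed generator by the appropriate power of $A$ restores the earlier normalizations while rescaling $\mu$ by a nonzero factor and at worst multiplying the residual arbitrary function by a positive constant, which is an inessential reparameterization of the family.
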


\section{Trivial solutions}\label{sec:dNTrivialSolutions}

It is obvious that the equation~\eqref{eq:dN} is identically satisfied
on the solution set of the differential constraint $u_{xy}=0$
or, equivalently, on the set of functions of~$(t,x,y)$
with additive separation of the variables~$x$ and~$y$.
In other words, the equation~\eqref{eq:dN} has the solutions of the form
\begin{gather}\label{eq:dNTrivialSolutions}
\solution u=w(t,x)+\tilde w(t,y),
\end{gather}
where $w$ and~$\tilde w$ are sufficiently smooth functions of their arguments.
Calling these solutions trivial is justified by the fact that
the equation~\eqref{eq:dN} is a potential equation for the dispersionless Nizhnik system
$p_t=(h^1p)_x+(h^2p)_y$, $h^1_y=p_x$, $h^2_x=p_y$
with the relation $p=u_{xy}$, $h^1=u_{xx}$, $h^2=u_{yy}$,
and thus a solution is of the form~\eqref{eq:dNTrivialSolutions} for the equation~\eqref{eq:dN}
if and only if it corresponds to a solution of the dispersionless Nizhnik system with zero principal component~$p$.

Within the family~\eqref{eq:dNTrivialSolutions},
there is the subfamily of solutions satisfying the differential constraints
$u_{xy}=u_{xxxx}=u_{yyyy}=0$ and $u_{xxx}=u_{yyy}$ and thus having the form
\begin{gather}\label{eq:dNMoreTrivialSolutions}
u=W^5(t)(x^3+y^3)+W^3(t)x^2+W^4(t)y^2+W^1(t)x+W^2(t)y+W^0(t),
\end{gather}
where the coefficients~$W^0$, \dots, $W^5$ are arbitrary sufficiently smooth functions of~$t$.
The solutions from the subfamily~\eqref{eq:dNMoreTrivialSolutions} are even more trivial
than general elements of the family~\eqref{eq:dNTrivialSolutions}
since each solution of the form~\eqref{eq:dNMoreTrivialSolutions}
is $G$-equivalent to the constant zero solution $u=0$.

The above trivial solutions of the equation~\eqref{eq:dN}
often arise in the course of its Lie reductions.
Identifying such solutions among constructed ones and neglecting them
in addition to listing solutions up to the $G$-equivalence allow us
to better arrange the found families of invariant solutions.
Note that modulo the $G$-equivalence,
we can arbitrarily change or neglect summands of the form $W^1(t)x+W^2(t)y+W^0(t)$
in any solution of the equation~\eqref{eq:dN}.

\section{Lie reductions of codimension one}\label{sec:dNLieReductionsOfCodim1}

Among subalgebras listed in Lemma~\ref{lem:dN1DInequivSubalgs},
only subalgebras~$\mathfrak s_{1.1}^\delta$, $\mathfrak s_{1.2}$, $\mathfrak s_{1.3}^\rho$ and~$\mathfrak s_{1.4}^\beta$
are appropriate to be used for Lie reduction of the equation~\eqref{eq:dN}.
We collect $G$-inequivalent codimension-one Lie reductions of the equation~\eqref{eq:dN} in Table~\ref{tab:LieReductionsOfCodim1}.
There, for each of the above one-dimensional subalgebras of~$\mathfrak g$,
we present a constructed ansatz for $u$,
the corresponding reduced partial differential equation in two independent variables,
where $w=w(z_1,z_2)$ is the new unknown function of the invariant independent variables~$(z_1,z_2)$.
The subscripts~1 and~2 of~$w$ denote the differentiation with respect to~$z_1$ and~$z_2$, respectively.

\begin{table}[!ht]
\begin{center}\small
\caption{\footnotesize $G$-inequivalent Lie reductions with respect to one-dimensional subalgebras of~$\mathfrak g$.
\strut}\label{tab:LieReductionsOfCodim1}
\renewcommand{\arraystretch}{1.5}
\begin{tabular}{|l|c|c|c|l|}
\hline
\hfil $\subset\mathfrak g$ &  $u$      &   $z_1$ &  $z_2$        &\hfil Reduced equation      \\
\hline
$\mathfrak s_{1.1}^\delta$ & ${\rm e}^{3\delta t}w-\frac16\delta(x^3+y^3)$ & ${\rm e}^{-\delta t}x$ & ${\rm e}^{-\delta t}y$ &
$(w_{11}w_{12})_1+(w_{12}w_{22})_2=3\delta w_{12}$\\
$\mathfrak s_{1.2}$        & $x^3w$ &  $t$ & $y/x$  & $(z_2w_{22}-2w_2)_1=\big((z_2w_{22}-2w_2)w_{22}\big)_2$ \\[-.7ex]
&&&&\ \ ${}-(z_2\p_2-2)\big((z_2w_{22}-2w_2)(z_2^{\,2}w_{22}-4z_2w_2+6w)\big)$\\
$\mathfrak s_{1.3}^\rho$   & $w-\frac16\rho_t\rho^{-1}y^3$ &  $t$ & $\rho^{-1}y-x$  &  $w_{122}+2(1-\rho^{-3})w_{22}w_{222}=0$\\
$\mathfrak s_{1.4}^\beta$  & $w+\beta xy$ &  $t$ & $y$ & $\beta w_{222}=\beta_1$\\
\hline
\end{tabular}
\end{center}
\end{table}

We study each of the listed reduced equations separately,
indexing it by the number of the corresponding one-dimensional subalgebra of~$\mathfrak g$.

\medskip\par\noindent{\bf 1.1.}\
$\mathfrak s_{1.1}^\delta=\big\langle D^t(1)+\delta D^{\rm s}\big\rangle$, $\delta\in\{0,1\}$ $(\!{}\bmod G)$.
The maximal Lie invariance algebra of reduced equation 1.1$^\delta$ is%
\footnote{%
In contrast to reduced equation 1.1$^1$,
its counterpart with $\delta=0$ loses the property of maximal rank
on the submanifold~$\mathcal M_0$ of the manifold~$\mathcal M$.
Here the manifold~$\mathcal M$ is defined by this equation
in the jet space ${\rm J}^3(\mathbb R^2_{z_1,z_2}\times\mathbb R_w)$ and
the submanifold~$\mathcal M_0$ is singled out in~$\mathcal M$
by the consistent system $w_{11}=w_{12}=w_{22}=0$, $w_{112}=w_{122}=w_{111}+w_{222}=0$.
It can be proved by the classical infinitesimal method that
the maximal Lie invariance algebra of the complement $\mathcal M\setminus\mathcal M_0$
of~$\mathcal M_0$ in~$\mathcal M$ coincides with the algebra~$\mathfrak a_{1.1}^0$.
At the same time, the submanifold~$\mathcal M_0$ is also invariant with respect to this algebra.
Therefore, the maximal Lie invariance algebra of reduced equation 1.1$^0$
is indeed the algebra~$\mathfrak a_{1.1}^0$.
}
\begin{gather*}
\mathfrak a_{1.1}^0=\langle D^z,\,w\p_w,\,\p_{z_1},\,\p_{z_2},\,z_1\p_w,\,z_2\p_w,\,\p_w\rangle
\quad\mbox{if}\quad \delta=0,
\\
\mathfrak a_{1.1}^1=\langle\tilde D^z,\,\p_{z_1},\,\p_{z_2},\,z_1\p_w,\,z_2\p_w,\,\p_w\rangle
\quad\mbox{if}\quad \delta=1.
\end{gather*}
Here $D^z:=z_1\p_{z_1}+z_2\p_{z_2}$ and $\tilde D^z:=z_1\p_{z_1}+z_2\p_{z_2}+3w\p_w$.
All their elements are induced by Lie symmetries of the original equation~\eqref{eq:dN}.
Indeed, the normalizer of the subalgebra~$\mathfrak s_{1.1}^\delta$ in~$\mathfrak g$ is
\begin{gather*}
{\rm N}_{\mathfrak g}(\mathfrak s_{1.1}^0)=\langle D^t(1),D^t(t),D^{\rm s},P^x(1),P^y(1),R^x(1),R^y(1),Z(1)\rangle
\quad\mbox{if}\quad \delta=0,
\\
{\rm N}_{\mathfrak g}(\mathfrak s_{1.1}^1)=\langle D^t(1),D^{\rm s},
P^x({\rm e}^t),P^y({\rm e}^t),R^x({\rm e}^{2t}),R^y({\rm e}^{2t}),Z({\rm e}^{3t})\rangle
\quad\mbox{if}\quad \delta=1,
\end{gather*}
see a similar computation in \cite[Section~3]{vane2021a}.
The Lie-symmetry vector fields $D^t(1)+\delta D^{\rm s}$, $D^{\rm s}$,
$P^x({\rm e}^{\delta t})$, $P^y({\rm e}^{\delta t})$, $R^x({\rm e}^{2\delta t})$, $R^y({\rm e}^{2\delta t})$, $Z({\rm e}^{3\delta t})$
and, for $\delta=0$, $3D^t(t)$
of the equation~\eqref{eq:dN}
induce the Lie-symmetry vector fields 0, $\tilde D^z$,
$\p_{z_1}$, $\p_{z_2}$, $z_1\p_w$, $z_2\p_w$, $\p_w$ and, for $\delta=0$, $D^z$
of reduced equation 1.1, respectively.

Therefore, any two-step Lie reduction of the equation~\eqref{eq:dN} to an ordinary differential equation,
where the first step is reduction~1.1 and the second step is a Lie reduction of reduced equation~1.1,
is equivalent to a direct Lie reduction to an ordinary differential equation
using a two-dimensional subalgebra of~$\mathfrak g$.
This means that there is no need to carry out Lie reductions of reduced equation~1.1.

For each $\delta\in\{0,1\}$,
let us compute the point-symmetry group~$G_{1.1}^\delta$ of the reduced equation 1.1$^\delta$
by the algebraic method.
Up to the antisymmetry of the Lie bracket,
the nonzero commutation relations between the basis vector fields of the algebra~$\mathfrak a:=\mathfrak a_{1.1}^\delta$
are exhausted by
\begin{gather*}
\begin{split}&
[D^z,\p_{z_1}]=-\p_{z_1},\quad
[D^z,\p_{z_2}]=-\p_{z_2},\quad
[D^z,z_1\p_w]=z_1\p_w,\quad
[D^z,z_2\p_w]=z_2\p_w,\\[.5ex]&
[w\p_w,z_1\p_w]=-z_1\p_w,\quad
[w\p_w,z_2\p_w]=-z_2\p_w,\quad
[w\p_w,\p_w]=-\p_w,\\[.5ex]&
[\p_{z_1},z_1\p_w]=\p_w,\quad
[\p_{z_2},z_2\p_w]=\p_w,
\end{split}
\end{gather*}
and
\begin{gather*}
\begin{split}&
[\tilde D^z,\p_{z_1}]=-\p_{z_1},\quad
[\tilde D^z,\p_{z_2}]=-\p_{z_2},\quad
[\tilde D^z,z_1\p_w]=-2z_1\p_w,\quad
[\tilde D^z,z_2\p_w]=-2z_2\p_w,\\[.5ex]&
[\tilde D^z,\p_w]=-3\p_w,\quad
[\p_{z_1},z_1\p_w]=\p_w,\quad
[\p_{z_2},z_2\p_w]=\p_w
\end{split}
\end{gather*}
if $\delta=0$ and $\delta=1$, respectively.
We first find megaideals of the algebra~$\mathfrak a$
applying techniques that do not require the knowledge of the automorphism group ${\rm Aut}(\mathfrak a)$ \cite{bihl2015a,popo2003a}.
Then we use the constructed megaideals for simplifying the computation of ${\rm Aut}(\mathfrak a)$
and obtain the remaining megaideals.
As a result, the complete list of proper megaideals of~$\mathfrak a$ is as follows:
\begin{gather*}
\mathfrak m_1:=\mathfrak a'=\big\langle\p_{z_1},\,\p_{z_2},\,z_1\p_w,\,z_2\p_w,\,\p_w\big\rangle,\quad
\mathfrak m_2:=\mathfrak a''=\mathfrak z(\mathfrak m_1)=\big\langle\p_w\big\rangle,
\\
\mathfrak m_3:=\mathrm C_{\mathfrak a}(\mathfrak m_2)=\big\langle D^z\big\rangle\dotplus\mathfrak m_1,\quad
\mathfrak m_4:=\big\langle D^z+2w\p_w\big\rangle\dotplus\mathfrak m_1 \quad\mbox{if}\quad\delta=0,
\\
\mathfrak m_3:=\big\langle z_1\p_w,\,z_2\p_w,\,\p_w\big\rangle,\quad
\mathfrak m_4:=\big\langle \p_{z_1},\,\p_{z_2},\,\p_w\big\rangle \quad\mbox{if}\quad\delta=1.
\end{gather*}
Denote $\mathfrak m_0:=\mathfrak a$.
Let a point transformation~$\Phi$: $(\tilde z_1,\tilde z_2,\tilde w)=(Z^1,Z^2,W)$
in the space with the coordinates $(z_1,z_2,w)$,
where $(Z^1,Z^2,W)$ is a tuple of smooth functions of $(z_1,z_2,w)$ with nonvanishing Jacobian,
preserve the equation~\eqref{eq:dN}.
Necessary conditions for this are $\Phi_*\mathfrak m_k\subseteq\mathfrak m_k$, $k=0,\dots,4$.
Hereafter the indices~$i$ and~$j$ run from~1 to~2, and we assume summation with respect to repeated indices.
The conditions $\Phi_*\p_w\in\mathfrak m_2$ and $\Phi_*(z_i\p_w)\in\mathfrak m_1$
imply that
\[
Z^i=a_{ij}z_j+a_{i0}, \quad W=cw+W^0(z_1,z_2),
\]
where~$a_{ij}$, $a_{i0}$ and~$c$ are constants with $c\det(a_{ij})\ne0$,
and $W^0$ is a smooth function of~$(z_1,z_2)$.
Then the conditions
$\Phi_*\p_{z_i}\in\mathfrak m_4$ if $\delta=1$ or
$\Phi_*\p_{z_i}\in\mathfrak m_1$ and $\Phi_*D^z\in\mathfrak m_3$ if $\delta=0$
in addition give that $W^0=b_iz_i+b_0$ for some constants~$b_i$ and~$b_0$.
Since no further constraints on~$\Phi$ can be derived within the framework of the algebraic method,
we continue the computation with the direct method, obtaining
$a_{11}=a_{22}$, $a_{12}=a_{21}$, $a_{11}a_{12}=0$, $(a_{11},a_{12})\ne(0,0)$
and, if $\delta=1$, $c=a^3$, where $a$ is the nonzero value among~$a_{11}$ and~$a_{12}$.
This means that there are exactly two independent,
up to composing with each other and with continuous point symmetry transformations of the equation~\eqref{eq:dN},
discrete point symmetry transformations of this equation.
They are the involutions, the one that permutes the independent variables~$z_1$ and~$z_2$,
${(\tilde z_1,\tilde z_2,\tilde w)=(z_2,z_1,w)}$,
and the one that simultaneously alternates the signs of all the variables,
$(\tilde z_1,\tilde z_2,\tilde w)=(-z_1,-z_2,-w)$.
These transformations are respectively induced by the discrete point symmetries~$\mathscr J$ and~$\mathscr I^{\rm s}$
of the original equation~\eqref{eq:dN}.
The fact of inducing all Lie symmetries of reduced equations~1.1$^\delta$
follows from that for the corresponding Lie invariance algebras~$\mathfrak a_{1.1}^\delta$.
Therefore, for each $\delta\in\{0,1\}$
the group~$G_{1.1}^\delta$ is entirely induced by the stabilizer of~$\mathfrak s_{1.1}^\delta$ in~$G$.

The subalgebra~$\mathfrak s_{1.1}^\delta$ of~$\mathfrak g$
is associated with the subalgebra(s)~$\bar{\mathfrak s}_{1.1}^{\delta\delta'}$ of~$\mathfrak g_{\rm L}$
with $\delta\delta'=0$.
The extension of ansatz~1.1 to~$v$ is \smash{$v={\rm e}^{\frac32\delta t}q+\delta't$}.
Here and in the next case,
$q=q(z_1,z_2)$ is the invariant unknown function that replaces~$v$,
and, as for~$w$, the subscripts~1 and~2 of~$q$ denote the differentiation with respect to~$z_1$ and~$z_2$, respectively.

The corresponding family of reduced systems for the nonlinear Lax representation~\eqref{eq:dNLaxPair}
is associated with the subalgebra family~$\bar{\mathfrak s}_{1.1}^{\delta\delta'}$
from Lemma~\ref{lem:dNLaxPair1DInequivSubalgs}
and consists of the systems
\begin{gather}\label{eq:dNReducedEq1.1Lax}
\frac13(q_1^3+q_2^3)+q_1w_{11}+q_2w_{22}=\frac32\delta q+\delta',\quad w_{12}=-q_1q_2,
\end{gather}
each of which can be interpreted 
as a nonlinear Lax representation for reduced equation~1.1 with the same value of~$\delta$,
cf.\ the introductive part of~\cite[Section~4.1]{moro2021a}.
In other words, reduced equation~1.1 is the compatibility condition of the system~\eqref{eq:dNReducedEq1.1Lax}
with respect to~$q$.
Note that for~$\delta=0$ we thus construct two inequivalent nonlinear Lax representations,
with $\delta'=0$ and with $\delta'=1$.

Reduced equation~1.1$^0$ is just the stationary dispersionless Nizhnik equation.
Note that its counterpart with dispersion was studied in~\cite{marv2003a}.

\begin{remark}\label{rem:OnInessEquivOfMultistepReductions}
A complete list of $G_{1.1}^1$-inequivalent one-dimensional subalgebras of the algebra~$\mathfrak a_{1.1}^1$
is exhausted by the subalgebras
\[
\mathfrak b_1=\langle\tilde D^z\rangle,\quad
\mathfrak b_2^{\nu\kappa\varsigma}=\langle\p_{z_1}+\nu\p_{z_2}+\kappa z_1\p_w+\varsigma z_2\p_w\rangle,\quad
\mathfrak b_3^\nu=\langle z_1\p_w+\nu z_2\p_w\rangle,\quad
\mathfrak b_4=\langle\p_w\rangle,
\]
where $\nu\in[-1,1]$, and $\varsigma\ne0$ if $\nu\in\{-1,1\}$ and $(\kappa,\varsigma)\ne(0,0)$,
cf.\ \cite[Proposition~2]{moro2021a}.
A further gauging of subalgebra parameters is possible only within the second family $\{\mathfrak b_2^{\nu\kappa\varsigma}\}$,
where one of the parameters~$\kappa$ or~$\varsigma$, if nonzero, can be set to be equal to~1 up to the $G_{1.1}^1$-equivalence.
At the same time, the subalgebra~$\mathfrak b_2^{\nu\kappa\varsigma}$ is induced by
the subalgebra
\[
\check{\mathfrak b}_2^{\nu\kappa\varsigma}=\langle D^t(1)+\delta D^{\rm s},\,
P^x({\rm e}^{\delta t})+\nu P^y({\rm e}^{\delta t})+\kappa R^x({\rm e}^{2\delta t})+\varsigma R^y({\rm e}^{2\delta t})
\rangle$ of~$\mathfrak g,
\]
which is $G$-equivalent to the subalgebra~$\mathfrak s_{2.14}^{\delta\nu\delta'}$,
where $\delta'=0$ if $\kappa=\varsigma$ and $\delta'=1$ otherwise.
In other words, if $\tilde\nu=\nu$, the tuples $(\kappa,\varsigma)$ and $(\tilde\kappa,\tilde\varsigma)$
are not proportional with a nonzero multipliers and
$\kappa-\varsigma$ and $\tilde\kappa-\tilde\varsigma$ are simultaneously either are equal to zero or are not,
then the subalgebras~$\mathfrak b_2^{\nu\kappa\varsigma}$ and~\smash{$\mathfrak b_2^{\tilde\nu\tilde\kappa\tilde\varsigma}$}
of~$\mathfrak a_{1.1}^1$ are $G_{1.1}^1$-inequivalent,
whereas the associated subalgebras
$\check{\mathfrak b}_2^{\nu\kappa\varsigma}$ and~\smash{$\check{\mathfrak b}_2^{\tilde\nu\tilde\kappa\tilde\varsigma}$}
are $G$-equivalent.
This is why the inequivalent two-step reductions, where the first step is reduction~1.1$^1$
and the second step is based on subalgebras~$\mathfrak b_2^{\nu\kappa\varsigma}$
and~\smash{$\mathfrak b_2^{\tilde\nu\tilde\kappa\tilde\varsigma}$ of~$\mathfrak a_{1.1}^1$}
with the above constraints on the subalgebra parameters,
definitely results in $G$-equivalent families of invariant solutions
of the dispersionless Nizhnik equation~\eqref{eq:dN}, cf.\ \cite[Section~4.1.1.2]{moro2021a}.
The above phenomenon has not been described in the literature.
\end{remark}

\noindent{\bf 1.2.}\
$\mathfrak s_{1.2}=\big\langle D^{\rm s}\big\rangle$.
The same conclusion on the superfluousness of two-step Lie reduction can be made for reduced equation 1.2.
Its maximal Lie invariance algebra is
\begin{gather*}
\mathfrak a_{1.2}=\big\langle\breve D(\tau)\big\rangle
\quad\mbox{with}\quad \breve D(\tau):=\tau\p_{z_1}-\left(\tau_1w+\frac1{18}\tau_{11}(z_2^{\,3}+1)\right)\p_w.
\end{gather*}
Here and in what follows the parameter function~$\tau$ runs through the set of smooth functions of~$z_1$.
It is obvious that the normalizer of the subalgebra~$\mathfrak s_{1.2}$ in~$\mathfrak g$ is
$
{\rm N}_{\mathfrak g}(\mathfrak s_{1.2})=\langle D^t(\tau),D^{\rm s}\rangle.
$
The Lie-symmetry vector field $D^t(\tau)$ of the equation~\eqref{eq:dN}
induces the element of~$\mathfrak a_{1.2}$ with the same value of the parameter function~$\tau$,
whereas $D^{\rm s}$ is mapped to 0.
In other words, the entire maximal Lie invariance algebra~$\mathfrak a_{1.2}$ of reduced equation 1.2
is induced by ${\rm N}_{\mathfrak g}(\mathfrak s_{1.2})$.
Therefore, similarly to reduced equation~1.1,
further Lie reductions of reduced equation 1.2 are needless.

In fact, all point symmetries of reduced equation 1.2
are induced by point symmetries of the original equation~\eqref{eq:dN}.
To show this, we compute the point-symmetry group  of reduced equation~1.2
using the most general version of the algebraic method.
Again, consider a point transformation~$\Phi$: $(\tilde z_1,\tilde z_2,\tilde w)=(Z^1,Z^2,W)$
in the space with the coordinates $(z_1,z_2,w)$,
where $(Z^1,Z^2,W)$ is a tuple of smooth functions of $(z_1,z_2,w)$ with nonvanishing Jacobian.
The algebra~$\mathfrak a:=\mathfrak a_{1.2}$ is infinite-dimensional and contains no proper megaideals.
This is why the only convenient necessary condition for the transformation~$\Phi$ to preserve the equation~\eqref{eq:dN}
is $\Phi_*\mathfrak a\subseteq\mathfrak a$,
which expands to $\Phi_*\breve D(\tau)=\breve D(\tilde\tau)$.
Componentwise splitting the latter condition with each of the specific values $\tau=z_1^{\,i}$, $i=0,\dots,3$,
$\Phi_*\breve D(z_1^{\,i})=\breve D(\tilde\tau^i)$,
and recombining the derived determining equations for the components of~$\Phi$,
we in particular obtain the equation
$\tilde\tau^3-3z_1\tilde\tau^2+3z_1^{\,2}\tilde\tau^1-z_1^{\,3}\tilde\tau^0=0$.
Since at most one function among $\tilde\tau^i$, $i=0,\dots,3$, can be constant,
this equation can be solved with respect to~$Z^1$, giving $Z^1=Z^1(z_1)$.
It is obvious that reduced equation~1.2 admits the discrete point symmetry
$\breve{\mathscr I}^1$: $(\tilde z_1,\tilde z_2,\tilde w)=(-z_1,z_2,-w)$,
which is induced by the discrete point symmetry~$\mathscr I^{\rm i}\circ\mathscr I^{\rm s}$ of the equation~\eqref{eq:dN}.
Up to factoring out the transformation~$\breve{\mathscr I}^1$ and Lie symmetries of reduced equation~1.2,
each of which is also induced, we can assume that $Z^1=z_1$.
For this restricted form of~$\Phi$, we have $\Phi_*\breve D(\tau)=\breve D(\tau)$.
Splitting this condition componentwise and with respect to the parameter function~$\tau$
and its derivatives~$\tau_{z_1}$ and~$\tau_{z_1z_1}$ leads to the equations
$Z^2_{z_1}=Z^2_w=0$, $W_{z_1}=0$, $wW_w=W$ and $(z_2^{\,3}+1)W_w=(Z^2)^3+1$.
Therefore, $Z^2=Z^2(z_2)$ and $W=W^1(z_2)w$ with $Z^2_{z_2}\ne0$ and $W^1:=\big((Z^2)^3+1\big)/(z_2^{\,3}+1)$.
To derive more constraints on~$\Phi$, we should continue the computation with the direct method.
This only gives two solutions, $Z^2=z_2$ and $Z^2=1/z_2$,
which correspond to the identity transformation and the discrete point symmetry
$\breve{\mathscr J}$: $(\tilde z_1,\tilde z_2,\tilde w)=(z_1,z_2^{-1},z_2^{-3}w)$.
The transformation~$\breve{\mathscr J}$ is induced by the discrete point symmetry~$\mathscr J$ of the equation~\eqref{eq:dN}.
Therefore, the entire point-symmetry group of reduced equation~1.2
is induced by the stabilizer of~$\mathfrak s_{1.2}$ in~$G$.

The associated subalgebra of~$\mathfrak g_{\rm L}$ for the subalgebra~$\mathfrak s_{1.2}$ of~$\mathfrak g$
is $\bar{\mathfrak s}_{1.2}$.
The $v$-component of the extension of ansatz~1.2 is $v=|x|^{3/2}q$.
The corresponding reduced system for the nonlinear Lax representation~\eqref{eq:dNLaxPair}~is
\begin{gather*}
q_1=\frac\varepsilon3\left(\frac32q-z_2q_2\right)^3+\frac\varepsilon3q_2^{\,3}
+(z_2^{\,2}w_{22}-4z_2w_2+6w)\left(\frac32q-z_2q_2\right)+w_{22}q_2,
\\
\varepsilon q_2\left(\frac32q-z_2q_2\right)=z_2w_{22}-2w_2
\end{gather*}
with $\varepsilon=\sgn x$,
which can be interpreted, after solving with respect to~$(q_1,q_2)$,
as a nonlinear Lax representation for reduced equation~1.2,
cf.\ \cite[Section~4.2]{moro2021a} up to typos.

\medskip\par\noindent{\bf 1.3.}\
$\mathfrak s_{1.3}^\rho=\big\langle P^x(1)+P^y(\rho)\big\rangle$ with $\rho=\rho(t)\ne0$.

If $\rho\equiv1$, then reduced equation~1.3 degenerates to $w_{122}=0$,
and its general solution is $w=f(z_2)+\varrho^1(z_1)z_2+\varrho^0(z_1)$,
where~$\varrho^0$ and~$\varrho^1$ are arbitrary functions of~$z_1=t$,
and $f$ is an arbitrary sufficiently smooth function of~$z_2=y-x$, cf.\ \cite[Eq.~(60)]{moro2021a}.
Up to the $G$-equivalence, the coefficients~$\varrho^0$ and~$\varrho^1$ can be assumed to vanish.
The corresponding family of solutions of~\eqref{eq:dN}~is
\begin{gather}\label{eq:dNs1.3Rho1InvarSolutions}
\solution u=f(y-x).
\end{gather}

For any $\rho$ with $\rho\not\equiv1$,
meaning that $\rho\not\equiv1$ on each open interval of the domain of~$\rho$, we can use
the change of independent variables $\tilde z_1=2\int(1-\rho^{-3})\,{\rm d}z_1$, $\tilde z_2=z_2$
for modifying ansatz~1.3$^\rho$ and reduced equation~1.3$^\rho$ to the form
\begin{gather}\nonumber
u=w(\tilde z_1,\tilde z_2)-\frac{\rho_t}{6\rho}y^3, \quad
\tilde z_1=2\int\frac{\rho^3-1}{\rho^3}{\rm d}t,\quad
\tilde z_2=\frac y\rho-x,
\\\label{eq:dNs1.3RhoNe1ModifiedRedEq}
w_{\tilde z_1\tilde z_2\tilde z_2}+w_{\tilde z_2\tilde z_2}w_{\tilde z_2\tilde z_2\tilde z_2}=0.
\end{gather}
Thus, the class of reduced equations~1.3$^\rho$
associated with the subalgebra family $\{\mathfrak s_{1.3}^\rho\mid\rho\not\equiv1\}$
collapses to the unary class, whose single element is the equation~\eqref{eq:dNs1.3RhoNe1ModifiedRedEq}.
In other words, the $G$-inequivalent subalgebras $\mathfrak s_{1.3}^\rho$ with $\rho\not\equiv1$
lead to pairwise similar reduced equations, which take the same form~\eqref{eq:dNs1.3RhoNe1ModifiedRedEq}
if appropriate ansatzes are chosen.
Nevertheless, we prefer to use ansatzes~1.3$^\rho$ from Table~\ref{tab:LieReductionsOfCodim1}
since otherwise Case~1.3 splits into two cases,
and without the above explanation, the modified ansatz looks artificial.

The substitution~$w_{\tilde z_2\tilde z_2}=h$ maps the modified reduced equation~\eqref{eq:dNs1.3RhoNe1ModifiedRedEq}
to the inviscid Burgers equation \[h_{\tilde z_1}+hh_{\tilde z_2}=0,\]
which is the simplest nonlinear transport equation, called also Hopf's equation.
An implicit representation of the general solution of this equation is well known, $F(h,\tilde z_2-h\tilde z_1)=0$,
where $F$ is an arbitrary nonconstant sufficiently smooth function of its arguments.
Modulo the $G$-equivalence, we can assume that $w$ is a fixed second antiderivative of~$h$ with respect to~$\tilde z_2$.
As a result, we construct a family of solutions of~\eqref{eq:dN}
expressed in terms of quadratures with an implicitly defined function,
\begin{gather}\label{eq:dNs1.3InvarSolutions}
\solution u=
\int\left(\int h(\tilde z_1,\tilde z_2)\,{\rm d}\tilde z_2\right){\rm d}\tilde z_2
-\frac{\rho_t}{6\rho}y^3, \quad
\tilde z_1:=2\int\frac{\rho^3-1}{\rho^3}{\rm d}t,\quad
\tilde z_2:=\frac y\rho-x,
\end{gather}
where $\rho$ is an arbitrary sufficiently smooth function of~$t$
that does not coincide with the constant functions~0 and~1,
and the function $h=h(\tilde z_1,\tilde z_2)$ is implicitly defined by the equation ${F(h,\tilde z_2-h\tilde z_1)=0}$
with an arbitrary nonconstant sufficiently smooth function~$F$ of its arguments.

Lie and generalized symmetries, cosymmetries, conservation-law characteristics and conservation laws
of the inviscid Burgers equation were exhaustively described in Sections~3 and~6 of~\cite{vane2021a},
see also \cite[Appendix]{baik1991b} for the first computation of the generalized symmetries of this equation.
Via the substitution~$h(\tilde z_1,\tilde z_2)=w_{22}(z_1,z_2)$
with $\tilde z_1=2\int(1-\rho^{-3})\,{\rm d}z_1$, $\tilde z_2=z_2$,
this results in finding many hidden symmetry-like objects for the equation~\eqref{eq:dN}.

The normalizer ${\rm N}_{\mathfrak g}(\mathfrak s_{1.3}^\rho)$ of the subalgebra~$\mathfrak s_{1.3}^\rho$ in~$\mathfrak g$
depends on the value of~$\rho$,
\begin{gather*}
{\rm N}_{\mathfrak g}(\mathfrak s_{1.3}^\rho)=
\big\langle D^{\rm s},\,P^x(1),\,P^y(\rho),\,R^y(\beta)-R^x(\rho\beta),\,Z(\sigma)\big\rangle
\quad\mbox{if}\quad \rho_t\ne0,
\\
{\rm N}_{\mathfrak g}(\mathfrak s_{1.3}^\rho)=
\big\langle D^t(1),\,D^t(t),\,D^{\rm s},\,P^x(1),\,P^y(\rho),\,R^y(\beta)-R^x(\rho\beta),\,Z(\sigma)\big\rangle
\quad\mbox{if}\quad \rho_t=0,
\end{gather*}
where $\beta$ and~$\sigma$ run through the set of smooth functions of~$t$.
The maximal Lie invariance algebra of the modified reduced equation~\eqref{eq:dNs1.3RhoNe1ModifiedRedEq} is
\begin{gather*}
\begin{split}
\mathfrak a_{1.3}=\big\langle&
\p_{\tilde z_1},\,
\tilde z_1\p_{\tilde z_1}-w\p_w,\,
\tilde z_1^2\p_{\tilde z_1}+\tilde z_1\tilde z_2\p_{\tilde z_2}+(\tilde z_1w+\tfrac16\tilde z_2^{\,3})\p_w,\\&
\tilde z_2\p_{\tilde z_2}+3w\p_w,\,
\p_{\tilde z_2},\,
\tilde z_1\p_{\tilde z_2}+\tfrac12\tilde z_2^{\,2}\p_w,\,
\tilde\alpha(\tilde z_1)\tilde z_2\p_w,\,
\tilde\sigma(\tilde z_1)\p_w
\big\rangle,
\end{split}
\end{gather*}
where $\tilde\alpha$ and~$\tilde\sigma$ run through the set of smooth functions of~$\tilde z_1$.
The vector fields~$D^{\rm s}$, $P^x(1)+P^y(\rho)$, $P^y(\rho)$, $R^y(\beta)-R^x(\rho\beta)$, $Z(\sigma)$
and, if $\rho_t=0$, $D^t(1)$ and~$D^t(t)$ from \smash{${\rm N}_{\mathfrak g}(\mathfrak s_{1.3}^\rho)$}
induce the Lie-symmetry vector fields
$\tilde z_2\p_{\tilde z_2}+3w\p_w$, 0, $\p_{\tilde z_2}$,
$\tilde\alpha\tilde z_2\p_w$ with $\tilde\alpha(\tilde z_1)=\rho(t)\beta(t)$,
$\tilde\sigma\p_w$ with $\tilde\sigma(\tilde z_1)=\sigma(t)$ and, if $\rho_t=0$, 
$\p_{\tilde z_1}$ and $\tilde z_1\p_{\tilde z_1}+\frac13\tilde z_2\p_{\tilde z_2}$
of the modified reduced equation~\eqref{eq:dNs1.3RhoNe1ModifiedRedEq}, respectively.
All the elements of~$\mathfrak a_{1.3}$
from the set complement of the linear span of the above vector fields from~$\mathfrak a_{1.3}$
are genuinely hidden symmetries of the equation~\eqref{eq:dN}.
Note that whether the vector fields
$\p_{\tilde z_1}$ and $\tilde z_1\p_{\tilde z_1}+\frac13\tilde z_2\p_{\tilde z_2}$
are genuinely hidden symmetries of~\eqref{eq:dN} depends on the value of the parameter function~$\rho$,
which does not appear in the modified reduced equation~\eqref{eq:dNs1.3RhoNe1ModifiedRedEq}
and in its maximal Lie invariance algebra~$\mathfrak a_{1.3}$.

In view of the representation~\eqref{eq:dNs1.3InvarSolutions}
for all $\mathfrak s_{1.3}^\rho$-invariant solutions of the equation~\eqref{eq:dN},
we do not carry out further Lie reductions of the modified reduced equation~\eqref{eq:dNs1.3RhoNe1ModifiedRedEq}
with respect to subalgebras of~$\mathfrak a_{1.3}$
although most of them are associated with hidden Lie symmetries of~\eqref{eq:dNs1.3RhoNe1ModifiedRedEq}.
At the same time, in Section~\ref{sec:dNLieReductionsOfCodim2Collection1}
we exhaustively study essential direct Lie reductions of~\eqref{eq:dNs1.3RhoNe1ModifiedRedEq}
with respect to two-dimensional subalgebras of~$\mathfrak g$
that can be interpreted as two-step Lie reductions with reduction~1.3 as the first step.

\medskip\par\noindent{\bf 1.4.}\
$\mathfrak s_{1.4}^\beta=\big\langle P^x(1)+R^y(\beta)\big\rangle$ with $\beta=\beta(t)$.
Each reduced equation~1.4$^\beta$ is trivial, see~\cite[Section~4.4]{moro2021a}.
Its general solution is an arbitrary sufficiently smooth function of~$(z_1,z_2)$ if $\beta=0$
and $w=\frac16\beta_1\beta^{-1}z_2^3+\varrho^2(z_1)z_2^2+\varrho^1(z_1)z_2+\varrho^0(z_1)$
with arbitrary sufficiently smooth functions~$\varrho^0$, $\varrho^1$ and~$\varrho^2$ of~$z_1=t$ if $\beta\ne0$.
The case $\beta=0$ leads to the solution family $u=w(t,y)$ of~\eqref{eq:dN},
which is parameterized by an arbitrary sufficiently smooth function $w$ of~$(t,y)$
and is hence a subfamily of the family~\eqref{eq:dNTrivialSolutions}.
In the case $\beta\ne0$, the coefficients~$\varrho^0$, $\varrho^1$ and~$\varrho^2$ can be assumed,
up to the $G$-equivalence, to vanish.
This leads to the following simple solutions of the equation~\eqref{eq:dN}:
\begin{gather}\label{eq:InvSolutions1.4}
\solution u=\frac{\beta_t}{6\beta}y^3+\beta xy,
\end{gather}
where $\beta$ is an arbitrary sufficiently smooth function of~$t$.

We can modify ansatz~1.4$^\beta$ with $\beta\ne0$ to $u=\tilde w(z_1,z_2)+\beta xy+\frac16\beta_t\beta^{-1}y^3$
with the same $z_1=t$ and~$z_2=y$, which simplifies reduced equation~1.4$^\beta$
to $\tilde w_{222}=0$.
Therefore, analogously to the subalgebras~$\mathfrak s_{1.3}^\rho$ with $\rho\ne1$,
the subalgebras from the family $\{\mathfrak s_{1.4}^\beta\mid \beta\ne0\}$,
which is parameterized by an arbitrary nonvanishing function~$\beta$ of~$t$,
also correspond, under this ansatz choice, to the same reduced equation.

Depending on the value of the parameter function~$\beta$,
the normalizer ${\rm N}_{\mathfrak g}(\mathfrak s_{1.4}^\beta)$
of the subalgebra~$\mathfrak s_{1.4}^\beta$ in~$\mathfrak g$ is
\begin{gather*}
\big\langle D^t(1),\,D^t(t),\,D^{\rm s},\,P^x(1),\,P^y(\rho),\,R^y(\breve\beta),\,Z(\sigma)\big\rangle
\quad\mbox{if}\quad \beta=0,
\\
\big\langle D^t(1),\,D^t(t)+\tfrac23D^{\rm s},\,P^x(1),\,P^y(\rho)+R^x(\rho\beta),\,R^y(\breve\beta),\,Z(\sigma)\big\rangle
\quad\mbox{if}\quad \beta\ne0,\ \beta_t=0,
\\
\big\langle D^t(1)+\kappa D^{\rm s},\,P^x(1),\,P^y(\rho)+R^x(\rho\beta),\,R^y(\breve\beta),\,Z(\sigma)\big\rangle
\quad\mbox{if}\quad \beta_t\ne0,\ \beta_t=\kappa\beta,
\\
\big\langle D^t(t\!+\!\mu)+(\kappa\!+\!\tfrac23)D^{\rm s},\,P^x(1),\,P^y(\rho)+R^x(\rho\beta),\,R^y(\breve\beta),\,Z(\sigma)\big\rangle
\!\quad\mbox{if}\quad \beta_t\ne0,\ (t\!+\!\mu)\beta_t=\kappa\beta,
\\
\big\langle P^x(1),\,P^y(\rho)+R^x(\rho\beta),\,R^y(\breve\beta),\,Z(\sigma)\big\rangle
\quad\mbox{otherwise},
\end{gather*}
where $\rho$, $\breve\beta$ and~$\sigma$ run through the set of smooth functions of~$t$.
In the second, the third and the fourth cases,
$\beta=1$, $(\beta,\kappa)=({\rm e}^t,1)$ and $(\beta,\mu)=(|t|^\kappa,0)$ modulo the $G$-equivalence, respectively.

For any value of the parameter function~$\beta$,
the vector fields~$P^x(1)+R^y(\beta)$, $P^y(\rho)+R^x(\rho\beta)$, $R^y(\breve\beta)$ and $Z(\sigma)$
from \smash{${\rm N}_{\mathfrak g}(\mathfrak s_{1.4}^\beta)$} induce the Lie-symmetry vector fields
$0$, $\rho\p_{z_2}-\tfrac12\rho_tz_2^{\,2}\p_w$, $\breve\beta z_2\p_w$ and $\sigma\p_w$
of reduced equations~1.4$^\beta$,
where \smash{$\breve\beta$}, $\rho$ and $\sigma$ are arbitrary smooth functions of $z_1=t$.
For particular values of~$\beta$ with extension of~\smash{${\rm N}_{\mathfrak g}(\mathfrak s_{1.4}^\beta)$},
there are the following additional independent inductions:
\begin{gather*}
\p_{z_1},\ z_1\p_{z_1}+\tfrac13z_2\p_{z_2},\ z_2\p_{z_2}+3w\p_w\quad\mbox{by}\quad D^t(1),\ D^t(t),\ D^{\rm s}
\quad\mbox{if}\quad\beta=0,
\\[.5ex]
\p_{z_1},\ z_1\p_{z_1}+z_2\p_{z_2}+2w\p_w\quad\mbox{by}\quad D^t(1),\ D^t(t)+\tfrac23D^{\rm s}
\quad\mbox{if}\quad\beta\ne0,\ \beta_t=0,
\\[.5ex]
\p_{z_1}+\kappa z_2\p_{z_2}+3\kappa w\p_w\quad\mbox{by}\quad D^t(1)+\kappa D^{\rm s}
\quad\mbox{if}\quad\beta_t\ne0,\ \beta_t=\kappa\beta,
\\[.5ex]
(z_1+\mu)\p_{z_1}+(\kappa+1)z_2\p_{z_2}+(3\kappa+2)w\p_w\quad\mbox{by}\quad D^t(t+\mu)+(\kappa+\tfrac23)D^{\rm s}
\\ \qquad\mbox{if}\quad\beta_t\ne0,\ (t+\mu)\beta_t=\kappa\beta,
\end{gather*}
where $\kappa$ and~$\mu$ are arbitrary constants with $\kappa\ne0$.
If we use the modified ansatz in the case $\beta\ne0$,
the expressions for the analogous induced Lie-symmetry vector fields are formally the same
up to replacing~$w$ by~$\tilde w$, except the vector field $\rho\p_{z_2}-\tfrac12\rho_tz_2^{\,2}\p_w$
that should be replaced by $\rho\p_{z_2}-\tfrac12(\rho_t+\beta_t\beta^{-1}\rho)z_2^{\,2}\p_{\tilde w}$.

Since reduced equation~1.4$^0$ is in fact the identity,
it admits any point transformation in the space with coordinates $(z_1,z_2,w)$ as its point symmetry,
and any vector field in this space is its Lie-symmetry vector field.
The maximal Lie invariance of the modified reduced equation $\tilde w_{222}=0$ for the case $\beta\ne0$
coincides with the span of the vector fields
$\p_{z_1}$,
$\p_{z_2}$, $z_2\p_{z_2}$, $z_2^{\,2}\p_{z_2}+2z_2\tilde w\p_{\tilde w}$,
$\tilde w\p_{\tilde w}$, $\p_{\tilde w}$, $z_2\p_{\tilde w}$ and $z_2^{\,2}\p_{\tilde w}$
over the (pseudo)ring smooth functions of~$z_1$.
Moreover, it is obvious that these reduced equations also possess very wide sets of other symmetry-like objects.
Hence for each~$\beta$,
the equation~\eqref{eq:dN} admits many hidden symmetry-like objects
associated with reduction~1.4$^\beta$
but they are not of interest in view of the triviality of reduced equations~1.4$^\beta$.

\section{Lie reductions of codimension two}\label{sec:dNLieReductionsOfCodim2}

The subalgebras
$\mathfrak s_{2.1}^\lambda$ with $\lambda=-1/3$,
$\mathfrak s_{2.7}^{\lambda\nu}$, $\mathfrak s_{2.8}^\lambda$,
$\mathfrak s_{2.10}^\beta$, $\mathfrak s_{2.11}$, $\mathfrak s_{2.12}$,
$\mathfrak s_{2.15}^{\delta\nu}$, $\mathfrak s_{2.16}^\delta$,
$\mathfrak s_{2.18}^{\rho\beta\sigma}$,
$\mathfrak s_{2.19}^{\beta^1\beta^2}$,
$\mathfrak s_{2.20}^{\beta\sigma}$,
$\mathfrak s_{2.21}^{\alpha\beta^1\beta^2}$,
$\mathfrak s_{2.22}^{\alpha\beta\sigma}$,
$\mathfrak s_{2.23}^{\beta\sigma}$,
$\mathfrak s_{2.24}^\sigma$ and
$\mathfrak s_{2.25}^\sigma$
cannot be used for codimension-two Lie reductions of the equation~\eqref{eq:dN}
since the rank of these subalgebras is less than two.

The subalgebras~$\mathfrak s_{2.9}^0$ and~\smash{$\mathfrak s_{2.17}^{\rho\alpha\beta}$}
contain the subalgebras~$\mathfrak s_{1.4}^0$ and~\smash{$\mathfrak s_{1.4}^\beta$}, respectively,
and the one-dimensional subalgebras
of~\smash{$\mathfrak s_{2.14}^{\delta0\delta'}$} and~\smash{$\mathfrak s_{2.6}^{\lambda\delta}$}
that are spanned by the respective second basis elements
are $G$-equivalent to~\smash{$\mathfrak s_{1.4}^\beta$} for some~$\beta$.
All $\mathfrak s_{1.4}^\beta$-invariant solutions
were constructed in Section~\ref{sec:dNLieReductionsOfCodim1}.
This is why we can neglect
the subalgebras~\smash{$\mathfrak s_{2.6}^{\lambda\delta}$}, \smash{$\mathfrak s_{2.9}^0$},
\smash{$\mathfrak s_{2.14}^{\delta0\delta'}$} and~\smash{$\mathfrak s_{2.17}^{\rho\alpha\beta}$}
in the course of carrying out codimension-two Lie reductions of the equation~\eqref{eq:dN}.
The same claim is true for the subalgebras~$\mathfrak s_{2.5}^{\lambda1}$, $\mathfrak s_{2.9}^1$
and $\mathfrak s_{2.14}^{\delta1\delta'}$
due to their relation to the subalgebra~$\mathfrak s_{1.3}^1$.

For the subalgebras~$\mathfrak s_{2.5}^{\lambda\mu}$, $\mathfrak s_{2.9}^\rho$
and $\mathfrak s_{2.14}^{\delta\nu\delta'}$ with $\mu,\nu\ne0,1$ and $\rho\ne0,1$,
the similar claim is relevant only partially since
the representation~\eqref{eq:dNs1.3InvarSolutions} for \smash{$\mathfrak s_{1.3}^\rho$}-invariant solutions
is not explicit and involves quadratures of the general solution of the inviscid Burgers equation.
At the same time, the Lie reductions associated with these subalgebras are simpler
and essentially differ from the remaining $G$-inequivalent Lie reductions.
Moreover, we were able to construct general solutions of all the corresponding reduced equations
either in an explicit form in terms of elementary functions and the Lambert $W$ function
or in a parametric form.
This is why we consider the above Lie reductions first.

The remaining subalgebras from the list in Lemma~\ref{lem:dN2DInequivSubalgs} are
$\mathfrak s_{2.1}^\lambda$, $\mathfrak s_{2.2}^\nu$, $\mathfrak s_{2.3}^\nu$,
$\mathfrak s_{2.4}$ and $\mathfrak s_{2.13}$.
They constitute the second collection of subalgebras to be considered
within the framework of Lie reductions.
All the corresponding invariant solutions are stationary.
The integration of the involved reduced equations is more complicated,
and the construction of general or even particular solutions in certain closed form
is possible only for some of them.
For each of the subalgebras in the second collection,
we consider its counterparts among inequivalent two-dimensional subalgebras
of the algebra~$\mathfrak g_{\rm L}$ and
associated $G_{\rm L}$-inequivalent Lie reductions
of the nonlinear Lax representation~\eqref{eq:dNLaxPair}.
Some of these reductions help us in finding exact solutions to the associated reduced equations.

Below, for each of the subalgebras
$\mathfrak s_{2.5}^{\lambda\mu}$ with $\mu\ne0,1$,
$\mathfrak s_{2.9}^\rho$ with $\rho\ne0,1$ and
$\mathfrak s_{2.14}^{\delta\nu\delta'}$ with $\nu\ne0,1$ (the first collection)
and
$\mathfrak s_{2.1}^\lambda$ with $\lambda\ne-1/3$, $\mathfrak s_{2.2}^\nu$, $\mathfrak s_{2.3}^\nu$,
$\mathfrak s_{2.4}$ and $\mathfrak s_{2.13}$ (the second collection),
we present an ansatz for the related invariant solutions
and the corresponding reduced equation,
where $\varphi=\varphi(\omega)$ is the new unknown functions of the single invariant variable~$\omega$.
We also compute the subalgebra normalizers and induced symmetries (both infinitesimal and discrete)
of reduced equations.
In reduced nonlinear Lax representations,
$\psi=\psi(\omega)$ is one more unknown functions of the single invariant variable~$\omega$,
which is associated with the unknown function~$q$ in the original nonlinear Lax representation~\eqref{eq:dNLaxPair}.

\subsection{The first collection of reductions}\label{sec:dNLieReductionsOfCodim2Collection1}

\noindent{\bf 2.5.}\
$\mathfrak s_{2.5}^{\lambda\mu}=\big\langle D^t(1)+\lambda D^{\rm s},\,
P^x({\rm e}^{(\lambda-1)t})+\mu P^y({\rm e}^{(\lambda-1)t})\big\rangle$, \ $\mu\ne0,1$, \ $|\mu|\leqslant1$ $(\!{}\bmod G)$:
\begin{gather*}
u={\rm e}^{3\lambda t}\varphi-\frac{\lambda-1}6(x^3+y^3),\quad
\omega={\rm e}^{-\lambda t}(y-\mu x);\\
2(\mu^3-1)\varphi_{\omega\omega}\varphi_{\omega\omega\omega}-\omega\varphi_{\omega\omega\omega}
+(3\lambda-2)\varphi_{\omega\omega}=0.
\end{gather*}

For any value of the parameter tuple $(\lambda,\mu)$,
the subalgebra~$\mathfrak s_{2.5}^{\lambda\mu}$ has the same normalizer in~$\mathfrak g$,
\[
{\rm N}_{\mathfrak g}(\mathfrak s_{2.5}^{\lambda\mu})=\big\langle
D^t(1),\,D^{\rm s},\,P^x({\rm e}^{(\lambda-1)t})+\mu P^y({\rm e}^{(\lambda-1)t}),\,
-\mu R^x({\rm e}^{2\lambda t})+R^y({\rm e}^{2\lambda t}),\,Z({\rm e}^{3\lambda t})\big\rangle.
\]
Reduced equation~2.5$^{\lambda\mu}$ is invariant with respect to the algebra
$\mathfrak a_{2.5}=\langle\omega\p_\omega+3\varphi\p_\varphi,\p_\varphi,\omega\p_\varphi\rangle$
and the point transformation $(\tilde\omega,\tilde\varphi)=(-\omega,-\varphi)$
and, therefore, with respect to the Lie group~$G_{2.5}$
that consists of the point transformations $\tilde\omega=a_1\omega$, $\tilde\varphi=a_1^{\,3}\varphi+a_2\omega+a_3$,
where~$a_1$, $a_2$ and~$a_3$ are arbitrary constants with $a_1\ne0$.
The vector fields
$D^t(1)+\lambda D^{\rm s}$, $D^{\rm s}$,
$P^x({\rm e}^{(\lambda-1)t})+\mu P^y({\rm e}^{(\lambda-1)t})$,
$-\mu R^x({\rm e}^{2\lambda t})+R^y({\rm e}^{2\lambda t})$ and $Z({\rm e}^{3\lambda t})$
from \smash{${\rm N}_{\mathfrak g}(\mathfrak s_{2.5}^{\lambda\mu})$}
induce the Lie-symmetry vector fields
$0$, $\omega\p_\omega+3\varphi\p_\varphi$, $0$, $\omega\p_\varphi$ and~$\p_\varphi$
of reduced equation~2.5$^{\lambda\mu}$, respectively.
This means that the entire algebra~$\mathfrak a_{2.5}$
is induced by elements of \smash{${\rm N}_{\mathfrak g}(\mathfrak s_{2.5}^{\lambda\mu})$}.
Alternating the signs of $(\omega,\varphi)$ is a discrete point symmetry of reduced equation~2.5$^{\lambda\mu}$
and is induced by the discrete point symmetry transformation $\mathscr I^{\rm s}:=\mathscr D^{\rm s}(-1)$
of the original equation~\eqref{eq:dN}.
Hence the group~$G_{2.5}$ is entirely induced by the point symmetry group~$G$
of the original equation~\eqref{eq:dN}.

For any values of $(\lambda,\mu)$, reduced equation~2.5$^{\lambda\mu}$ is satisfied
by all $\varphi$ with $\varphi_{\omega\omega}=0$
but such values of~$\varphi$ are $G_{2.5}$-equivalent to 0
and, moreover, corresponds to solutions of the equation~\eqref{eq:dN}
that are $G$-equivalent to the zero solution $u=0$.
Further we assume that $\varphi_{\omega\omega}\ne0$.

There are several values of~$\lambda$,
for which the general solutions of the corresponding reduced equation~2.5$^{\lambda\mu}$
can be represented in the closed form.
These are $\lambda=2/3$, $\lambda=1/3$, $\lambda=5/6$ and $\lambda=1$.

Solving of reduced equation~2.5$^{\lambda\mu}$ with $\lambda=2/3$ degenerates to
the independent consideration of two equations, the elementary equation~$\varphi_{\omega\omega\omega}=0$
and the equation $2(\mu^3-1)\varphi_{\omega\omega}=\omega$.
Solutions of the first equation correspond to solutions of the equation~\eqref{eq:dN}
that are $G$-equivalent to the zero solution $u=0$,
whereas the second equation is a particular case of the constraint $2(\mu^3-1)\varphi_{\omega\omega}=-3(\lambda-1)\omega$,
whose solution set is contained in that of reduced equation~2.5$^{\lambda\mu}$ for any~$\lambda$,
and the associated solutions of the equation~\eqref{eq:dN} take, modulo the $G$-equivalence, the~form
\begin{gather}\label{eq:ReducedEq2.5QubicSolution}
\solution
u=\frac\kappa{4(\mu^3-1)}(y-\mu x)^3+\frac\kappa6(x^3+y^3)
\end{gather}
with $\kappa:=-(\lambda-1)$.
Note that the solution~\eqref{eq:ReducedEq2.5QubicSolution} with fixed values of~$\kappa$ and~$\mu$
is in fact invariant with respect to the three-dimensional subalgebra
$\big\langle D^t(1),\, D^{\rm s},\,P^x({\rm e}^{-\kappa t})+\mu P^y(\nu{\rm e}^{-\kappa t})\big\rangle$ of~$\mathfrak g$.

Below $\lambda\ne2/3$.
We can integrate reduced equation~2.5$^{\lambda\mu}$ once in two different ways.
The first way uses the fact that the left-hand side of this equation is a total derivative with respect to~$\omega$,
thus leading to the equation
\begin{gather}\label{eq:ReducedEq2.5IntA}
(\mu^3-1)(\varphi_{\omega\omega})^2-\omega\varphi_{\omega\omega}+(3\lambda-1)\varphi_{\omega}+c_1=0,
\end{gather}
where $c_1$ is the integration constant.
The second way is to consider reduced equation~2.5$^{\lambda\mu}$ as a first-order ordinary differential equation
with respect to~$\varphi_{\omega\omega}$,
which integrates to
\begin{gather}\label{eq:ReducedEq2.5IntB}
\omega=-\frac23\frac{\mu^3-1}{\lambda-1}\varphi_{\omega\omega}+c_2|\varphi_{\omega\omega}|^{\frac1{3\lambda-2}}
\quad\mbox{if}\quad \lambda\ne1,
\\\nonumber
\omega=-2(\mu^3-1)\varphi_{\omega\omega}\ln|\varphi_{\omega\omega}|+c_2\varphi_{\omega\omega}\quad\mbox{if}\quad \lambda=1,
\end{gather}
where $c_2$ is another arbitrary constant,
and the integrated equations can be easily solved as algebraic equations with respect to~$\varphi_{\omega\omega}$
for four values of~$\lambda$, $\lambda=1/3$, $\lambda=5/6$, $\lambda=4/3$ and $\lambda=1$.
Some of these values are also singled out in the course of the further integration.

Moreover, the case $\lambda=1/3$ is singular from the point of view of gauging the constant~$c_1$
by point symmetries of the corresponding reduced equation.
To be specific, in contrast to the other values of~$\lambda$, we cannot set~$c_1$ to be equal to zero
and can only assume that $c_1\in\{-1,0,1\}$.
The equation~\eqref{eq:ReducedEq2.5IntA} with $\lambda=1/3$ is easily solved.
Its general solution is
\[
\varphi=\frac{\omega^3+\varepsilon\sqrt{(\omega^2+c_1)^3}}{12(\mu^3-1)}
+\frac{\varepsilon c_1}{4(\mu^3-1)}
\Big(\omega\ln\big|\omega+\sqrt{\omega^2+c_1}\,\big|-\sqrt{\omega^2+c_1}\,\Big)+c_2\omega+c_3,
\]
where $\varepsilon=\pm1$, and $c_2$ and~$c_3$ are integration constants,
which can be set to be equal zero modulo the $G_{2.5}$-equivalence.
The corresponding family of solutions of the equation~\eqref{eq:dN} is
\[
\solution
u={\rm e}^t\left(
\frac{\omega^3+\varepsilon\sqrt{(\omega^2+c_1)^3}}{12(\mu^3-1)}
+\frac{\varepsilon c_1}{4(\mu^3-1)}
\Big(\omega\ln\big|\omega+\sqrt{\omega^2+c_1}\,\big|-\sqrt{\omega^2+c_1}\,\Big)
\right)+\frac{x^3+y^3}9,
\]
where $\omega={\rm e}^{-t/3}(y-\mu x)$, $\varepsilon=\pm1$,
$\mu$ is an arbitrary constant with $\mu\ne0,1$,
and $c_1\in\{-1,0,1\}\pmod{G_{2.5}}$.

For $\lambda=5/6$, $\lambda=4/3$ and $\lambda=1$,
the general solutions of the corresponding reduced equations~2.5$^{\lambda\mu}$
can also be represented in closed form,
where for convenience we use other integration constants $b_1$, $b_2$ and~$b_3$,
and $\varepsilon=\pm1$:
\begin{gather*}
\lambda=\frac56\colon\quad
\varphi=\frac{\mu^3-1}{b_1}\omega^2
+\frac{4\varepsilon}{15b_1^{\,3}}
\big(4(\mu^3-1)^2-b_1\omega\big)^{5/2}+b_2\omega+b_3,
\\
\lambda=\frac43\colon\quad
\varphi=
-\frac{\omega^3}{12(\mu^3-1)}
+\frac{b_1^{\,2}\omega^2}{16(\mu^3-1)^2}
+\varepsilon b_1\frac{\big(b_1^{\,2}-8(\mu^3-1)\omega\big)^{5/2}}{1920(\mu^3-1)^4}
+b_2\omega+b_3,
\\[1ex]
\lambda=1\colon\quad
\varphi=-\omega^3\frac{18z^2+15z+4}{216(\mu^3-1)z^3}+b_2\omega+b_3,
\ \
z\in\big\{W_0(\tilde\omega),W_{-1}(\tilde\omega)\big\},\ \
\tilde\omega:=-\frac{b_1\omega}{2(\mu^3-1)},
\end{gather*}
where~$W_0$ and~$W_{-1}$ are the principal real and the other real branches of the Lambert $W$ function, respectively.
The solutions with $b_1=0$ correspond to solutions of the equation~\eqref{eq:dN}
that belong, up to the $G$-equivalence, to the family~\eqref{eq:ReducedEq2.5QubicSolution}.
Hence we can assume that $b_1\ne0$ and thus set the gauges $b_1=1,\ b_2=b_3=0 \pmod{G_{2.5}}$.
This leads to the following $G$-inequivalent solutions of the equation~\eqref{eq:dN}
with $\varepsilon=\pm1$ and an arbitrary constant $\mu\ne0,1$:
\begin{gather*}
\solution
u=(\mu^3-1){\rm e}^{\frac56t}(y-\mu x)^2
+\frac{4\varepsilon}{15}{\rm e}^{\frac52t}
\big(4(\mu^3-1)^2-{\rm e}^{-\frac56t}(y-\mu x)\big)^{5/2}
+\frac{x^3+y^3}{36},
\\[1ex]
\solution
u=-\frac{(y-\mu x)^3}{12(\mu^3-1)}
+{\rm e}^{\frac43t}\frac{(y-\mu x)^2}{16(\mu^3-1)^2}
+\varepsilon{\rm e}^{4t}\frac{\big(1-8(\mu^3-1){\rm e}^{-\frac43t}(y-\mu x)\big)^{5/2}}{1920(\mu^3-1)^4}
-\frac{x^3+y^3}{18},
\\[1ex]
\solution
u=-(y-\mu x)^3\frac{18z^2+15z+4}{216(\mu^3-1)z^3},\quad
z\in\big\{W_0(\tilde\omega),W_{-1}(\tilde\omega)\big\},\quad
\tilde\omega:=-\frac{{\rm e}^{-t}(y-\mu x)}{2(\mu^3-1)}.
\end{gather*}

For any $\lambda\ne2/3,1/3,1$, the general solution of reduced equation~2.5$^{\lambda\mu}$
can be represented in a parametric form in a uniform way.
Considering the derivative~$\varphi_{\omega\omega}$
in the equations~\eqref{eq:ReducedEq2.5IntA} and~\eqref{eq:ReducedEq2.5IntB} as a parameter
and denoting it by~$s$, we rewrite these equations as
\[
\omega=-\frac23\frac{\mu^3-1}{\lambda-1}s+c_2|s|^{\frac1{3\lambda-2}},\quad
\varphi_{\omega}=-\frac{(\mu^3-1)s^2-\omega s+c_1}{3\lambda-1}.
\]
The associated parametric expression for~$\varphi=\int\varphi_{\omega}\,{\rm d}\omega$ is given by
\begin{gather}\label{eq:ReducedEq2.5ParamSol}
\begin{split}&
\varphi=
(\mu^3-1)\frac{4(\mu^3-1)s-3\omega}{27\lambda(2\lambda-1)}s^2
-\frac{(\mu^3-1)s-\omega}{3\lambda(3\lambda-1)}\omega s
-\frac{c_1\omega}{3\lambda-1}+c_3\quad\mbox{if}\quad \lambda\ne0,\frac12,\frac13,
\\&
\varphi=
\frac2{27}(\mu^3-1)^2s^3-\frac59c_2(\mu^3-1)|s|^{3/2}+\frac{c_2^{\,2}}2\sgn(s)\ln|s|+c_1\omega+c_3
\quad\mbox{if}\quad \lambda=0,
\\&
\varphi=
\frac8{27}(\mu^3-1)^2s^3+\frac43c_2(\mu^3-1)\ln|s|+\frac{4c_2^{\,2}}{3s^3}+2c_1\omega+c_3
\quad\mbox{if}\quad \lambda=\frac12.
\end{split}
\end{gather}
Note that the value $c_2=0$ corresponds to solutions of the form~\eqref{eq:ReducedEq2.5QubicSolution}
and can be excluded from the consideration.
Thus, we can assume $c_2\ne0$ and thus set $c_2=1,\ c_1=c_3=0 \pmod{G_{2.5}}$.
This leads to the following $G$-inequivalent solutions of the equation~\eqref{eq:dN}:
\[
\solution
u={\rm e}^{3\lambda t}\varphi-\frac{\lambda-1}6(x^3+y^3),
\]
where $\lambda\ne2/3,1/3,1$, $\mu\ne0,1$,
$\varphi$ is defined by the appropriate equation from~\eqref{eq:ReducedEq2.5ParamSol}
with $c_2=1$, $c_1=c_3=0$, $\omega:={\rm e}^{-\lambda t}(y-\mu x)$
and the function $s=s(\omega)$ is implicitly defined as a solution of the Lambert's transcendental equation
\begin{gather}\label{eq:ReducedEq2.5Lambert'sTranscendentalEq}
|s|^{\frac1{3\lambda-2}}-\frac23\frac{\mu^3-1}{\lambda-1}s=\omega.
\end{gather}
In fact, the elementary solvability of this equation for $\lambda\in\{5/6,4/3\}$
as a quadratic equation with respect to a degree of~$s$ has been used above
for deriving explicit solutions of the equation~\eqref{eq:dN}.
The equation~\eqref{eq:ReducedEq2.5Lambert'sTranscendentalEq}
can also be solved for some other values of~$\lambda$
as algebraic equations with respect to certain degrees of~$s$,
which results in explicit expressions for the general solutions
of the corresponding reduced equation~2.5$^{\lambda\mu}$.
Thus, $(3\lambda-2)^{-1}=-1/2,-2,3,1/3$ for ${\lambda=0,1/2,7/9,5/3}$, respectively,
and the corresponding equations~\eqref{eq:ReducedEq2.5Lambert'sTranscendentalEq} are cubic equations
with respect to  certain degrees of~$s$ and, therefore, can be solved, e.g., using the Cardano formula.

\medskip\par\noindent{\bf 2.9.}\
$\mathfrak s_{2.9}^\rho=\big\langle D^{\rm s},\,P^x(1)+P^y(\rho)\big\rangle$, \
$\rho\not\equiv1$ for any open interval in the domain of~$\rho$ and
$\rho(t)\ne0$ for any~$t$  in this domain:
\[
u=\frac{(y-\rho x)^3}{\rho^3}\varphi-\frac{\rho_t}{6\rho}y^3,\quad \omega=t;\qquad
\varphi_\omega=-12\frac{\rho^3-1}{\rho^3}\varphi^2.
\]
The normalizer of the subalgebra~$\mathfrak s_{2.9}^\rho$ in~$\mathfrak g$ is
\begin{gather*}
{\rm N}_{\mathfrak g}(\mathfrak s_{2.9}^\rho)=\big\langle D^{\rm s},\,P^x(1)+P^y(\rho)\big\rangle
\quad\mbox{if}\quad\rho_t\ne0,
\\
{\rm N}_{\mathfrak g}(\mathfrak s_{2.9}^\rho)=\big\langle D^t(1),\,D^t(t),\,D^{\rm s},\,P^x(1)+P^y(\rho)\big\rangle
\quad\mbox{if}\quad\rho_t=0.
\end{gather*}
Since reduced equation~2.9$^\rho$ is a first-order ordinary differential equation,
its maximal Lie invariance algebra~$\mathfrak a_{2.9}^\rho$ is infinite-dimensional.
The normalizer~${\rm N}_{\mathfrak g}(\mathfrak s_{2.9}^\rho)$ induces merely
the zero subalgebra of the algebra~$\mathfrak a_{2.9}^\rho$
and its subalgebra~$\langle\p_\omega,\omega\p_\omega-\varphi\p_\varphi\rangle$
if $\rho_t\ne0$ and $\rho_t=0$, respectively.
Therefore, the original equation~\eqref{eq:dN} admits an infinite number
of linearly independent hidden symmetries that are associated with reduced equation~2.9$^\rho$.
Nevertheless, these hidden symmetries are not of great interest
in view of the trivial integrability of reduced equation~2.9$^\rho$.
Separating the variables in the reduced equation, we integrate it and
substitute the obtained expression for~$\varphi$ into the ansatz,
which gives the following a family of solutions of~\eqref{eq:dN} (cf.\ reduction~1.3): 
\[
\solution
u=\left(\int\frac{\rho^3-1}{\rho^3}{\rm d}t\right)^{-1}\frac{(y-\rho x)^3}{12\rho^3}-\frac{\rho_t}{6\rho}y^3.
\]

\noindent{\bf 2.14.}\
$\mathfrak s_{2.14}^{\delta\nu\delta'}=\big\langle D^t(1)\!+\!\delta D^{\rm s},\,
P^x({\rm e}^{\delta t})\!+\!\nu P^y({\rm e}^{\delta t})\!+\!\delta'R^y({\rm e}^{2\delta t})\big\rangle$,
$\delta,\delta'\!\in\!\{0,1\}$, $\nu\!\ne\!0,1$, $|\nu|\!\leqslant\!1$ $(\!{}\bmod G)$:
\begin{gather*}
u={\rm e}^{3\delta t}\varphi-\frac\delta6(x^3+y^3)+\frac{\delta'}{2\nu}{\rm e}^{\delta t}y^2,\quad
\omega={\rm e}^{-\delta t}(y-\nu x);\\
2\nu(\nu^3-1)\varphi_{\omega\omega}\varphi_{\omega\omega\omega}=\delta'\varphi_{\omega\omega\omega}-3\nu\delta\varphi_{\omega\omega}.
\end{gather*}

Depending on values of $(\delta,\delta')$,
the normalizer of the subalgebra~$\mathfrak s_{2.14}^{\delta\nu\delta'}$ in~$\mathfrak g$ is one of the following:
\begin{gather*}
{\rm N}_{\mathfrak g}(\mathfrak s_{2.14}^{0\nu0})=\langle D^t(1),\,D^t(t),\,D^{\rm s},\,
P^x(1),\,P^y(1),\,-\nu R^x(1)+R^y(1),\,Z(1)\rangle,
\\
{\rm N}_{\mathfrak g}(\mathfrak s_{2.14}^{0\nu1})=\langle D^t(1),\,D^t(t)+\tfrac23D^{\rm s},\,
P^x(1),\,P^y(1)+R^x(1),\,-\nu R^x(1)+R^y(1),\,Z(1)\rangle,
\\
{\rm N}_{\mathfrak g}(\mathfrak s_{2.14}^{1\nu0})=\langle D^t(1),\,D^{\rm s},\,
P^x({\rm e}^t),\,P^y({\rm e}^t),\,-\nu R^x({\rm e}^{2t})+R^y({\rm e}^{2t}),\,Z({\rm e}^{3t})\rangle,
\\
{\rm N}_{\mathfrak g}(\mathfrak s_{2.14}^{1\nu1})=\langle D^t(1)+D^{\rm s},\,
P^x({\rm e}^t),\,P^y({\rm e}^t)+R^x({\rm e}^{2t}),\,-\nu R^x({\rm e}^{2t})+R^y({\rm e}^{2t}),\,Z({\rm e}^{3t})\rangle.
\end{gather*}
Therefore, the vector fields
$D^t(1)+\delta D^{\rm s}$,
$P^x({\rm e}^{\delta t})$,
$ P^y({\rm e}^{\delta t})+\delta'R^x({\rm e}^{2\delta t})$,
$-\nu R^x({\rm e}^{2\delta t})+R^y({\rm e}^{2\delta t})$ and
$Z({\rm e}^{3\delta t})$
belong to ${\rm N}_{\mathfrak g}(\mathfrak s_{2.14}^{\delta\nu\delta'})$
with the corresponding value of $(\delta,\delta')$
and induce the Lie-symmetry vector fields
$0$, $-\nu\p_\omega$, $\p_\omega-\delta'\nu\omega\p_\varphi$, $\omega\p_\varphi$ and~$\p_\varphi$
of reduced equation~2.14$^{\delta\nu\delta'}$, respectively.
For any values of $(\delta,\nu,\delta')$,
reduced equation~2.14$^{\delta\nu\delta'}$ is invariant
with respect to the algebra $\mathfrak a_{2.14}=\langle\p_\omega,\p_\varphi,\omega\p_\varphi\rangle$
and, therefore, with respect to the corresponding Lie group~$G_{2.14}$,
which consists of the point transformations $\tilde\omega=\omega+a_1$, $\tilde\varphi=\varphi+a_2\omega+a_3$,
where $a_1$, $a_2$ and $a_3$ are arbitrary constants.
The group~$G_{2.14}$ is entirely induced by the point symmetry group~$G$
of the original equation~\eqref{eq:dN}.
For any values of $(\delta,\nu,\delta')$, reduced equation~2.14$^{\delta\nu\delta'}$ is satisfied
by all~$\varphi$ with $\varphi_{\omega\omega}=0$
but such values of~$\varphi$ are $G_{2.14}$-equivalent to 0
and, moreover, correspond to solutions of the equation~\eqref{eq:dN}
that are $G$-equivalent to the zero solution $u=0$.
Further we can consider only solutions with $\varphi_{\omega\omega}\ne0$.

Consider the case $\delta=0$.
Reduced equation~2.14$^{0\nu\delta'}$ degenerates to
the elementary equation~$\varphi_{\omega\omega\omega}=0$
whose maximal Lie invariance algebra is well known,
$\mathfrak a_{2.14}^{0\nu\delta'}=\langle\p_\omega,\,\omega\p_\omega,\,\omega^2\p_\omega+2\omega\varphi\p_\varphi,\,
\p_\varphi,\,\omega\p_\varphi,\,\omega^2\p_\varphi,\,\varphi\p_\varphi\rangle$.
In addition to the Lie-symmetry vector fields~$\p_\omega$, $\p_\varphi$ and $\omega\p_\varphi$,
which are induced as in the general case,
elements of the normalizer ${\rm N}_{\mathfrak g}(\mathfrak s_{2.14}^{0\nu\delta'})$ induce
$\omega\p_\omega+2\varphi\p_\varphi$ if $\delta'=1$ and
$\omega\p_\omega$ and $\varphi\p_\varphi$ if $\delta'=0$.
Any element of~$\mathfrak a_{2.14}^{0\nu\delta'}$
involving at least one of the basis vector fields $\omega^2\p_\omega+2\omega\varphi\p_\varphi$, $\omega^2\p_\varphi$
and, if $\delta'=1$, $\omega\p_\omega+a\varphi\p_\varphi$ with $a\ne2$
is a hidden symmetry of the equation~\eqref{eq:dN}.
All the corresponding solutions of the equation~\eqref{eq:dN}
are $G$-equivalent to either the zero solution $u=0$
or solutions of the form~\eqref{eq:InvSolutions1.4} with $\beta=\const$.

Reduced equation~2.14$^{1\nu0}$ is factored out to
$\big(2(\nu^3-1)\varphi_{\omega\omega\omega}+3\big)\varphi_{\omega\omega}=0$.
Therefore, its solution set is the disjoint union of the solution sets of the equations
$2(\nu^3-1)\varphi_{\omega\omega\omega}+3=0$ and $\varphi_{\omega\omega}=0$.
This implies that the maximal Lie invariance algebra~$\mathfrak a_{2.14}^{1\nu0}$
of reduced equation~2.14$^{1\nu0}$ is the intersection of the maximal Lie invariance algebras
of the above equations,
$\mathfrak a_{2.14}^{1\nu0}=\langle\p_\omega,\,\p_\varphi,\,\omega\p_\varphi,\,\omega\p_\omega+3\varphi\p_\varphi\rangle$.
The entire algebra~$\mathfrak a_{2.14}^{1\nu0}$ is induced by ${\rm N}_{\mathfrak g}(\mathfrak s_{2.14}^{1\nu0})$.
Thus, the case $(\delta,\delta')=(1,0)$ leads, modulo the $G$-equivalence,
to the solutions of the equation~\eqref{eq:dN}
that are of the form~\eqref{eq:ReducedEq2.5QubicSolution} with $\kappa=-1$ and $\mu=\nu$.

Consider the case $\delta\delta'\ne0$.
Thus, $\delta=1$.
We neglect the gauge $\delta'=1$, set $\delta'$ to another value, $\delta'=-\nu(\nu^3-1)$,
and denote $\kappa:=-3(\nu^3-1)^{-1}$.
As a result, we need to solve the equation
$\big((\varphi_{\omega\omega})^2+\varphi_{\omega\omega}-\kappa\varphi_\omega\big)_\omega=0$.
We integrate it once, deriving $(\varphi_{\omega\omega})^2+\varphi_{\omega\omega}-\kappa\varphi_\omega-c_1=0$,
$c_1$ is an integration constant, and solve the integrated equation with respect to~$\varphi_{\omega\omega}$,
\[
\varphi_{\omega\omega}=-\frac12\pm\frac12\sqrt{4\kappa\varphi_\omega+1+4c_1}.
\]
The maximal Lie invariance algebra of reduced equation~2.14$^{1\nu\delta'}$
coincides with the common invariance algebra~$\mathfrak a_{2.14}$ of case~2.14.
Modulo the induced $G_{2.14}$-equivalence, we can set \mbox{$1+4c_1=0$}.
Separating the variables in the resulting equation, denoting $z:=-1\pm\sqrt{4\kappa\varphi_\omega}$
and integrating once more, we obtain $z+\ln|z|=\kappa(\omega+c_2)$,
where $c_2$ is another integration constant
that also can be set to be equal zero up to the $G_{2.14}$-equivalence.
In other words, we derive the equation
\[
\omega=F(\varphi_\omega):=\left.\frac{z+\ln|z|}\kappa\right|_{z=-1\pm\sqrt{4\kappa\varphi_\omega}},
\]
and its general solution can be represented in parametric form as 
\begin{gather*}
\omega=F(\zeta)\quad\mbox{with}\quad\zeta:=\frac{(z+1)^2}{4\kappa},
\\
\varphi=\int\zeta\frac{{\rm d} F}{{\rm d}\zeta}(\zeta)\,{\rm d}\zeta
=\int\frac{(z+1)^3}{4\kappa^2z}\,{\rm d}z
=\frac1{4\kappa^2}\left(\frac{z^3}3+\frac32z^2+2z\right)+\frac\omega{4\kappa}+c_3,
\end{gather*}
where $c_3$ is one more integration constant
that also can be set to be equal to zero up to the $G_{2.14}$-equivalence.
The summand $\omega/(4\kappa)$ can be neglected using the $G_{2.14}$-equivalence as well.
Since $z{\rm e}^z=\pm{\rm e}^{\kappa\omega}$, we have that
$z\in\big\{W_0({\rm e}^{\kappa\omega}),\,W_0(-{\rm e}^{\kappa\omega}),\,W_{-1}(-{\rm e}^{\kappa\omega})\big\}$,
where~$W_0$ and~$W_{-1}$ again denote the principal real and the other real branches of the Lambert $W$ function, respectively.
As a result, we show that any solution of reduced equation~2.14$^{\delta\nu\delta'}$ with $\delta\delta'\ne0$
is $G_{2.14}$-equivalent to one of the solutions
\[
\varphi=\frac1{4\kappa^2}\left(\frac{z^3}3+\frac32z^2+2z\right), \quad
z\in\big\{W_0({\rm e}^{\kappa\omega}),\,W_0(-{\rm e}^{\kappa\omega}),\,W_{-1}(-{\rm e}^{\kappa\omega})\big\}.
\]
The corresponding $G$-equivalent solutions of the equation~\eqref{eq:dN} take the form
\[
\solution
u=\frac{(\nu^3-1)^2}{36}{\rm e}^{3t}\left(\frac{z^3}3+\frac32z^2+2z\right)
-\frac16(x^3+y^3)-\frac{\nu^3-1}2{\rm e}^ty^2,
\]
where
$z\in\big\{W_0({\rm e}^{\kappa\omega}),\,W_0(-{\rm e}^{\kappa\omega}),\,W_{-1}(-{\rm e}^{\kappa\omega})\big\}$
with $\omega:={\rm e}^{-t}(y-\nu x)$
and $\kappa:=-3(\nu^3-1)^{-1}$.

\subsection{The second collection of reductions}\label{sec:dNLieReductionsOfCodim2Collection2}

The reduced ordinary differential equations
that are obtained from the equation~\eqref{eq:dN} by Lie reductions
using the two-dimensional subalgebras from the second selected collection are cumbersome
and among them there are three one-parameter families of equations,
which complicates the computation of Lie and, moreover, point symmetries of these equations.
At the same time, there is a more essential obstacle even for computing Lie symmetries
just using the standard Lie approach augmented with specialized computer-algebra packages.
The general form of the above reduced equations is
$M(\omega,\varphi,\varphi_\omega,\varphi_{\omega\omega})\varphi_{\omega\omega\omega}
+N(\omega,\varphi,\varphi_\omega,\varphi_{\omega\omega})=0$,
where $M$ and~$N$ are respectively specific first- and second-degree polynomials
in $(\varphi,\varphi_\omega,\varphi_{\omega\omega})$ with coefficients polynomially depending on~$\omega$
that in addition satisfy the conditions
\begin{gather}\label{eq:dNLieReductionsOfCodim2Collection2ContraintsForRP}
M_{\varphi_{\omega\omega}}\ne0
\quad\mbox{or}\quad
M_{\varphi_{\omega\omega}}=0,\
M_{\varphi_\omega}(3M_{\varphi_\omega}+N_{\varphi_{\omega\omega}\varphi_{\omega\omega}})
(6M_{\varphi_\omega}+N_{\varphi_{\omega\omega}\varphi_{\omega\omega}})\ne0.
\end{gather}
Some of these equations cannot be represented in normal form due to their degeneration,
and the solution set of each of them splits into two parts
that are singled out by the constraints $M\ne0$ and $M=N=0$, respectively.
The left-hand sides of several of them even admit algebraic factorizations.
Therefore, the maximal Lie invariance algebra of such an equation $M\varphi_{\omega\omega\omega}+N=0$
is the intersection of the maximal Lie invariance algebras
of the equation $\varphi_{\omega\omega\omega}=-N/M$ with $M\ne0$ and of the (overdetermined) system $M=N=0$.
We prove that under the conditions~\eqref{eq:dNLieReductionsOfCodim2Collection2ContraintsForRP},
any Lie-symmetry vector field of the equation $\varphi_{\omega\omega\omega}=-N/M$ is
necessarily of the form $\xi\p_\omega+(\eta^1\varphi+\eta^0)\p_\varphi$,
where $\xi$, $\eta^1$ and~$\eta^0$ are functions of~$\omega$,
and then the computation of the maximal Lie invariance algebra of this equation
can be easily completed with a computer-algebra system even in the case of presence of a parameter.
After reducing the corresponding system $M=N=0$ to a passive form,
we also find its maximal Lie invariance algebra if its solution set is nonempty.
For each family of reduced equations under study in this section,
the construction of its point symmetry group is specific
and is carried out using the algebraic method by Hydon \cite{hydo1998a,hydo1998b,hydo2000b}.

\medskip\par\noindent{\bf 2.1.}\ 
$\mathfrak s_{2.1}^\lambda=\big\langle D^t(1),\,D^t(t)+\lambda D^{\rm s}\big\rangle$, $\lambda\ne-1/3$:
\begin{gather*}
u=|x|^\kappa\varphi,\quad
\smash{\omega=\frac yx, \quad \kappa:=\frac{9\lambda}{3\lambda+1}};\\[2ex]
\big(2\omega(\omega^3-1)\varphi_{\omega\omega}
-(\kappa-1)(3\omega^3-1)\varphi_\omega+\kappa(\kappa-1)\omega^2\varphi\big)
\varphi_{\omega\omega\omega}\\\ \ {}
-(\kappa-2)\big((5\omega^3-1)\varphi_{\omega\omega}{}^{\!\!2}
-(\kappa-1)\omega(11\omega\varphi_\omega-3\kappa\varphi)\varphi_{\omega\omega}
+(\kappa-1)^2(5\omega\varphi_\omega-2\kappa\varphi)\varphi_\omega\big)
=0.\!\!
\end{gather*}
In view of its definition, the parameter~$\kappa$ cannot be equal to~3
but we can neglect this fact by uniting reduction~2.1 with reduction~2.13,
which can be considered as corresponding to the values $\lambda=\pm\infty$ and $\kappa=3$,
see below.
Note that it is convenient to assume the family of reduced equations~2.1
to be parameterized~$\kappa$ instead of~$\lambda$.

The associated system $M=N=0$ is equivalent to the equation
\begin{gather*}
\varphi_\omega=0\quad\mbox{if}\quad\kappa=0,\\
\varphi_{\omega\omega}=0\quad\mbox{if}\quad\kappa=1,\\
2\omega(\omega^3-1)\varphi_{\omega\omega}-(3\omega^3-1)\varphi_\omega+2\omega^2\varphi=0\quad\mbox{if}\quad\kappa=2,\\
(w^3+1)\varphi_\omega=3\omega^2\varphi\quad\mbox{if}\quad \kappa=3,\\
(w^6-10w^3+1)\varphi_\omega=6\omega^2(w^3-5)\varphi\quad\mbox{if}\quad\kappa=6,\\
\varphi=0\quad\mbox{otherwise.}
\end{gather*}
The corresponding solutions of the original equation~\eqref{eq:dN}
belong to the family of trivial solutions~\eqref{eq:dNTrivialSolutions},
except the cases $\kappa=2$, see the consideration of this case below,
and $\kappa=6$, with the polynomial solutions $u=c_1(x^6-10x^3y^3+y^6)$ of~\eqref{eq:dN}, where $c_1$ is an arbitrary constant.

The normalizer of the subalgebra~$\mathfrak s_{2.1}^\lambda$ in~$\mathfrak g$ is
\begin{gather*}
{\rm N}_{\mathfrak g}(\mathfrak s_{2.1}^\lambda)=\big\langle D^t(1),D^t(t),D^{\rm s}\big\rangle\quad\mbox{if}\quad\lambda\ne0,1/6,\\[.2ex]
{\rm N}_{\mathfrak g}(\mathfrak s_{2.1}^0)=\big\langle D^t(1),D^t(t),D^{\rm s},Z(1)\big\rangle,\\
{\rm N}_{\mathfrak g}(\mathfrak s_{2.1}^{1/6})=\big\langle D^t(1),D^t(t),D^{\rm s},R^x(1),R^y(1)\big\rangle.
\end{gather*}
For a general value $\lambda\ne-1/3$, the vector fields $D^t(1)$, $3D^t(t)$ and $D^{\rm s}$
induce the Lie-symmetry vector fields $0$, $-\kappa\varphi\p_\varphi$ and $(3-\kappa)\varphi\p_\varphi$
of reduced equation~2.1$^\kappa$,
whereas for $\lambda=0$ and $\lambda=1/6$ (i.e., $\kappa=0$ and $\kappa=1$)
we in addition have inductions
$\p_\varphi$ by~$Z(1)$
and
$\p_\varphi$ and $\omega\p_\varphi$ by~$R^x(1)$ and $R^y(1)$,
respectively.
The discrete point symmetry transformations~$\mathscr J$ and~$\mathscr I^{\rm s}$ of the equation~\eqref{eq:dN},
see Corollary~\ref{cor:dNDiscrSyms}, induce the discrete point symmetry transformations
$(\tilde\omega,\tilde\varphi)=(\omega^{-1},|\omega|^{-\kappa}\varphi)$
and $(\tilde\omega,\tilde\varphi)=(\omega,-\varphi)$ of reduced equation~2.1$^\kappa$
for any~$\kappa$, respectively,
whereas the discrete point symmetry transformation~$\mathscr I^{\rm i}$ of the equation~\eqref{eq:dN}
corresponds to the identity transformation of $(\omega,\varphi)$.

For a general value $\lambda\ne-1/3$, the subalgebra~$\mathfrak s_{2.1}^\lambda$
has, up to the $G_{\rm L}$-equivalence, a single counterpart among subalgebras
of the algebra~$\mathfrak g_{\rm L}$,
$\bar{\mathfrak s}_{2.1}^\lambda=\big\langle\bar D^t(1),\,\bar D^t(t)+\lambda\bar D^{\rm s}\big\rangle$.
This is why ansatz~2.1 is extended to~$v$ as $v=|x|^{\kappa/2}\psi$,
and the nonlinear Lax representation~\eqref{eq:dNLaxPair} reduces to the system
\begin{gather}\label{eq:dNReduction2.1LaxPairGen}
\begin{split}&
12(\kappa-1)\big(2(\omega^3+1)\psi_\omega-\kappa\omega^2\psi\big)\varphi_\omega
-12\kappa(\kappa-1)\omega\big(2\omega\psi_\omega-\kappa\psi\big)\varphi
\\&
\qquad{}
+16\omega(\omega^3-1)\psi_\omega^{\,\,3}
-12\kappa(\omega^3-1)\psi\psi_\omega^{\,\,2}
+\kappa^3\omega\psi^3=0,
\\[.5ex]&
\omega\varphi_{\omega\omega}-(\kappa-1)\varphi_\omega
+\omega\psi_\omega^{\,\,2}-\frac\kappa2\psi\psi_\omega=0.
\end{split}
\end{gather}
For each of the values $\lambda=0$ and $\lambda=2/3$,
where $\kappa=0$ and $\kappa=2$,
there is another counterpart of the subalgebra~$\mathfrak s_{2.1}^\lambda$
among subalgebras of the algebra~$\mathfrak g_{\rm L}$
that is $G_{\rm L}$-inequivalent to the subalgebra~$\bar{\mathfrak s}_{2.1}^\lambda$,
$\bar{\mathfrak s}_{2.1'}^0=\big\langle\bar D^t(1),\,\bar D^t(t)+\frac13\bar P^v\big\rangle$ and
\smash{$\bar{\mathfrak s}_{2.1'}^{2/3}=\big\langle\bar D^t(1)+\bar P^v,\,\bar D^t(t)+\frac23\bar D^{\rm s}\big\rangle$},
respectively.
This results in one more $G_{\rm L}$-inequivalent extension of ansatz~2.1 to~$v$ for each of these values of~$\lambda$,
$v=\psi+\ln|x|$ and $v=x\psi+t$.
The corresponding reduced systems are
\begin{gather*}
\begin{split}&
3\big((\omega^3+1)\psi_\omega-\omega^2\big)\varphi_\omega
-2\omega(\omega^3-1)\psi_\omega^{\,\,3}
+3(\omega^3-1)\psi_\omega^{\,\,2}-\omega=0,
\\&
\omega\varphi_{\omega\omega}+\varphi_\omega
+\omega\psi_\omega^{\,\,2}-\psi_\omega=0
\end{split}
\end{gather*}
and
\begin{gather*}
\begin{split}&
3\big((\omega^3+1)\psi_\omega-\omega^2\psi\big)\varphi_\omega
-6\omega(\omega\psi_\omega-\psi)\varphi\\&
\qquad{}+2\omega(\omega^3-1)\psi_\omega^{\,\,3}
-3(\omega^3-1)\psi\psi_\omega^{\,\,2}+\omega\psi^3-3\omega=0,
\\[.5ex]&
\omega\varphi_{\omega\omega}-\varphi_\omega
+\omega\psi_\omega^{\,\,2}-\psi\psi_\omega=0.
\end{split}
\end{gather*}

For the specific values $\kappa\in\{0,1,2\}$ or, equivalently,~$\lambda\in\{0,1/6,2/3\}$,
we are able to construct more solutions than for the other values.

\medskip\par\noindent{$\kappa=2$.}
The solution set of reduced equation~2.1$^\kappa$ with $\kappa=2$, i.e., $\lambda=2/3$,
is a union of the solution sets of the equations $\varphi_{\omega\omega\omega}=0$ and
$2\omega(\omega^3-1)\varphi_{\omega\omega}-(3\omega^3-1)\varphi_\omega+2\omega^2\varphi=0$,
whose intersection consists only of the zero solution $\varphi=0$.
It can be proved that the maximal Lie invariance algebra~$\mathfrak a_{2.1}^2$
of reduced equation~2.1$^2$ is the intersection of the maximal Lie invariance algebras
of the above equations, $\mathfrak a_{2.1}^2=\langle\varphi\p_\varphi\rangle$,
and thus it is induced by ${\rm N}_{\mathfrak g}(\mathfrak s_{2.1}^2)$.

The solutions of the first equation are not interesting
since each related solution of the equation~\eqref{eq:dN} is $G$-equivalent
to either the zero solution $u=0$ or the solution~\eqref{eq:InvSolutions1.4} with $\beta=1$.
The general solution of the second equation is
\begin{gather*}
\varphi=c_1(\omega^{3/2}+1)^{4/3}+c_2(\omega^{3/2}-1)^{4/3}\quad\mbox{for}\quad\omega\geqslant0,\\
\varphi=(1-\omega^3)^{2/3}\left(
 c_1\cos\left(\frac43\arctan|\omega|^{3/2}\right)
+c_2\sin\left(\frac43\arctan|\omega|^{3/2}\right)\right)
\quad\mbox{for}\quad\omega\leqslant0,
\end{gather*}
where $c_1$ and $c_2$ are arbitrary constants,
and one of them, if nonzero, can be set to one by induced symmetries of reduced equation~2.1$^2$.
This leads to the following solutions of the equation~\eqref{eq:dN}:
\begin{gather*}
\solution
u=c_1(|x|^{3/2}+|y|^{3/2})^{4/3}+c_2(|y|^{3/2}-|x|^{3/2})^{4/3}\quad\mbox{for}\quad xy\geqslant0,\\
\solution
u=(x^3-y^3)^{2/3}\left(
 c_1\cos\left(\frac43\arctan\left|\frac yx\right|^{3/2}\right)
+c_2\sin\left(\frac43\arctan\left|\frac yx\right|^{3/2}\right)\right)
\quad\mbox{for}\quad xy\leqslant0,
\end{gather*}
where $c_1$ and $c_2$ are arbitrary constants,
and one of them, if nonzero, can be set to one up to the $G$-equivalence.

\medskip\par\noindent{$\kappa=1$.}
The solution set of reduced equation~2.1$^\kappa$ with $\kappa=1$, i.e., $\lambda=1/6$,
coincides with that of the equation
$2\omega(\omega^3-1)\varphi_{\omega\omega\omega}+(5\omega^3-1)\varphi_{\omega\omega}=0$
and thus consists of the functions
\[
\varphi=c_1\varphi^0(\omega)
+c_2\omega+c_3
\quad\mbox{with}\quad\varphi^0(\omega):=|\omega|^{3/2}\,{}_3F_2\!\left(\frac16,\frac12,\frac23;\frac76,\frac32;w^3\right),
\]
where ${}_pF_q(a_1,\dots,a_p;b_1,\dots,b_q;z)$ is the generalized hypergeometric function.
Hence the maximal Lie invariance algebra of reduced equation~2.1$^1$ is
$\mathfrak a_{2.1}^1=\langle\p_\varphi,\omega\p_\varphi,\varphi^0(\omega)\p_\varphi,\varphi\p_\varphi\rangle$,
and its subalgebra induced by ${\rm N}_{\mathfrak g}(\mathfrak s_{2.1}^1)$ is
$\langle\p_\varphi,\omega\p_\varphi,\varphi\p_\varphi\rangle$,
i.e., any element of $\mathfrak a_{2.1}^1$ with nonzero coefficient of $\varphi^0(\omega)\p_\varphi$
is a hidden Lie-symmetry of the equation~\eqref{eq:dN}
that is associated with reduction~2.1$^1$.

Up to induced symmetries of reduced equation~2.1$^1$,
we can set $c_1=1$, 
and $c_2=c_3=0$.
Thus, the only corresponding $G$-inequivalent solutions of the equation~\eqref{eq:dN} is
\[
\solution
u=
\sqrt{\left|\frac{y^3}x\right|}\,\,{}_3F_2\!\left(\frac16,\frac12,\frac23;\frac76,\frac32;\frac{y^3}{x^3}\right).
\]

\medskip\par\noindent{$\kappa=0$.}
The maximal Lie invariance algebra~$\mathfrak a_{2.1}^0$
of reduced equation~2.1$^0$ is equal to $\langle\p_\varphi,\varphi\p_\varphi\rangle$,
and thus it is entirely induced by ${\rm N}_{\mathfrak g}(\mathfrak s_{2.1}^0)$.
The reduced system~\eqref{eq:dNReduction2.1LaxPairGen} with $\kappa=0$ degenerates to
\begin{gather}\label{eq:dNReduction2.1LaxPairGenKappa0}
\psi_\omega\big(3(\omega^3+1)\varphi_\omega-2\omega(\omega^3-1)\psi_\omega^{\,\,2}\big)=0,
\quad
\omega\varphi_{\omega\omega}+\varphi_\omega+\omega\psi_\omega^{\,\,2}=0.
\end{gather}
It is obvious that the integration of the system~\eqref{eq:dNReduction2.1LaxPairGenKappa0} splits into two cases.
If $\psi_\omega=0$, then it is equivalent to the equation $\omega\varphi_{\omega\omega}+\varphi_\omega=0$,
whose general solution is $\varphi=c_1\ln|\omega|+c_0$
and gives only trivial solutions of the equation~\eqref{eq:dN} from the family~\eqref{eq:dNTrivialSolutions}.
Under the constraint $\psi_\omega\ne0$,
we easily exclude $\psi_\omega$ from the system~\eqref{eq:dNReduction2.1LaxPairGenKappa0}
and derive the equation $2\omega(\omega^3-1)\varphi_{\omega\omega}+(5\omega^3+1)\varphi_\omega=0$.
The general solution of this equation is
\begin{gather*}
\varphi=c_1\ln\left|\frac{\omega^{3/2}+1}{\omega^{3/2}-1}\right|+c_2\quad\mbox{for}\quad\omega\geqslant0,\qquad
\varphi=c_1\arctan|\omega|^{3/2}+c_2\quad\mbox{for}\quad\omega\leqslant0.
\end{gather*}
Up to induced symmetries of reduced equation~2.1$^0$,
we can set $c_1=1$, and $c_2=0$.
This leads to the following solutions of the equation~\eqref{eq:dN}:
\begin{gather*}
\solution
u=\ln\left|\frac{|x|^{3/2}+|y|^{3/2}}{|x|^{3/2}-|y|^{3/2}}\right|\quad\mbox{for}\quad xy\geqslant0,\qquad\qquad
\solution
u=\arctan\left|\frac yx\right|^{3/2}
\quad\mbox{for}\quad xy\leqslant0.
\end{gather*}
The independent variable~$\omega$ and the ratio $\varphi_{\omega\omega}/\varphi_\omega$
are the lowest-order differential invariants of the solvable algebra~$\mathfrak a_{2.1}^0$.
Therefore, the change of dependent variable $p=\varphi_{\omega\omega}/\varphi_\omega$
lowers the order of reduced equation~2.1$^0$ by two.
The derived equation
\[
(2\omega(\omega^3-1)p+3\omega^3-1)p_\omega+2\omega(\omega^3-1)p^3+(13\omega^3-3)p^2+22p\omega^2+10\omega=0
\]
integrates to
\[
\frac{\big(\omega(\omega^3-1)^2p^2+(3\omega^3-1)(\omega^3-1)p+\omega^2(2\omega^3-5)\big)^3}
{\big(2\omega(\omega^3-1)p+5\omega^3+1\big)^4\big(\omega p+1\big)^2}=c_1.
\]
Substituting $\varphi_{\omega\omega}/\varphi_\varphi$ for~$p$
into the last equation, we obtain the first integral of reduced equation~2.1$^0$.
We are not able to integrate further for the general value of~$c_1$.
Nevertheless, setting $c_1=0$ simplifies the equation into be integrated to
the equation
\[
\omega(\omega^3-1)^2\varphi_{\omega\omega}^{\,\,2}
+(3\omega^3-1)(\omega^3-1)\varphi_{\omega\omega}\varphi_\omega
+\omega^2(2\omega^3-5)\varphi_\omega^{\,\,2}=0,
\]
which is easily solved as a quadratic equation with respect to~$\varphi_{\omega\omega}/\varphi_\omega$
and integrated twice.
As a result, we construct the following solution of reduced equation~2.1$^\kappa$ with $\kappa=0$:
\begin{gather*}
\varphi=\int
\frac{|s+7+K|^{1/6}|7s+1+K|^{1/6}(2s+2-K)^{2/3}}{3s(s-1)}
\,\bigg|_{K=\sqrt{s^2+14s+1}}
\,{\rm d}s\,\bigg|_{s=\omega^3}.
\end{gather*}
The corresponding solution of the original equation~\eqref{eq:dN} is
\begin{gather*}
\solution
u=\int
\frac{|s+7+K|^{1/6}|7s+1+K|^{1/6}(2s+2-K)^{2/3}}{3s(s-1)}
\,\bigg|_{K=\sqrt{s^2+14s+1}}
\,{\rm d}s\,\bigg|_{s=(y/x)^3}.
\end{gather*}

For the general value of~$\kappa$, $\kappa\ne0,1,2$,
all Lie and discrete point symmetries of the associated equation $\varphi_{\omega\omega\omega}=-N/M$
are symmetries of the system $M=N=0$.
This is why the maximal Lie invariance algebra~$\mathfrak a_{2.1}^\kappa$
of reduced equation~2.1$^\kappa$ is equal to $\langle\varphi\p_\varphi\rangle$,
and thus it is entirely induced by ${\rm N}_{\mathfrak g}(\mathfrak s_{2.1}^\kappa)$.

We compute the point symmetry group~$G_{2.1}^\kappa$ of reduced equation~2.1$^\kappa$
with an arbitrary nonsingular value $\kappa\ne0,1,2$ by the algebraic method.
Let $\Phi$: $\tilde\omega=\Omega(\omega,\varphi)$, $\tilde\varphi=F(\omega,\varphi)$
with $\Omega_\omega F_\varphi-\Omega_\varphi F_\omega\ne0$
be a point symmetry transformation of this equation.
From the condition $\Phi_*\mathfrak a_{2.1}\subseteq\mathfrak a_{2.1}$,
we only derive the equations $\Omega_\varphi=0$ and $\varphi F_\varphi=aF$,
which mean that $\Omega=\Omega(\omega)$ with $\Omega_\omega\ne0$ and
$F=g(\omega)\varphi^a+f(\omega)$
with a nonzero constant~$a$, a nonvanishing function~$g$ of~$\omega$ and a function~$f$ of~$\omega$.
The further computation by the direct method is the most complicated among such computations in the present paper.
The left-hand side $L[\varphi]$ of reduced equation~2.1$^\kappa$ is a homogeneous second-degree polynomial
with respect to the unknown function~$\varphi$ and its derivatives with coefficients depending on~$\omega$.
After expanding the transformed equation with taking into account the obtained form of~$\Phi$
and collecting the coefficients of $\varphi_{\omega\omega}{}^{\!3}\varphi_\omega$,
we first derive the equation $a=1$,
which means that the transformation~$\Phi$ is affine with respect to~$\varphi$.
Then the condition of preserving reduced equation~2.1$^\kappa$ by~$\Phi$
can be written in the form $L[\Phi_*\varphi]=K(\omega)L[\varphi]$,
where $K$ is a nonvanishing function of~$\omega$.
Collecting, in the last equality, coefficients of the terms that are of degree two
with respect to the unknown function~$\varphi$ and its derivatives
leads to a system of determining equations for the functions~$\Omega$, $g$ and~$K$,
whose general solution consists of two families,
$(\Omega,g,K)=(\omega,c_1,c_1^{\,2})$ and
$(\Omega,g,K)=(\omega^{-1},c_1\omega^{-\kappa},c_1^{\,2}\omega^{-2\kappa-11})$,
where $c_1$ is an arbitrary nonzero constant.
For each of the found solutions for $(\Omega,g,K)$,
the system for~$f$ derived by collecting coefficients of the remaining terms
only has the zero solution.
As a result, for any value of~$\kappa$
the entire group~$G_{2.1}^\kappa$ is induced by the stabilizer of~$\mathfrak s_{2.1}^\lambda$ in~$G$
with the corresponding value of~$\lambda$.

Polynomial solutions of reduced equations~2.1$^\kappa$
whose degree is not greater than five and that result in nonpolynomial solutions
of the original equation~\eqref{eq:dN} are exhausted by
$\varphi=c\omega$ for $\kappa=5/2$ and
$\varphi=c(\omega^3-8/21)$ for $\kappa=9/2$,
where $c=1$ modulo the induced $G_{1.1}$-equivalence.
The first solution corresponds to
the solution $u=|x|^{3/2}y$ of~\eqref{eq:dN}, which can also be obtained and, moreover, generalized
using the multiplicative separation of variables;
make the permutation~$\mathscr J$ of~$x$ and~$y$ in the last solution of Section~\ref{sec:dNMultiplicativeSeparationOfVars}.
The first solution gives a new solution of~\eqref{eq:dN},
\[
\solution
u=|x|^{9/2}\left(\frac{y^3}{x^3}-\frac 8{21}\right).
\]

\par\noindent{\bf 2.2.}\  
$\mathfrak s_{2.2}^\nu=\big\langle D^t(1),\,D^t(t)-\tfrac13 D^{\rm s}+P^x(1)+P^y(\nu)\big\rangle$, \ $|\nu|\leqslant1$ $(\!{}\bmod G)$:
\begin{gather*}
u={\rm e}^{-x}\varphi,\quad  \omega=y-\nu x;\\
\big(2\nu(\nu^3-1)\varphi_{\omega\omega}+(3\nu^3-1)\varphi_\omega+\nu^2\varphi\big)\varphi_{\omega\omega\omega}\\
\qquad{}+(5\nu^3-1)\varphi_{\omega\omega}{}^{\!\!2}
+\nu(11\nu\varphi_\omega+3\varphi)\varphi_{\omega\omega}
+5\nu\varphi_\omega^{\,\,2}+2\varphi\varphi_\omega=0.
\end{gather*}
The associated system $M=N=0$ (see the beginning of this section)
is equivalent to the equation
$\varphi_\omega=0$ if $\nu=0$,
$2\varphi_\omega=\varphi$ if $\nu=-1$ or
$\varphi=0$ otherwise;
the corresponding solutions of the original equation~\eqref{eq:dN}
belong to the family of trivial solutions~\eqref{eq:dNTrivialSolutions}
or to the family~\eqref{eq:dNs1.3Rho1InvarSolutions}.
All Lie and discrete point symmetries of the associated equation $\varphi_{\omega\omega\omega}=-N/M$
are symmetries of the system $M=N=0$.
This is why for any value of $\nu$, the maximal Lie invariance algebra of reduced equation~2.2$^\nu$
is the algebra $\mathfrak a_{2.2}=\langle\p_\omega,\varphi\p_\varphi\rangle$,
and this equation is invariant with respect to the group~$G_{2.2}$,
which consists of the point transformations
$\tilde\omega=\omega+\tilde c_1$, $\tilde\varphi=\tilde c_2\varphi$,
where $\tilde c_1$ and~$\tilde c_2$ are arbitrary constants with $\tilde c_2\ne0$.
All the subalgebras~$\mathfrak s_{2.2}^\nu$ have the same normalizer
${\rm N}_{\mathfrak g}(\mathfrak s_{2.2}^\nu)=\langle D^t(1),D^t(t)-\tfrac13D^{\rm s},P^x(1),P^y(1)\rangle$
in~$\mathfrak g$.
The vector fields $D^t(1)$, $D^t(t)-\tfrac13D^{\rm s}$, $P^x(1)$ and $P^y(1)$
from ${\rm N}_{\mathfrak g}(\mathfrak s_{2.2}^\nu)$
induce the Lie-symmetry vector fields
$0$, $-\varphi\p_\varphi$, $-\nu\p_\omega+\varphi\p_\varphi$ and $\p_\omega$
of reduced equation~2.2$^{\nu}$, respectively,
and thus the algebra~$\mathfrak a_{2.2}$ is entirely induced by elements of ${\rm N}_{\mathfrak g}(\mathfrak s_{2.2}^{\nu})$.
The discrete point symmetry transformation
$\mathscr I^{\rm i}\circ\mathscr I^{\rm s}$: $(\tilde t,\tilde x,\tilde y,\tilde u)=(-t,x,y,-u)$
of~\eqref{eq:dN} induces the discrete point symmetry transformation $(\tilde\omega,\tilde\varphi)=(\omega,-\varphi)$
for any of reduced equations~2.2$^\nu$.
Therefore, the entire group~$G_{2.2}$ is induced by the point symmetry group~$G$ of the original equation~\eqref{eq:dN}.

We construct the point symmetry group~$G_{2.2}^\nu$ of reduced equation~2.2$^\nu$
with an arbitrary fixed~$\nu$ using the algebraic method.
Let $\Phi$: $\tilde\omega=\Omega(\omega,\varphi)$, $\tilde\varphi=F(\omega,\varphi)$
with $\Omega_\omega F_\varphi-\Omega_\varphi F_\omega\ne0$
be a point symmetry transformation of this equation.
The necessary condition $\Phi_*\mathfrak a_{2.2}\subseteq\mathfrak a_{2.2}$
implies the equations $\Omega_\omega=a_{11}$, $\varphi\Omega_\varphi=a_{21}$, $F_\omega=a_{12}F$ and $\varphi F_\varphi=a_{22}F$,
where $a_{11}$, $a_{12}$, $a_{21}$ and~$a_{22}$ are constants with $a_{11}a_{22}-a_{12}a_{21}\ne0$.
Therefore,
\[
\Omega=a_{11}\omega+a_{21}\ln|\varphi|+\tilde c_1,\quad
F=\tilde c_2{\rm e}^{a_{12}\omega}\varphi^{a_{22}},
\]
where $\tilde c_1$ and~$\tilde c_2$ are arbitrary constants with $\tilde c_2\ne0$.
We continue the computation with the direct method using the derived form for~$\Phi$,
which leads to a cumbersome overdetermined system of determining equations
for the parameters~$a_{11}$, $a_{12}$, $a_{21}$ and $a_{22}$,
whose solution depends on the value of~$\nu$.
For $\nu\ne\pm1$, we obtain the single solution $a_{12}=a_{21}=0$, $a_{11}=a_{22}=1$,
i.e., the complete point symmetry group~$G_{2.2}^\nu$ of reduced equation~2.2$^\nu$
with such values of~$\nu$ coincides with the common point symmetry group~$G_{2.2}$.
For each $\nu\in\{-1,1\}$, there is exactly one more solution
$a_{11}=-1$, $a_{21}=0$, $a_{12}=\nu$, $a_{22}=1$,
i.e., in addition to the elements of~$G_{2.2}$,
the complete point symmetry group~$G_{2.2}^\nu$ of reduced equation~2.2$^\nu$ with $\nu=\pm1$
contains the transformations
$\tilde\omega=-\omega+\tilde c_1$,
$\tilde\varphi=\tilde c_2{\rm e}^{\nu\omega}\varphi$,
where $\tilde c_1$ and~$\tilde c_2$ are again arbitrary constants with $\tilde c_2\ne0$.
This means that the group~$G_{2.2}^\nu$ with $\nu=1$ or $\nu=-1$ is generated by the elements of~$G_{2.2}$
and the discrete point symmetry transformation
$(\tilde\omega,\tilde\varphi)=(-\omega,{\rm e}^{\nu\omega}\varphi)$,
which is induced by the discrete point symmetry~$\mathscr J$ or $\mathscr J\circ\mathscr I^{\rm s}$
of the original equation~\eqref{eq:dN}, respectively.
Therefore, for any value of~$\nu$
the group~$G_{2.2}^\nu$ is entirely induced by the stabilizer of~$\mathfrak s_{2.2}^\nu$ in~$G$.

Up to the $G_{\rm L}$-equivalence, the subalgebra~$\mathfrak s_{2.2}^\nu$ with any value of~$\nu$
is prolonged in a unique way to~$v$, which leads to the subalgebra
$\bar{\mathfrak s}_{2.2}^\nu=\big\langle\bar D^t(1),\,\bar D^t(t)-\tfrac13\bar D^{\rm s}+\bar P^x(1)+\bar P^y(\nu)\big\rangle$ of the algebra~$\mathfrak g_{\rm L}$.
Therefore, up to the $G_{\rm L}$-equivalence, there is a unique extension \smash{$v={\rm e}^{-x/2}\psi$} of ansatz~2.2 to~$v$,
and the corresponding reduced system is
\begin{gather*}
\noprint{
12\big(2(\nu^3-1)\psi_\omega+\nu^2\psi\big)\varphi_{\omega\omega}
+24\nu(2\nu\psi_\omega+\psi)\varphi_\omega+12(2\nu\psi_\omega+\psi)\varphi
\\\qquad{}
+8(\nu^3-1)\psi_\omega^{\,\,3}
+12\nu^2\psi\psi_\omega^{\,\,2}
+6\nu\psi^2\psi_\omega+\psi^3=0,
\\[1ex]
}
24\psi_\omega\varphi_{\omega\omega}
-12(2\nu\psi_\omega+\psi)(\nu\varphi_\omega+\varphi)
+8(2\nu^3+1)\psi_\omega^{\,\,3}
+12\nu^2\psi\psi_\omega^{\,\,2}-\psi^3=0,
\\
\nu\varphi_{\omega\omega}+\varphi_\omega+\nu\psi_\omega^{\,\,2}
+\frac12\psi\psi_\omega=0.
\end{gather*}

It is obvious that an arbitrary function of the form $\varphi=c_1{\rm e}^{-\omega/\nu}+c_2$
if $\nu\ne0$ or an arbitrary constant if $\nu=0$ is a solution of reduced equation~2.2$^\nu$,
and these solutions lead to trivial solutions of the equation~\eqref{eq:dN}
from the family~\eqref{eq:dNTrivialSolutions}.
Further we ignore the above trivial solutions.

Reduced equation~2.2$^0$ is especially short,
$\varphi_\omega\varphi_{\omega\omega\omega}+\varphi_{\omega\omega}{}^{\!\!2}=2\varphi\varphi_\omega$,
and integrates twice to $\varphi_\omega^{\,\,3}=\varphi^3+c_1\varphi+c_2$,
where $c_1$ and~$c_2$ are the integration constants.
Separating the variables and integrating further,
we construct the general solution of reduced equation~2.2$^0$ in an implicit form with one quadrature,
\begin{gather}\label{eq:ReducedEq2.2_0ImplicitExpression}
\int\frac{{\rm d}\varphi}{(\varphi^3+c_1\varphi+c_2)^{1/3}}=\omega+c_3,
\end{gather}
where $c_3$ is one more integration constant.
The corresponding solutions of the original equation~\eqref{eq:dN} are of the form
\begin{gather}\label{eq:s2.2InvSolutions}
\solution
u={\rm e}^{-x}\varphi(y),
\end{gather}
where the function $\varphi=\varphi(y)$
is implicitly defined by~\eqref{eq:ReducedEq2.2_0ImplicitExpression},
where $\omega=y$, $c_3=0 \pmod G$,
and, up to the $G$-equivalence,
the constant~$c_1$, if it is nonzero, can be set to be equal $\pm1$
or the constant~$c_2$, if it is nonzero, can be set to be equal one.
The solution family~\eqref{eq:s2.2InvSolutions} can be extended
using the multiplicative separation of variables,
see Section~\ref{sec:dNMultiplicativeSeparationOfVars}.
The formula~\eqref{eq:ReducedEq2.2_0ImplicitExpression} obviously leads to explicit solutions
of reduced equation~2.2$^0$ only if $c_1=c_2=0$,
which gives $\varphi=\tilde c_3{\rm e}^\omega$ with an arbitrary constant~$\tilde c_3$.
All the corresponding solutions of the equation~\eqref{eq:dN}, $u=\tilde c_3{\rm e}^{y-x}$,
belong to the family of simple solutions~\eqref{eq:dNs1.3Rho1InvarSolutions}.

For several specific values of $(c_1,c_2)$,
when the integral in the left-hand side of~\eqref{eq:ReducedEq2.2_0ImplicitExpression}
is reduced to cases of the Chebyshev theorem on the integration of binomial differentials,
it be expressed in terms of elementary functions.
This gives the following $G_{2.2}$-inequivalent
parametric solutions (without quadratures) of reduced equation~2.2$^0$:
\begin{alignat}{3}&\nonumber
\hspace*{-\mylength}\circ\quad
\varphi=\frac{s^3+2}{s^3-1}&&
\quad\mbox{with}\quad\frac12\ln\frac{s^2+s+1}{(s-1)^2}-\sqrt3\arctan\frac{2s+1}{\sqrt3}= y,
&\\[1ex]&\label{eq:ReducedEq2.2_0ImplicitExpression2}
\hspace*{-\mylength}\circ\quad
\varphi=|s^3-1|^{-1/2}&&
\quad\mbox{with}\quad\frac12\ln\frac{s^2+s+1}{(s-1)^2}-\sqrt3\arctan\frac{2s+1}{\sqrt3}=2y,&\\[1ex]&\nonumber
\hspace*{-\mylength}\circ\quad
\varphi=(s^3-1)^{-1/3}&&
\quad\mbox{with}\quad\frac12\ln\frac{s^2+s+1}{(s-1)^2}-\sqrt3\arctan\frac{2s+1}{\sqrt3}=3y&
\end{alignat}
if $4c_1^3=-27c_2^2$, ($c_1\ne0$, $c_2=0$) and ($c_1=0$, $c_2\ne0$), respectively;
cf.\ \cite[Section~4.1.1.2]{moro2021a},
where there are several typos and a needless involvement of complex numbers.
In the first case, the polynomial $\varphi^3+c_1\varphi+c_2$
has a root~$\lambda$ of multiplicity two and can thus be factorized to $(\varphi+2\lambda)(\varphi-\lambda)^2$,
i.e., $c_1=-3\lambda^2$ and $c_2=2\lambda^3$.
It is then obvious that we can set $\lambda=1$ up to $G_{2.2}$-equivalence, more precisely, by scaling of~$\varphi$.
In the second and the third cases, we analogously can set $c_1=\sgn(s^3-1)$ and $c_2=1$, respectively.
For $c_1\ne0$, the integral in the left-hand side of~\eqref{eq:ReducedEq2.2_0ImplicitExpression}
was reduced in \cite[Eqs.~(33)--(34)]{moro2021a} to an integral
that, as stated therein, can be expressed in terms of elliptic functions
but the corresponding representation of the solution~\eqref{eq:ReducedEq2.2_0ImplicitExpression} does not seem useful.

For general values of~$\nu$, the order of reduced equation~2.2$^\nu$
can be lowered by the differential substitution $z=\varphi_\omega/\varphi$, $p=\varphi_{\omega\omega}/\varphi$
inspired by the Lie invariance algebra~$\mathfrak a_{2.2}$.
This leads to a~first-order ordinary differential equation with respect to~$p=p(z)$,
\begin{gather}\label{eq:ReducedEq2.2Lowered}
\begin{split}&
\big(2\nu(\nu^3-1)p+3\nu^3z-z+\nu^2\big)(p-z^2)p_z\\&\qquad{}
+\big(2\nu(\nu^3-1)z+5\nu^3-1\big)p^2
+\big((3\nu^3-1)z^2+12\nu^2z+3\nu\big)p+5\nu z^2+2z=0.
\end{split}
\end{gather}

Looking for solutions of~\eqref{eq:ReducedEq2.2Lowered} that are at most quadratic with respect to~$z$,
we construct only the solutions
$p=-z/\nu$ if $\nu=0$,
$p=4$ if $\nu=1/2$ and
$p=z-1$ if $\nu=-1$.
For reduced equation~2.2$^\nu$, this gives
the above trivial solutions $\varphi=c_1{\rm e}^{-\omega/\nu}+c_2$ if $\nu=0$ as well as
$\varphi=c_1{\rm e}^{-2\omega}+c_2{\rm e}^{2\omega}$ if $\nu=1/2$ and
$\varphi=c_1{\rm e}^{\omega/2}\cos(\frac12\sqrt3\omega+c_2)$ if $\nu=-1$.
As a result, up to $G$-equivalence we construct the following new solutions of the original equation~\eqref{eq:dN}:
\[
\solution u={\rm e}^{y-x}\pm{\rm e}^{-y},\qquad\qquad
\solution u={\rm e}^{y-x}\cos\big(\sqrt3(x+y)\big).
\]

If $\nu=1$, the equation~\eqref{eq:ReducedEq2.2Lowered} becomes the Abel equation of the second kind,
\[(2z+1)(p-z^2)p_z+4p^2+(2z^2+12z+3)p+5z^2+2z=0,\]
which is reduced by the point transformation $s=z+\frac12$, $r=(z+\frac12)^2(p-z^2)$ 
to the simpler Abel equation
\[
16rr_s+4s(28s^2-1)r+s^3(4s^2-1)(12s^2+1)=0,
\]
whose general solution in implicit form is
\[
\frac{(4r+4s^4-s^2)^2\big((144r+144s^4-1)^2-(12s^2+1)^3\big)}
{\big(3\big(32r+(8s^2+1)(4s^2-1)\big)^2-(8s^2+1)(4s^2-1)^2\big)^2}=c_1.
\]

The latter equation has two polynomial solutions up to degree four,
$r=-\frac14 s^2(4s^2-1)$ and
$r=-\frac1{64}(4s^2-1)(12s^2+1)$,
which correspond to the values $c_1=0$ and $c_1=1/256$ and the solutions
\[
p=-z
\quad\mbox{and}\quad
p=\frac{z(z^3-2z^2-3z-1)}{(2z+1)^2}
\]
of the former Abel equation, respectively.
After the inverse differential substitution,
we respectively obtain two ordinary differential equations.
The integration of the first one only results in trivial solutions
of the original equation~\eqref{eq:dN},
whereas solving the second equation, we construct, up to the $G$-equivalence, the following parametric solution of~\eqref{eq:dN}:
\[
\solution
u={\rm e}^{-(x+y)/2}\frac{(3z^2+3z+1)^{-1/6}}{|z|^{1/2}|z+1|^{1/2}}
\quad\mbox{with}\quad
\ln\left|\frac{z+1}z\right|-\frac2{\sqrt3}\arctan\big(\sqrt3(2z+1)\big)=y-x.
\]

\par\noindent{\bf 2.3.}\   
$\tilde{\mathfrak s}_{2.3}^\nu=\big\langle D^t(1),\,2D^t(t)+\tfrac13D^{\rm s}+R^x(1)+R^y(\nu)\big\rangle$,
$|\nu|\!\leqslant\!1$ $(\!{}\bmod G)$
(we replace the subalgebra~$\mathfrak s_{2.3}^\nu$ by the $G$-equivalent subalgebra $\tilde{\mathfrak s}_{2.3}^\nu$
for convenience of the reduction procedure):
\begin{gather*}
u=x\varphi+(y+\nu x)\ln|x|,\quad
\omega=y/x;\\
\big(2\omega(\omega^3-1)\varphi_{\omega\omega}-2\omega^3+\nu\omega^2+1\big)\varphi_{\omega\omega\omega}
+(5\omega^3-1)\varphi_{\omega\omega}{}^{\!\!2}
-\omega(8\omega-3\nu)\varphi_{\omega\omega}+3\omega-2\nu=0.
\end{gather*}
For any values of $\nu$, reduced equation~2.3$^\nu$ can be represented in normal form
since the coefficient of~$\varphi_{\omega\omega\omega}$ in it does not vanish on its solutions.
Its maximal Lie invariance algebra is
the algebra $\mathfrak a_{2.3}=\langle\p_\varphi,\omega\p_\varphi\rangle$.
The corresponding Lie group~$G_{2.3}$ consists of the point transformations
$\tilde\omega=\omega$, $\tilde\varphi=\varphi+\tilde c_1\omega+\tilde c_2$,
where $\tilde c_1$ and~$\tilde c_2$ are arbitrary constants.
All the subalgebras~$\tilde{\mathfrak s}_{2.3}^\nu$ have the same normalizer
${\rm N}_{\mathfrak g}(\tilde{\mathfrak s}_{2.3}^\nu)=\langle D^t(1),2D^t(t)+\tfrac13D^{\rm s},R^x(1),R^y(1)\rangle$ in~$\mathfrak g$.
The vector fields
$D^t(1)$, $2D^t(t)+\tfrac13D^{\rm s}$, $R^x(1)$ and $R^y(1)$
from ${\rm N}_{\mathfrak g}(\tilde{\mathfrak s}_{2.3}^\nu)$
induce the Lie-symmetry vector fields
$0$, $-(\omega+\nu)\p_\varphi$, $\p_\varphi$ and $\omega\p_\varphi$
of reduced equation~2.3$^{\nu}$, respectively,
i.e., the algebra~$\mathfrak a_{2.3}$ is entirely induced,
and thus the entire group~$G_{2.3}$ is induced by the stabilizer of~$\tilde{\mathfrak s}_{2.3}^\nu$ in~$G$.

Using the algebraic method,
we compute the point symmetry group~$G_{2.3}^\nu$ of reduced equation~2.3$^\nu$
for any fixed value of~$\nu$.
Let $\Phi$: $\tilde\omega=\Omega(\omega,\varphi)$, $\tilde\varphi=F(\omega,\varphi)$
with $\Omega_\omega F_\varphi-\Omega_\varphi F_\omega\ne0$
be a point symmetry transformation of this equation.
The condition $\Phi_*\mathfrak a_{2.3}\subseteq\mathfrak a_{2.3}$,
implies the equations $\Omega_\varphi=0$, $F_\varphi=a_{11}+a_{12}\Omega$ and $\omega F_\varphi=a_{21}+a_{22}\Omega$,
where $a_{11}$, $a_{12}$, $a_{21}$ and~$a_{22}$ are constants with $a_{11}a_{22}-a_{12}a_{21}\ne0$.
Therefore,
\[
\Omega=\frac{-a_{11}\omega+a_{21}}{a_{12}\omega-a_{22}},\quad
F=-\frac{a_{11}a_{22}-a_{12}a_{21}}{a_{12}\omega-a_{22}}\varphi+f(\omega)
\]
with a function~$f$ of~$\omega$.
Taking into account the derived form for~$\Phi$,
we continue the computation with the direct method.
As a result, we obtain a cumbersome overdetermined system of determining equations
for the parameters~$a_{11}$, $a_{12}$, $a_{21}$, $a_{22}$ and~$f$,
whose solution depends on the value of~$\nu$.
For $\nu\ne\pm1$, we obtain that
$a_{12}=a_{21}=0$, $a_{11}=a_{22}=1$ and $f=\tilde c_1\omega+\tilde c_2$
with arbitrary constants~$\tilde c_1$ and~$\tilde c_2$.
In other words, the complete point symmetry group~$G_{2.3}^\nu$ of reduced equation~2.3$^\nu$
with $\nu\ne\pm1$ coincides with the common Lie symmetry group~$G_{2.3}$.
For $\nu=\pm1$, we have additional solutions,
$a_{11}=a_{22}=0$, $a_{12}=a_{21}=\nu$, $f=-(\nu+\omega^{-1})\ln|\omega|+\tilde c_1\omega^{-1}+\tilde c_2$,
where~$\tilde c_1$ and~$\tilde c_2$ are again arbitrary constants.
Hence the complete point symmetry group~$G_{2.3}^\nu$ of reduced equation~2.3$^\nu$
with $\nu=\pm1$ is generated by the elements of the common Lie symmetry group~$G_{2.3}$
and the discrete point symmetry transformation
$\tilde\omega=\omega^{-1}$, $\tilde\varphi=\nu\omega^{-1}\varphi-(\nu+\omega^{-1})\ln|\omega|$,
which is induced by the discrete point symmetries~$\mathscr J$ and $\mathscr J\circ\mathscr I^{\rm s}$
of the original equation~\eqref{eq:dN}
if $\nu=1$ and $\nu=-1$, respectively.
Therefore, for any value of~$\nu$
the group~$G_{2.3}^\nu$ is entirely induced by the stabilizer of~$\tilde{\mathfrak s}_{2.3}^\nu$ in~$G$.

For each value of $\nu$, the subalgebra~$\tilde{\mathfrak s}_{2.3}^\nu$
has, up to the $G_{\rm L}$-equivalence, a single counterpart among subalgebras
of the algebra~$\mathfrak g_{\rm L}$,
$\bar{\mathfrak s}_{2.3}^\nu=\big\langle\bar D^t(1),\,2\bar D^t(t)+\tfrac13\bar D^{\rm s}+\bar R^x(1)+\bar R^y(\nu)\big\rangle$.
This is why ansatz~2.3 is extended to~$v$ in a unique way up to the $G_{\rm L}$-equivalence
as \smash{$v=|x|^{1/2}\psi$},
and the nonlinear Lax representation~\eqref{eq:dNLaxPair} reduces to the system
\begin{gather}\label{eq:dNReduction2.3LaxPairGen}
\begin{split}&
16\varepsilon\omega(\omega^3-1)\psi_\omega^{\,\,3}
-12\varepsilon(\omega^3-1)\psi\psi_\omega^{\,\,2}
-24(\nu\omega^2-1)\psi\psi_\omega
+\varepsilon\omega\psi^3+12\nu\omega\psi=0,
\\[.5ex]&
\omega\varphi_{\omega\omega}
+\varepsilon\omega\psi_\omega^{\,\,2}
-\frac\varepsilon2\psi\psi_\omega-1=0,
\end{split}
\end{gather}
where $\varepsilon:=\sgn x$.

Reduced equation~2.3$^\nu$ is an Abel equation of the second kind with respect to~$\varphi_{\omega\omega}$.
Its particular solution $\varphi=\omega\ln|\omega|$ corresponds to
a solution of the system~\eqref{eq:dNReduction2.3LaxPairGen} with $\psi=0$ and
the trivial solution $u=y\ln|y|+\nu x\ln|x|$ of~\eqref{eq:dN} from the family~\eqref{eq:dNTrivialSolutions}.
The differential substitution $\varphi_{\omega\omega}=p+\omega^{-1}$
maps reduced equation~2.3$^\nu$ to the simpler Abel equation of the second kind
$\big(2\omega(\omega^3-1)p+\nu\omega^2-1\big)p_\omega+(5\omega^3-1)p^2+3\nu\omega p=0$.

\medskip\par\noindent{\bf 2.4.}\ 
$\tilde{\mathfrak s}_{2.4}=\big\langle D^t(1),\,3D^t(t)+Z(1)\big\rangle$
(we replace the subalgebra~$\mathfrak s_{2.4}$ by the $G$-equivalent subalgebra $\tilde{\mathfrak s}_{2.4}$
for convenience of the reduction procedure):
\begin{gather*}
u=\varphi+\ln|x|,\quad
\omega=y/x;\\[1ex]
\big(2\omega(\omega^3-1)\varphi_{\omega\omega}+(3\omega^3-1)\varphi_\omega-\omega^2\big)\varphi_{\omega\omega\omega}\\\qquad{}
+2(5\omega^3-1)\varphi_{\omega\omega}{}^{\!\!2}
+2\omega(11\omega\varphi_\omega-3)\varphi_{\omega\omega}
+2(5\omega\varphi_\omega-2)\varphi_\omega=0.
\end{gather*}
The normalizer of the subalgebra~$\tilde{\mathfrak s}_{2.4}$ in~$\mathfrak g$ is
${\rm N}_{\mathfrak g}(\tilde{\mathfrak s}_{2.4})=\langle D^t(1),D^t(t),Z(1)\rangle$.
The Lie-symmetry vector fields $D^t(1)$, $3D^t(t)+Z(1)$ and~$Z(1)$ of the equation~\eqref{eq:dN}
induce the Lie-symmetry vector fields 0, 0 and~$\p_\varphi$
of reduced equation 2.4, respectively.
This equation can be represented in normal form
since the coefficient of~$\varphi_{\omega\omega\omega}$ in it
does not vanish on its solutions.
The maximal Lie invariance algebra~$\mathfrak a_{2.4}$ of reduced equation~2.4
is one-dimensional, $\mathfrak a_{2.4}=\langle\p_\varphi\rangle$,
and thus it is entirely induced by ${\rm N}_{\mathfrak g}(\tilde{\mathfrak s}_{2.4})$.

We compute the point symmetry group~$G_{2.4}$ of reduced equation~2.4 by the algebraic method.
Let $\Phi$: $\tilde\omega=\Omega(\omega,\varphi)$, $\tilde\varphi=F(\omega,\varphi)$
with $\Omega_\omega F_\varphi-\Omega_\varphi F_\omega\ne0$
be a point symmetry transformation of this equation.
From the condition $\Phi_*\mathfrak a_{2.4}\subseteq\mathfrak a_{2.4}$,
we derive the equations $\Omega_\varphi=0$ and $F_{\varphi\varphi}=0$,
which mean that $\Omega=\Omega(\omega)$ and $F=a\varphi+f(\omega)$
with $\Omega_\omega\ne0$, a nonzero constant~$a$ and a~function~$f$ of~$\omega$.
Taking into account the derived form for~$\Phi$,
we continue the computation with the direct method
and obtain a cumbersome overdetermined system of determining equations for the parameters $\Omega$, $a$ and~$f$,
which can nevertheless be solved, giving $a=1$ and
either $\Omega=\omega$ and $f=c$
or $\Omega=\omega^{-1}$ and $f=\ln|\omega|+c$ with an arbitrary constant~$c$.
Therefore, the group~$G_{2.4}$ is generated by the one-parameter subgroup of the shifts with respect to~$\varphi$
and the discrete point symmetry transformation $\tilde\omega=\omega^{-1}$, $\tilde\varphi=\varphi+\ln|\omega|$.
The last transformation is induced by the permutation~$\mathscr J$ of the variables~$x$ and~$y$
in the original equation~\eqref{eq:dN}.
Therefore, the group~$G_{2.4}$ is entirely induced by the stabilizer of~$\tilde{\mathfrak s}_{2.4}^\nu$ in~$G$.

The subalgebra~$\tilde{\mathfrak s}_{2.4}$ has the family of $G_{\rm L}$-inequivalent counterparts
$\bar{\mathfrak s}_{2.4}^{\varepsilon\nu}=\big\langle\bar D^t(1),\,3\bar D^t(t)+\bar Z(\varepsilon)+\nu\bar P^v\big\rangle$
among subalgebras of the algebra~$\mathfrak g_{\rm L}$.
Here $\varepsilon=\pm1$ and $\nu\geqslant0$ ($\!{}\bmod G_{\rm L}$).%
\footnote{%
The discrete point symmetry $\mathscr I^{\rm s}:=\mathscr D^{\rm s}(-1)$ of the equation~\eqref{eq:dN},
which alternates the signs of $(x,y,u)$ and of the vector fields $Z(\sigma)$ from the algebra~$\mathfrak g$,
has no counterpart among point symmetries of the nonlinear Lax representation~\eqref{eq:dNLaxPair}.
As a result, in contrast to the subalgebra~$\tilde{\mathfrak s}_{2.4}$,
the parameter~$\varepsilon$ in the subalgebra family~$\{\bar{\mathfrak s}_{2.4}^{\varepsilon\nu}\}$
can be gauged, up to the $G_{\rm L}$-equivalence, merely to $\pm1$ but not to~1.
}
This results in a family of $G_{\rm L}$-inequivalent extensions of ansatz~2.4 to~$v$
that are parameterized by~$\varepsilon$ and~$\nu$,
$u=\varphi+\varepsilon\ln|x|$, $v=\psi+\nu\ln|x|$.
The corresponding reduced systems are
\begin{gather}\label{eq:dNReduction2.4LaxPair}
\begin{split}&
3\big((\omega^3\!+1)\psi_\omega-\nu\omega^2\big)\varphi_\omega
-2\omega(\omega^3\!-1)\psi_\omega^{\,\,3}
+3\nu(\omega^3\!-1)\psi_\omega^{\,\,2}-3\varepsilon\omega\psi_\omega-\nu(\nu^2\!-3\varepsilon)\omega=0,\!
\\&
\omega\varphi_{\omega\omega}+\varphi_\omega
+\omega\psi_\omega^{\,\,2}-\nu\psi_\omega=0.
\end{split}
\end{gather}

The condition of vanishing the coefficient of~$\varphi_\omega$ in the first equation of~\eqref{eq:dNReduction2.4LaxPair}
is consistent with~\eqref{eq:dNReduction2.4LaxPair} only if $\nu=0$ and thus $\psi_\omega=0$,
which implies in view of the second equation of~\eqref{eq:dNReduction2.4LaxPair} that
$\omega\varphi_{\omega\omega}+\varphi_\omega=0$.
The associated family of particular solutions $\varphi=c_1\ln|\omega|+c_2$ of reduced equation~2.4,
which are parameterized by the arbitrary constants~$c_1$ and~$c_2$,
corresponds to the subfamily of
the trivial solutions $u=c_1\ln|y|+(c_1+1)\ln|x|+c_2$ from the family~\eqref{eq:dNTrivialSolutions}.

Further $(\omega^3+1)\psi_\omega-\nu\omega^2\ne0$.
Solving the first equation of~\eqref{eq:dNReduction2.4LaxPair} with respect to~$\varphi_\omega$
and excluding~$\varphi$ from the second equation of~\eqref{eq:dNReduction2.4LaxPair},
we derive a first-order ordinary differential equation with respect to~$\zeta:=\psi_\omega$,
\begin{gather}\label{eq:dNReduction2.4EqForPsi}
\begin{split}&
\big(4\omega(\omega^6-1)\zeta^3-3\nu(\omega^3-1)(3\omega^3+1)\zeta^2+6\nu^2\omega^2(\omega^3-1)\zeta
-\nu\omega(\nu^2\omega^3+\nu^2-3\varepsilon)\big)\zeta_\omega\\&
\qquad{}+(7\omega^6+18\omega^3-1)\zeta^4-6\nu\omega^2(3\omega^3+5)\zeta^3
+3\omega(5\nu^2\omega^3+3\nu^2+3\varepsilon)\zeta^2\\&
\qquad{}-2\nu(2\nu^2\omega^3-\nu^2+3\varepsilon)\zeta=0.
\end{split}
\end{gather}
Let $\nu=0$ and thus $\zeta\ne0$.
Then the equation~\eqref{eq:dNReduction2.4EqForPsi} reduces to the simple Bernoulli equation
\begin{gather*}
\begin{split}&
4\omega(\omega^6-1)\zeta\zeta_\omega+(7\omega^6+18\omega^3-1)\zeta^2+9\varepsilon\omega=0,
\end{split}
\end{gather*}
which integrates to $\zeta=\pm(\omega^3-1)^{-1}\sqrt{\tilde c_1|\omega|^{-1/2}(\omega^3+1)+3\varepsilon\omega}$.
The first equation of~\eqref{eq:dNReduction2.4LaxPair} with this value of~$\zeta$ implies that
\begin{gather*}
\varphi=\frac{\varepsilon}3\ln|\omega^3-1|+c_1\ln\left|\frac{|\omega|^{3/2}+1}{|\omega|^{3/2}-1}\right|+c_2
\quad\mbox{for}\quad \omega\geqslant0,
\\
\varphi=\frac{\varepsilon}3\ln|\omega^3-1|+c_1\arctan|\omega|^{3/2}+c_2\quad\mbox{for}\quad \omega\leqslant0,
\end{gather*}
where $c_1$ and~$c_2$ are arbitrary constants,
and the constants~$c_2$ and $\varepsilon$ can be set to 0 and~1 up to the $G_{2.4}$-equivalence, respectively.
The corresponding solutions of the equation~\eqref{eq:dN} are
\begin{gather*}
\solution
u=\frac13\ln|y^3-x^3|+c_1\ln\left|\frac{|x|^{3/2}+|y|^{3/2}}{|x|^{3/2}-|y|^{3/2}}\right|
\quad\mbox{for}\quad xy\geqslant0,
\\
\solution
u=\frac13\ln|y^3-x^3|+c_1\arctan\left|\frac yx\right|^{3/2}\quad\mbox{for}\quad xy\leqslant0.
\end{gather*}

For $\nu=\sqrt3$, the equation~\eqref{eq:dNReduction2.4EqForPsi}
can be integrated implicitly,
\[
4\ln\big|(\omega^3-1)\zeta-\sqrt3\omega^2\big|
-2\ln\big|(\omega^3+1)\zeta-\sqrt3\omega^2\big|
+\ln\big|\omega\zeta-\sqrt3\big|+\ln\big|\zeta\big|=c_1.
\]
Then the function~$\varphi$ is defined by the first equation of~\eqref{eq:dNReduction2.4LaxPair}
with $\psi_\omega=\zeta$ and $\nu=\sqrt3$.

\medskip\par\noindent{\bf 2.13.}\
$\mathfrak s_{2.13}=\big\langle D^t(1),\,D^{\rm s}\big\rangle$:
\quad $u=x^3\varphi,\quad \omega=y/x;$
\begin{gather*}
\big(2\omega(\omega^3-1)\varphi_{\omega\omega}
-2(3\omega^3-1)\varphi_\omega+6\omega^2\varphi\big)
\varphi_{\omega\omega\omega}\\\qquad{}
-(5\omega^3-1)\varphi_{\omega\omega}^{\,\,2}
+2\omega(11\omega\varphi_\omega-9\varphi)\varphi_{\omega\omega}
-4(5\omega\varphi_\omega-6\varphi)\varphi_\omega
=0.\!\!
\end{gather*}
The normalizer of the subalgebra~$\mathfrak s_{2.13}$ in~$\mathfrak g$ is
${\rm N}_{\mathfrak g}(\mathfrak s_{2.13})=\langle D^t(1),D^t(t),D^{\rm s}\rangle$,
and the entire maximal Lie invariance algebra $\mathfrak a_{2.13}=\langle \varphi\p_\varphi\rangle$
of reduced equation~2.13 is induced by this normalizer.

Reduced equation~2.13 can be included in the family of reduced equations~2.1$^{\kappa}$
as the element with $\kappa=3$, which corresponds to the limit values $\lambda=\pm\infty$.

Simple solutions of reduced equation~2.13,
$\varphi=|\omega|^{3/2}$ and the family of solutions that are cubic polynomials in~$\omega$,
were found in~\cite{moro2021a}, see the equations~(24) and~(25) therein, respectively.
The solution $u=|xy|^{3/2}$ of the original equation~\eqref{eq:dN},
which corresponds to the solution $\varphi=|\omega|^{3/2}$,
is essentially generalized in Section~\ref{sec:dNMultiplicativeSeparationOfVars}
using multiplicative separation of the variables~$x$ and~$y$.
The above family of polynomial solutions in~$\omega$ is associated
with the family of solutions of~\eqref{eq:dN}
that are homogeneous cubic polynomials in~$(x,y)$ with constant coefficients.

\section{Lie reductions to algebraic equations}\label{sec:dNLieReductionsOfCodim3}

For any three-dimensional subalgebra~$\mathfrak s_3$ of the algebra~$\mathfrak g$,
either its rank~$r$ is less than three
and thus it cannot be used for Lie reduction of the equation~\eqref{eq:dN}
to an algebraic equation
or all the corresponding invariant solutions are, up to the $G$-equivalence,
just particular elements of parameterized families of solutions that have been constructed
in Sections~\ref{sec:dNLieReductionsOfCodim1} and~\ref{sec:dNLieReductionsOfCodim2}.
To show this, we present an outline of the classification
of three-dimensional subalgebras of~$\mathfrak g$.

Consider a three-dimensional subalgebra $\mathfrak s_3=\langle Q^i,\,i=1,2,3\rangle$ of~$\mathfrak g$
spanned by three (linearly independent) vector fields
\begin{gather*}
Q^i=D^t(\tau^i)+\lambda^iD^{\rm s}+P^x(\chi^i)+P^y(\rho^i)+R^x(\alpha^i)+R^y(\beta^i)+Z(\sigma^i)
\end{gather*}
from~$\mathfrak g$
with arbitrary smooth functions $\tau^i$, $\chi^i$, $\rho^i$, $\alpha^i$, $\beta^i$ and $\sigma^i$ of $t$
and arbitrary constants $\lambda^i$ such that
the tuples $(\tau^i,\lambda^i,\chi^i,\rho^i,\alpha^i,\beta^i,\sigma^i)$ are linearly independent.
Here and in what follows the index~$i$ runs from~1 to~3.
The consideration splits into cases mainly depending on two values,
$k_1=k_1(\mathfrak s_3):=\dim\langle\tau^i\rangle$ and $k_2=k_2(\mathfrak s_3):=\dim\langle(\tau^i,\lambda^i)\rangle$.
For brevity, we use transitions to $G$-equivalent subalgebras, basis changes
and hints from the proofs of Lemmas~\ref{lem:dN1DInequivSubalgs} and~\ref{lem:dN2DInequivSubalgs}
without referring to this.
Below, $\kappa_1$, $\kappa_2$, $\kappa_3$ and~$\nu$ denote constants.

\medskip\par\noindent$k_1=3$.\
In view of the classical Lie theorem on Lie algebras of vector fields on the real line,%
\footnote{%
See \cite{bihlo2017a,boyk2021a,boyk2024b,boyk2015a,kuru2018a,kuru2020a,opan2017a} and references therein
for applications of this theorem to classifying subalgebras of various algebras of vector fields.
}
we can set $\tau^1=1$, $\tau^2=t$ and $\tau^3=t^2$,
which implies $\mathfrak s_3\subset\mathfrak g'$, and thus $\lambda^i=0$.
The vector fields~$Q^1$ and~$Q^2$ reduce to $D^t(1)$ and $D^t(t)+Z(\delta)$ with $\delta\in\{0,1\}$,
respectively.
We successively derive from the commutation relations $[Q^1,Q^3]=2Q^2$ and $[Q^2,Q^3]=Q^3$
that $\chi^3,\rho^3,\alpha^3,\beta^3=\const$, $\sigma^3_t=\delta$
and hence $\chi^3,\rho^3,\alpha^3,\beta^3=0$.
Therefore, $r=2$.

\medskip\par\noindent$k_1=2$, $k_2=3$.\
We can make $\tau^1=1$, $\tau^2=t$, $\tau^3=0$, $\lambda^1=\lambda^2=0$, $\lambda^3=1$,
and then $Q^3=D^{\rm s}$.
The commutation relations $[Q^1,Q^3]=[Q^2,Q^3]=0$ imply $Q^1=D^t(1)$ and $Q^2=D^t(t)$,
i.e., $r=2$.

\medskip\par\noindent$k_1=k_2=2$.\
Setting $Q^1=D^t(1)$, $\tau^2=t$, $\tau^3=0$, $\lambda^3=0$,
we derive from the commutation relations
$[Q^j,Q^3]=\kappa_jQ^3$, $j=1,2$, $[Q^1,Q^2]=Q^1+\kappa_3Q^3$ that
$\chi^3_t=\kappa_1\chi^3$, $t\chi^3_t=\kappa_2\chi^3$,
$\rho^3_t=\kappa_1\rho^3$, $t\rho^3_t=\kappa_2\rho^3$,
and thus, if $r=3$, $(\chi^3,\rho^3)\ne(0,0)$, $\kappa_1=\kappa_2=0$,
which further implies that $\chi^3_t=\rho^3_t=\alpha_t=\beta_t=\sigma_t=0$.
In other words, the subalgebra~$\mathfrak s_3$ contains, up to $G$-equivalence,
a subalgebra from the family~$\{\mathfrak s_{2.14}^{0\nu\delta'}\}$.

\medskip\par\noindent$k_1=1$, $k_2=2$.\
We make $\tau^1=1$, $\tau^2=\tau^3=0$, $\lambda^1=\lambda^3=0$, $\lambda^2=1$
and then $Q^2=D^{\rm s}$.
The commutation relations $[Q^1,Q^2]=0$, $[Q^j,Q^3]=\kappa_jQ^3$, $j=1,2$,
imply $Q^1=D^t(1)$ and, if $r=3$, then $\kappa_2=-1$ and $Q^3=P^x({\rm e}^{\kappa_1t})+\nu P^y({\rm e}^{\kappa_1t})$,
i.e., the subalgebra~$\mathfrak s_3$ contains a subalgebra that is $G$-equivalent to one
from the family~$\{\mathfrak s_{2.9}^{\tilde\rho}\}$.

\medskip\par\noindent$k_2\leqslant1$.\
If $r=3$, then up to $G$-equivalence, the subalgebra~$\mathfrak s_3$ contains
a subalgebra from the family~$\{\mathfrak s_{2.17}^{\rho\alpha\beta}\}$
and, therefore, a subalgebra from the family~$\{\mathfrak s_{1.4}^\beta\}$.

\medskip\par
As a result, we conclude that Lie reductions of the equation~\eqref{eq:dN} to algebraic equations
give no new $G$-equivalent solutions in comparison with those
that have been constructed in a closed explicit form
in Sections~\ref{sec:dNLieReductionsOfCodim1} and~\ref{sec:dNLieReductionsOfCodim2}.


\section{Multiplicative separation of variables}\label{sec:dNMultiplicativeSeparationOfVars}

The equation~\eqref{eq:dN} is identically satisfied
under the additive separation of the variables~$x$ and~$y$,
and the solutions from the corresponding family~\eqref{eq:dNTrivialSolutions} are trivial.

Consider solutions of the equation~\eqref{eq:dN}
with nontrivial multiplicative separation of the variables~$x$ and~$y$.
They are represented in the form $u=\varphi(t,x)\psi(t,y)$ with $\varphi_x\ne0$ and~$\psi_y\ne0$.%
\footnote{%
The functions $\varphi$ and~$\psi$ are defined up to the transformations
$\tilde\varphi=\varphi/f$, $\tilde\psi=f\psi$ with an arbitrary nonzero function of~$t$.
If $\varphi_x=0$ or~$\psi_y=0$, then one can set $\varphi=1$ or~$\psi=1$, respectively,
and thus the separation of the variables~$x$ and~$y$ is trivial;
moreover, then the corresponding solutions belong to the family of trivial solutions~\eqref{eq:dNTrivialSolutions}.
}

Substituting the multiplicative ansatz $u=\varphi(t,x)\psi(t,y)$ into the equation~\eqref{eq:dN}
and separating the variables~$x$ and~$y$, we obtain the equation
\[
\frac{\varphi_{tx}}{\varphi_x}+\frac{\psi_{ty}}{\psi_y}
=\frac{(\varphi_{xx}\varphi_x)_x}{\varphi_x}\psi
+\frac{(\psi_{yy}\psi_y)_y}{\psi_y}\varphi,
\]
which we further simultaneously differentiate with respect~$x$ and~$y$ and derive
\[
\frac1{\varphi_x}\left(\frac{(\varphi_{xx}\varphi_x)_x}{\varphi_x}\right)_x
+\frac1{\psi_y}\left(\frac{(\psi_{yy}\psi_y)_y}{\psi_y}\right)_y=0.
\]
These two equations imply that
\[
\frac{(\varphi_{xx}\varphi_x)_x}{\varphi_x}=\alpha\varphi+\beta, \quad
\frac{\varphi_{tx}}{\varphi_x}=\gamma\varphi+\delta
\quad\mbox{and}\quad
\frac{(\psi_{yy}\psi_y)_y}{\psi_y}=-\alpha\psi+\gamma,\quad
\frac{\psi_{ty}}{\psi_y}=\beta\psi-\delta
\]
for some sufficiently smooth functions~$\alpha$, $\beta$, $\gamma$ and~$\delta$ of~$t$.
These systems with respect to~$\varphi$ and~$\psi$ integrate~to
\begin{gather}\label{eq:dNMultiplicativeSeparationOfVarsSys1Int}
\varphi_x^{\,\,3}=\frac\alpha2\varphi^3+\frac32\beta\varphi^2+\zeta^1\varphi+\zeta^0,\quad
\varphi_t=\frac\gamma2\varphi^2+\delta\varphi+\zeta^2,
\\\label{eq:dNMultiplicativeSeparationOfVarsSys2Int}
\psi_y^{\,\,3}=-\frac\alpha2\psi^3+\frac32\gamma\psi^2+\theta^1\psi+\theta^0,\quad
\psi_t=\frac\beta2\psi^2-\delta\psi+\theta^2,
\end{gather}
where $\zeta^0$, $\zeta^1$, $\zeta^2$, $\theta^0$, $\theta^1$ and $\theta^2$ are also sufficiently smooth functions of~$t$,
and for solutions to be nontrivial, we should impose the conditions that the tuples
$(\alpha,\beta,\zeta^1,\zeta^0)$ and $(\alpha,\gamma,\theta^1,\theta^0)$ are nonzero.
Due to the indeterminacy of $(\varphi,\psi)$, we set $\delta=0$ without loss of generality.

We exclude the derivatives of~$\varphi$ and~$\psi$
in view of the systems~\eqref{eq:dNMultiplicativeSeparationOfVarsSys1Int} and~\eqref{eq:dNMultiplicativeSeparationOfVarsSys2Int}
from their compatibility conditions $(\varphi_x)_t=(\varphi_t)_x$ and $(\psi_y)_t=(\psi_t)_y$
and split the obtained equalities with respect to~$\varphi$ and~$\psi$,
which gives the following systems for the parameter functions depending on~$t$:
\begin{gather*}
\alpha\beta=\alpha\gamma=\beta\gamma=0, \quad
\alpha_t=0,\quad
\beta_t=\frac53\gamma\zeta^1-\alpha\zeta^2,\quad 
\gamma_t=\frac53\beta\theta^1+\alpha\theta^2,\\ 
\zeta^1_t=3\gamma\zeta^0-3\beta\zeta^2,\quad 
\theta^1_t=3\beta\theta^0-3\gamma\theta^2,\quad 
\zeta^0_t=-\zeta^1\zeta^2,\quad 
\theta^0_t=-\theta^1\theta^2. 
\end{gather*}
Consider possible cases separately.

1. $\alpha\ne0$. Then $\alpha=\const$, $\beta=\gamma=\zeta^2=\theta^2=0$,
and thus $\zeta^0$, $\zeta^1$, $\theta^0$ and $\theta^1$ are constants.
Integrating the systems~\eqref{eq:dNMultiplicativeSeparationOfVarsSys1Int}
and~\eqref{eq:dNMultiplicativeSeparationOfVarsSys2Int} with these parameters' values
and simplifying the result by transformations from~$G$,
we derive a family of $G$-inequivalent solutions of the equation~\eqref{eq:dN}
that generalizes the solutions~\eqref{eq:s2.2InvSolutions},
\[
\solution
u=\varphi(x)\psi(y),\quad
\int\frac{{\rm d}\varphi}{(\varphi^3+c_1\varphi+c_2)^{1/3}}=x,\quad
\int\frac{{\rm d}\psi}{(\psi^3+c_3\psi+c_4)^{1/3}}=-y.
\]
Up to the $G$-equivalence,
one of the constants~$c_1$ and~$c_3$, if it is nonzero, can be set to be equal $\pm1$
or one of the constants~$c_2$ and~$c_4$, if it is nonzero, can be set to be equal one.
Both quadratures here are the same as in~\eqref{eq:ReducedEq2.2_0ImplicitExpression}.
Hence they can be computed explicitly for certain values of the tuples $(c_1,c_2)$ and $(c_3,c_4)$,
see~\eqref{eq:ReducedEq2.2_0ImplicitExpression2}.

2. $\beta\ne0$. Then $\alpha=\gamma=0$, and thus $\theta^0=\theta^1=0$,
which contradicts the nontriviality condition~$\psi_y\ne0$.
The case $\gamma\ne0$ reduces to the case $\beta\ne0$
by permutation of~$x$ and~$y$.

3. $\alpha=\beta=\gamma=0$,
and thus $\zeta^1$ and $\theta^1$ are constants,
$\zeta^0=-\int\zeta^1\zeta^2{\rm d}t$ and $\theta^0=-\int\theta^1\theta^2{\rm d}t$.
Rearranging the solution sets of the systems~\eqref{eq:dNMultiplicativeSeparationOfVarsSys1Int}
and~\eqref{eq:dNMultiplicativeSeparationOfVarsSys2Int}
with these parameters' values up to the $G$-equivalence and
in view of the indeterminacy of $(\varphi,\psi)$,
we construct the solutions of the equation~\eqref{eq:dN} of the form
\begin{gather*}
\solution
u=\big(|x|^{3/2}+\zeta(t)\big)\big(|y|^{3/2}+\theta(t)\big),\qquad\qquad
\solution
u=\big(x+\zeta(t)\big)|y|^{3/2}
\end{gather*}
and the solution $u=xy$, which belongs to the family~\eqref{eq:InvSolutions1.4}.
Here $\zeta$ and~$\theta$ are arbitrary sufficiently smooth functions of~$t$.
The first and the second families of solutions generalize
the $\mathfrak s_{2.13}$-invariant solution $u=|xy|^{3/2}$ and
\smash{$\mathfrak s_{2.1}^{5/3}$}-invariant solution $u=|x|^{3/2}y$,
see~\cite[Eq.~(26)]{moro2021a} and
the last paragraph related to reduction~2.1 in Section~\ref{sec:dNLieReductionsOfCodim2Collection2},
respectively.

\begin{remark}
For any $\nu$, the $\mathfrak s_{2.2}^\nu$-invariant solutions can be interpreted
as those with multiplicative separation of variables after their linear change.
Following the consideration in this section,
one can try to carry out a comprehensive study of such separation of variables.
Maybe, the most interesting is the multiplicative separation of the variables
$\tilde x=x+y$ and $\tilde y=x-y$, cf.\ the last (parametric) solution
obtained by reduction~2.2$^1$.
\end{remark}


\section{Conclusion}\label{sec:Conclusion}

In the present paper, we have constructed wide families of new exact invariant solutions
of the dispersionless Nizhnik equation~\eqref{eq:dN}
in closed form in terms of elementary, Lambert and hypergeometric functions
as well as in parametric or implicit form.
The main tool for this purpose was the optimized procedure of Lie reduction.
A rigorous description and a proper substantiation of this procedure
is in fact the main attainment of the paper.

Using the results of~\cite{boyk2024a} on
the maximal Lie invariance algebras~$\mathfrak g$ and~$\mathfrak g_{\rm L}$
of the equation~\eqref{eq:dN} and of its nonlinear Lax representation~\eqref{eq:dNLaxPair}
and their point-symmetry pseudogroups~$G$ and~$G_{\rm L}$,
we have classified
one- and two-dimensional subalgebras of the algebra~$\mathfrak g$
and one-dimensional subalgebras of the algebra~$\mathfrak g_{\rm L}$
up to the $G$- and $G_{\rm L}$-equivalences, respectively.
We could only classify subalgebras that are appropriate for Lie reduction
but this would not result in an essential simplification
in comparison with the classification of all one- and two-dimensional subalgebras
and the further selection of the appropriate ones among the listed inequivalent subalgebras.
Instead of the standard equivalences
within the algebras~$\mathfrak g$ and~$\mathfrak g_{\rm L}$ up to their inner automorphisms,
which coincide with the $G_{\rm id}$- and $G_{\rm L, id}$-equivalences,
where $G_{\rm id}$- and $G_{\rm L, id}$ are the identity components of~$G$ and $G_{\rm L}$, respectively,
we have used the stronger $G$- and $G_{\rm L}$-equivalences.
In this way, we have also taken into account the discrete point symmetry transformations of~\eqref{eq:dN},
which has allowed us to reduce the optimal lists of subalgebras.
Moreover, as explained in Section~\ref{sec:LieReductionProcedure},
it has also made the Lie reduction procedure consistent
with the natural $G$-equivalence on the solution set of the equation~\eqref{eq:dN}.
The above arguments clearly confirm
that the correct computation of~$G$ and $G_{\rm L}$ in~\cite{boyk2024a} was important.
Note that in fact the algebras~$\mathfrak g$ and~$\mathfrak g_{\rm L}$ are
infinite-dimensional Lie pseudoalgebras of vector fields.
In general, the classification of (low-dimensional) subalgebras of such an algebra
is complicated, in particular, by the necessity of considering
differential~\cite{fush1994b} or even functional~\cite{davi1986a,mart1989a} equations
in the course of this classification.

The algebra~$\mathfrak g$ is injectively mapped into the algebra~$\mathfrak g_{\rm L}$
via extending the vector fields from~$\mathfrak g$ to the variable~$v$.
The vector fields~\eqref{eq:dNMIA}, which span~$\mathfrak g$,
are extended trivially and formally coincide with their counterparts in~$\mathfrak g_{\rm L}$.
The only exception is the vector field~$D^{\rm s}$,
which extends to $\bar D^{\rm s}=x\p_x+y\p_y+3u\p_u+\tfrac32v\p_v$.
Moreover, $\mathfrak g_{\rm L}=\bar{\mathfrak g}\lsemioplus\langle\bar P^v\rangle$,
where $\bar{\mathfrak g}$ is the image of~$\mathfrak g$ under the above mapping,
and $\bar P^v=\p_v$.
Although the corresponding homomorphism%
\footnote{%
For this homomorphism and the isomorphism below,
we should replace~$G$ by its trivial prolongation to~$v$,
considering the restriction of elements of the prolongation
on open subsets of the space with the coordinates $(t,x,u,v)$.
}
of the pseudogroup~$G$ into the pseudogroup~$G_{\rm L}$ is not injective,
its kernel is generated by the discrete involution $\mathscr I^{\rm s}:=\mathscr D^{\rm s}(-1)$
from~$G$, which of course involves the restrictions of~$\mathscr I^{\rm s}$ as well,
and the quotient pseudogroups
$G/\{{\rm id},\mathscr I^{\rm s}\}$ and
$G_{\rm L}/\{\bar{\mathscr P}^v(B),\bar{\mathscr I}^v\circ\bar{\mathscr P}^v(B)\mid B\in\mathbb R\}$
are isomorphic, see the paragraph after Theorem~\ref{thm:dNCompletePointSymGroupOfLaxRepresentation}.
As a result, the classifications of one- and two-dimensional subalgebras of the algebra~$\mathfrak g_{\rm L}$
up to the $G_{\rm L}$-equivalence can be easily derived
from the respective classifications for the algebra~$\mathfrak g$ up to the $G$-equivalence,
cf.\ Lemmas~\ref{lem:dN1DInequivSubalgs} and~\ref{lem:dNLaxPair1DInequivSubalgs}
for the case of dimension one.
Nevertheless, we have not presented the classification of two-dimensional subalgebras of the algebra~$\mathfrak g_{\rm L}$
since we needed only a few of these subalgebras,
which are given directly when using them
for Lie reductions of the nonlinear Lax representation~\eqref{eq:dNLaxPair}
in Section~\ref{sec:dNLieReductionsOfCodim2Collection2}.
The correspondence between the equivalence classes of
one-dimensional (resp.\ two-dimensional) subalgebras of the algebras~$\mathfrak g$ and~$\mathfrak g_{\rm L}$
is injective but not one-to-one.
The list of inequivalent subalgebras of~$\mathfrak g$ of any fixed dimension can be trivially embedded in
the corresponding list for the algebra~$\mathfrak g_{\rm L}$
via the above extension of elements of~$\mathfrak g$ to the variable~$v$.
The bijection breaking is related to the disappearance of~$\mathscr I^{\rm s}$
and the appearance of~$\bar P^v$ in the course of the transition
from $(G,\mathfrak g)$ to $(G_{\rm L},\mathfrak g_{\rm L})$,
see the subalgebras $\bar{\mathfrak s}_{2.1'}^0$, \smash{$\bar{\mathfrak s}_{2.1'}^{2/3}$}
and $\bar{\mathfrak s}_{2.4}^{\varepsilon\nu}$
in Section~\ref{sec:dNLieReductionsOfCodim2Collection2}.
The last family of subalgebras is the most interesting
since, in contrast to the corresponding coefficient in the second basis vector field
of~$\tilde{\mathfrak s}_{2.4}$ and the $G$-equivalence,
the parameter~$\varepsilon$ in $\bar{\mathfrak s}_{2.4}^{\varepsilon\nu}$
cannot be set to~1 up to the $G_{\rm L}$-equivalence.\looseness=-1

The described relation between the lists of inequivalent subalgebras of~$\mathfrak g$ and of~$\mathfrak g_{\rm L}$
can be reformulated in terms of the relation between the corresponding collections of
inequivalent Lie reductions of the equation~\eqref{eq:dN} and of the nonlinear Lax representation~\eqref{eq:dNLaxPair}.

If subalgebras of~$\mathfrak g$ are $G$-equivalent,
then the corresponding reduced equations are necessarily similar with respect to point transformations of the invariant variables.
Consider a class~$\mathcal C$ of reduced equations for the equation~\eqref{eq:dN}
that is associated with a parameterized family~$\mathcal F$ of subalgebras of~$\mathfrak g$,
and thus the arbitrary elements of~$\mathcal C$ are expressed in terms of the subalgebra parameters.
Then the stabilizer of~$\mathcal F$ in~$G$ induces
a (pseudo)subgroup~$G^{\sim}_{\mathcal C,{\rm ind}}$
of the equivalence (pseudo)group~$G^{\sim}_{\mathcal C}$ of the class~$\mathcal C$.
The proper inclusion $G^{\sim}_{\mathcal C,{\rm ind}}\lneqq G^{\sim}_{\mathcal C}$,
which happens quite commonly, means that some elements of~$G^{\sim}_{\mathcal C}$
are not induced by transformations from~$G$,
and hence we call them \emph{hidden equivalence transformations} of the class~$\mathcal C$.
If the subalgebras from the family~$\mathcal F$ are $G$-inequivalent to each other,
then transformations from the group~$G$ can induce only point symmetries of equations from the class~$\mathcal C$
but not point transformations between different elements of this class.
At the same time, in the case of the presence of hidden equivalence transformations,
a wide subset of the action groupoid of~$G^{\sim}_{\mathcal C}$ can still be used
for mapping the class~$\mathcal C$ to its proper subclass~$\mathcal C'$,
which formally has less number of (significant) arbitrary elements.%
\footnote{%
See \cite{opan2022a} and references therein for mappings
between classes of differential equations that are generated by families of point transformations.
}
Then the correspondence between the parameters of the family~$\mathcal F$ and
the arbitrary elements of the subclass~$\mathcal C'$ is definitely not injective.
We can select ansatzes associated with subalgebras of~$\mathcal F$
such that the corresponding class of reduced equations is minimal
up to the described mappings by hidden equivalence transformations.
Nevertheless, this is not always convenient as shown by reductions~1.3 and~1.4.
The classes of reduced equations~2.5, 2.9 and~2.14 are also not minimal in the above sense.
In this context, the family~$\mathcal F$ of subalgebras $\{\mathfrak s_{1.3}^\rho\}$ is especially demonstrative.
After excluding the singular subalgebra $\mathfrak s_{1.3}^1$ from~$\mathcal F$
and properly modifying the corresponding ansatzes from Table~\ref{tab:LieReductionsOfCodim1},
we have derived the single simple reduced equation~\eqref{eq:dNs1.3RhoNe1ModifiedRedEq}
instead of a class of reduced equations with the functional parameter $\rho=\rho(t)$ of~$\mathcal F$
as its arbitrary element.\looseness=-1

We have paid considerable attention to the selection of optimal ansatzes
and thus simplified the further consideration
but the simplification is not as significant as, e.g.,
that achieved for the Navier--Stokes equations in~\cite{fush1994a,fush1994b,popo1995b}.
Most of the reduced equations for the equation~\eqref{eq:dN} are quite cumbersome,
and this is not the only feature of them that complicates
the computation of their Lie and discrete point symmetries.
Thus, each of reduced equations~1.1$^0$, 2.1$^\kappa$ (including~2.13), 2.2$^\nu$ and 2.14$^{0\nu\delta'}$
is not of maximal rank on the entire associated manifold in the corresponding jet space.%
\footnote{%
Each of the reductions 2.1$^\kappa$ (including~2.13), 2.2$^\nu$ and 2.14$^{0\nu\delta'}$
can be considered as a two-step Lie reduction with reduction~1.1$^0$ as its first step.}
Even if a reduced equation is of maximal rank,
it is not necessarily can be represented in the normal form,
see reduced equation~2.5$^{\lambda\mu}$ with $\lambda=2/3$.
As far as we know, Lie and general point symmetries of such unusual differential equations
have not been considered in the literature.
For some reductions of codimension two, even under the optimal choice of ansatzes,
the permutation~$\mathscr J$ of~$x$ and~$y$,
which is a simple and obvious discrete point symmetry of the equation~\eqref{eq:dN},
induces more complicated and nontrivial discrete point symmetries of the corresponding reduced equations,
and this leads to the complexity of general elements of the point symmetry groups
of some reduced ordinary differential equations.
There is no similar phenomenon for Lie symmetries of reduced equations obtained from the equation~\eqref{eq:dN}.
Nevertheless, we have comprehensively studied point symmetries and their induction
for all the reduced equations
selected in the course of applying the optimized Lie reduction procedure to the equation~\eqref{eq:dN}.
This study itself is a~necessary ingredient of the reduction procedure.
It has helped us
to cut down the number of Lie reductions to be considered and
to integrate or at least to lower the order of reduced ordinary differential equations.
Note that discrete symmetries of reduced equations have been computed for the first time in the present paper,
and in view of the above reasons such as
the complexity of reduced equations and their point symmetries and
the simplicity of their Lie-symmetry vector fields,
the algebraic method by Hydon and its various modifications
are especially efficient and convenient for this computation.

For finding exact solutions of reduced ordinary differential equations,
we have also used the associated reduced systems for the nonlinear Lax representation~\eqref{eq:dNLaxPair}.
Due to properly arranging the hierarchy of Lie reductions of the equation~\eqref{eq:dN}
and accurately selecting a low number of reduced ordinary differential equations to be integrated,
we were able to deeply analyze them and construct
wider families of exact solutions of the equation~\eqref{eq:dN}
than those presented in the literature.
Of course, there are a number of possibilities for extending and generalizing the results
of the present paper.
In particular, since most of Lie symmetries of the reduced equation~\eqref{eq:dNs1.3InvarSolutions}
are hidden for the original equation~\eqref{eq:dN},
one can actually represent more solutions from the family~\eqref{eq:dNs1.3RhoNe1ModifiedRedEq}
in an explicit form by means of Lie reductions of~\eqref{eq:dNs1.3InvarSolutions}
than those found in Section~\ref{sec:dNLieReductionsOfCodim2Collection1}.
In addition, the results of Section~\ref{sec:dNMultiplicativeSeparationOfVars}
on multiplicative separation of variables for the equation~\eqref{eq:dN}
and of~\cite{moro2021a} on solutions of~\eqref{eq:dN} that are polynomial in~$(x,y)$
show that more closed-form solutions of~\eqref{eq:dN} can be constructed
using other tools of symmetry analysis of differential equations.\looseness=-1

We have checked all the obtained solutions and have selected $G$-inequivalent ones among them.
Checking constructed solutions and their inequivalence to known solutions
is the last but most important step of any procedure of finding exact solutions
of differential equations.
Unfortunately, this step is commonly disregarded,
which led to many papers containing only incorrect or known solutions,
see the discussion of such papers in~\cite{kudr2009a,popo2010c}.

\section*{Acknowledgments}

The authors are grateful to Dmytro Popovych, Galyna Popovych and Artur Sergyeyev for helpful discussions and suggestions.
R.O.P. also expresses his gratitude for the hospitality shown by the University of Vienna during his long staying at the university.
This work was supported by a grant from the Simons Foundation (1290607, O.O.V., V.M.B.).
The work of R.O.P. was supported in part by the Ministry of Education, Youth and Sports of the Czech Republic (M\v SMT \v CR)
under RVO funding for I\v C47813059.
The authors express their deepest thanks to the Armed Forces of Ukraine and the civil Ukrainian people
for their bravery and courage in defense of peace and freedom in Europe and in the entire world from russism.

\noprint{
\section*{Declarations}

\subsection*{Ethical Approval}
Not applicable.

\subsection*{Conflict of interest}

The authors declare that they have no competing interests.

\subsection*{Authors' contributions}

The contribution of all three authors to the article is equal.

\subsection*{Funding}
This work was supported by a grant from the Simons Foundation (1030291, O.O.V., V.M.B.).
The work of R.O.P. was supported in part by the Ministry of Education, Youth and Sports of the Czech Republic (M\v SMT \v CR)
under RVO funding for I\v C47813059.

\subsection*{Data availability statement}
Data sharing not applicable to this article as no datasets were generated or analyzed during the current study.

}

\footnotesize


\begin{thebibliography}{10}\frenchspacing

\bibitem{abra2006a}
Abraham-Shrauner B., Govinder K.S. and Arrigo D.J.,
Type-II hidden symmetries of the linear 2D and 3D wave equations,
{\it J.~Phys. A} {\bf 39} (2006), 5739--5747. 

\bibitem{abra2006b}
Abraham-Shrauner B. and Govinder K.S.,
Provenance of type II hidden symmetries from nonlinear partial differential equations,
{\it J.~Nonlinear Math. Phys.} {\bf 13} (2006), 612--622. 

\bibitem{CRC_v2}
Aksenov A.V., Baikov V.A., Chugunov V.A., Gazizov R.K. and Meshkov A.G.,
{\it CRC handbook of Lie group analysis of differential equations. Vol.~2. Applications in engineering and physical sciences},
edited by N.H.~Ibragimov, CRC Press, Boca Raton, 1995.

\bibitem{CRC_v1}
Ames W.F., Anderson R.L., Dorodnitsyn V.A., Ferapontov E.V., Gazizov R.K., Ibragimov N.H. and Svirshchevskii S.R.,
{\it CRC handbook of Lie group analysis of differential equations. Vol.~1. Symmetries, exact solutions and conservation laws},
edited by N.H.~Ibragimov, CRC Press, Boca Raton, FL, 1994.

\bibitem{anco2002a}
Anco~S.C. and Bluman~G.,
Direct construction method for conservation laws of partial differential equations.~I. Examples of conservation law classifications,
{\it European J. Appl. Math.} {\bf 13} (2002), 545--566, arXiv:math-ph/0108023.

\bibitem{anco2002b}
Anco~S.C. and Bluman~G.,
Direct construction method for conservation laws of partial differential equations.~II. General treatment,
{\it European J. Appl. Math.} {\bf 13} (2002), 567--585, arXiv:math-ph/0108024.

\bibitem{andr1998A}
Andreev V.K., Kaptsov O.V., Pukhnachov V.V. and Rodionov A.A.,
{\it Applications of group-theoretical methods in hydrodynamics},
Kluwer,  Dordrecht, 1998.

\bibitem{baik1991b}
Baikov V.A., Gazizov R.K. and Ibragimov N.Kh.,
Perturbation methods in group analysis,
{\it J.~Soviet Math.} {\bf 55} (1991), 1450--1490. 

\bibitem{BaranMarvan}
Baran H. and Marvan M.,
Jets. A software for differential calculus on jet spaces and diffieties.
Available at \texttt{http://jets.math.slu.cz}.

\bibitem{bihl2012b}
Bihlo A., Dos Santos Cardoso-Bihlo E. and Popovych R.O.,
Complete group classification of a class of nonlinear wave equations,
{\it J.~Math. Phys.} {\bf 53} (2012), 123515, arXiv:1106.4801.

\bibitem{bihl2015a}
Bihlo A., Dos Santos Cardoso-Bihlo E. and Popovych R.O.,
Algebraic method for finding equivalence groups,
{\it J.~Phys. Conf. Ser.} {\bf 621} (2015), 012001, arXiv:1503.06487. 

\bibitem{bihl2011b}
Bihlo A. and Popovych R.O.,
Point symmetry group of the barotropic vorticity equation,
in \emph{Proceedings of 5th Workshop ``Group Analysis of Differential Equations \& Integrable Systems'' (June 6--10, 2010, Protaras, Cyprus)},
University of Cyprus, Nicosia, 2011, pp. 15--27, arXiv:1009.1523.

\bibitem{bihlo2017a}
Bihlo A. and Popovych R.O.,
Group classification of linear evolution equations,
{\it J.~Math. Anal. Appl.} {\bf 448} (2017), 982--2015, arXiv:1605.09251.

\bibitem{blum2009A}
Bluman G.W., Cheviakov A.F. and Anco S.C.,
{\it Applications of symmetry methods to partial differential equations},
Springer, New York, (2010).

\bibitem{blum1989A}
Bluman G.W. and Kumei S.,
{\it Symmetries and differential equations},
Springer, New York, 1989.

\bibitem{boch1999A}
Bocharov A.V., Chetverikov V.N., Duzhin S.V., Khor'kova N.G., Krasil'shchik I.S., Samokhin A.V., Torkhov~Yu.N., Verbovetsky~A.M. and Vinogradov~A.M.,
{\it Symmetries and conservation laws for differential equations of mathematical physics},
Amer. Math. Soc., Providence, 1999.

\bibitem{boyk2021a}
Boyko V.M., Lokaziuk O.V. and Popovych R.O.,
Realizations of Lie algebras on the line and the new group classification of (1+1)-dimensional generalized nonlinear Klein--Gordon equations,
{\it Anal. Math. Phys.} {\bf 11} (2021), 127, arXiv:2008.05460.

\bibitem{boyk2024b}
Boyko V.M., Lokaziuk O.V. and Popovych R.O.,
Admissible transformations and Lie symmetries of linear systems of second-order ordinary differential equations,
{\it J.~Math. Anal. Appl.} {\bf 539} (2024), 128543, arXiv:2105.05139.

\bibitem{boyk2015a}
Boyko V.M., Popovych R.O. and Shapoval N.M.,
Equivalence groupoids of classes of linear ordinary differential equations and their group classification,
{\it J.~Phys. Conf. Ser.} {\bf 621} (2015), 012001, arXiv:1403.6062.

\bibitem{boyk2024a}
Boyko V.M., Popovych R.O. and Vinnichenko O.O.,
Point- and contact-symmetry pseudogroups of dispersionless Nizhnik equation,
{\it Commun. Nonlinear Sci. Numer. Simul.} {\bf 132} (2024), 107915, arXiv:2211.09759. 

\bibitem{carm2000a}
Carminati J. and Vu K.,
Symbolic computation and differential equations: Lie symmetries,
{\it J.~Symbolic Comput.} {\bf 29} (2000), 95--116.

\bibitem{cham1988a}
Champagne B. and Winternitz P.,
On the infinite-dimensional symmetry group of the Davey--Stewartson equations,
{\it J.~Math. Phys.} {\bf 29} (1988), 1--8. 

\bibitem{davi1986a}
David D., Kamran N., Levi D. and Winternitz P.,
Symmetry reduction for the Kadomtsev--Petviashvili equation using a loop algebra,
{\it J. Math. Phys.} {\bf 27} (1986), 1225--1237.

\bibitem{card2011a}
Dos Santos Cardoso-Bihlo E., Bihlo A. and Popovych R.O.,
Enhanced preliminary group classification of a class of generalized diffusion equations,
{\it Commun. Nonlinear Sci. Numer. Simul.} {\bf 16} (2011), 3622--3638, arXiv:1012.0297.

\bibitem{card2013a}
Dos Santos Cardoso-Bihlo E. and Popovych R.O.,
Complete point symmetry group of the barotropic vorticity equation on a rotating sphere,
{\it J.~Engrg.\ Math.} {\bf 82} (2013), 31--38, arXiv:1206.6919.

\bibitem{card2021a}
Dos Santos Cardoso-Bihlo E. and Popovych R.O.,
On the ineffectiveness of constant rotation in the primitive equations and their symmetry analysis,
{\it Commun. Nonlinear Sci. Numer. Simul.} {\bf 101} (2021), 105885, arXiv:1503.04168. 

\bibitem{fush1994a}
Fushchych W. and Popowych R.,
Symmetry reduction and exact solutions of the Navier--Stokes equations.~I,
{\it J.~Nonlinear Math. Phys.} {\bf 1} (1994), 75--113, arXiv:math-ph/0207016.

\bibitem{fush1994b}
Fushchych W. and Popowych R.,
Symmetry reduction and exact solutions of the Navier--Stokes equations.~{II},
{\it J.~Nonlinear Math. Phys.} {\bf 1} (1994), 158--188, arXiv:math-ph/0207016.

\bibitem{hydo1998a}
Hydon P.E.,
Discrete point symmetries of ordinary differential equations,
{\it Proc. R. Soc. Lond. Ser. A Math. Phys. Eng. Sci.} {\bf 454} (1998), 1961--1972. 

\bibitem{hydo1998b}
Hydon P.E.,
How to find discrete contact symmetries,
{\it J.~Nonlinear Math. Phys.} {\bf 5} (1998), 405--416. 

\bibitem{hydo2000b}
Hydon P.E.,
How to construct the discrete symmetries of partial differential equations,
{\it European J. Appl. Math.} {\bf 11} (2000), 515--527. 

\bibitem{hydo2000A}
Hydon P.E.,
{\it Symmetry methods for differential equations},
Cambridge University Press, Cambridge, 2000.

\bibitem{kapi1978a}
Kapitanskii L.V.,
Group analysis of the Navier--Stokes and Euler equations in the presence of rotation symmetry and new exact solutions to these equations,
{\it Dokl. Akad. Nauk SSSR} {\bf 243} (1978), 901--904.

\bibitem{king1998a}
Kingston J.G. and Sophocleous C.,
On form-preserving point transformations of partial differential equations,
{\it J.~Phys.~A} {\bf 31} (1998), 1597--1619.

\bibitem{kono2002b}
Konopelchenko B. and Mart\'inez Alonso L.,
Nonlinear dynamics on the plane and integrable hierarchies of infinitesimal deformations,
{\it Stud. Appl. Math.} {\bf 109} (2002), 313--336. 

\bibitem{kono2004c}
Konopelchenko B. and Moro A.,
Geometrical optics in nonlinear media and integrable equations,
{\it J.~Phys.~A} {\bf 37} (2004), L105--L111. 

\bibitem{kono2004b}
Konopelchenko B. and Moro A.,
Integrable equations in nonlinear geometrical optics,
{\it Stud. Appl. Math.} {\bf 113} (2004), 325--352. 

\bibitem{kont2019a}
Kontogiorgis S., Popovych R.O. and Sophocleous C.,
Enhanced symmetry analysis of two-dimensional Burgers system,
{\it Acta Appl. Math.} {\bf 163} (2019), 91--128, arXiv:1709.02708.

\bibitem{kova2023b}
Koval S.D., Bihlo A. and Popovych R.O.,
Extended symmetry analysis of remarkable (1+2)-dimensional Fokker--Planck equation,
{\it European J. Appl. Math.} {\bf 34} (2023), 1067--1098, arXiv:2205.13526.

\bibitem{kova2023a}
Koval S.D. and Popovych R.O.,
Point and generalized symmetries of the heat equation revisited,
{\it J.~Math. Anal. Appl.} {\bf 527} (2023), 127430, arXiv:2208.11073. 

\bibitem{kudr2009a}
Kudryashov N.A.,
Seven common errors in finding exact solutions of nonlinear differential equations,
{\it Commun. Nonlinear Sci. Numer. Simul.} {\bf 14} (2009), 3507--3529, arXiv:1011.4268.

\bibitem{kuru2018a}
Kurujyibwami C., Basarab-Horwath P. and Popovych R.O.,
Algebraic method for group classification of (1+1)-dimensional linear Schr\"odinger equations,
{\it Acta Appl. Math.} {\bf 157} (2018), 171--203, arXiv:1607.04118.

\bibitem{kuru2020a}
Kurujyibwami C. and Popovych R.O.,
Equivalence groupoids and group classification of multidimensional nonlinear Schr\"odinger equations,
{\it J.~Math. Anal. Appl.} {\bf 491} (2020), 124271, arXiv:2003.02781.

\bibitem{malt2024a}
Maltseva D.S. and Popovych R.O.,
Complete point-symmetry group, Lie reductions and exact solutions of Boiti--Leon--Pempinelli system,
{\it Phys.~D} {\bf 460} (2024), 134081, arXiv:2103.08734.

\bibitem{mart1989a}
Martina L. and Winternitz P.,
Analysis and applications of the symmetry group of the multidimensional three-wave resonant interaction problem,
{\it Ann. Physics} {\bf 196} (1989), 231--277.

\bibitem{marv2009a}
Marvan M.,
Sufficient set of integrability conditions of an orthonomic system,
{\it Found. Comp. Math.} {\bf 9} (2009), 651--674, arXiv:nlin/0605009.

\bibitem{marv2003a}
Marvan M. and Sergyeyev A.,
Recursion operator for the stationary Nizhnik--Veselov--Novikov equation,
{\it J.~Phys.~A} {\bf 36} (2003), L87--L92. 

\bibitem{moro2021a}
Morozov O.I. and Chang J.-H.,
The dispersionless Veselov--Novikov equation: symmetries, exact solutions, and conservation laws,
{\it Anal. Math. Phys.} {\bf 11} (2021), 126. 

\bibitem{nizh1980a}
Nizhnik L.P.,
Integration of multidimensional nonlinear equations by the method of the inverse problem,
{\it Soviet Phys. Dokl.} {\bf 25} (1980), 706--708.

\bibitem{olve1993A}
Olver P.J.,
{\it Application of Lie groups to differential equations},
Springer, New York, 1993.

\bibitem{opan2017a}
Opanasenko S., Bihlo A. and Popovych R.O.,
Group analysis of general Burgers--Korteweg--de Vries equations,
{\it J.~Math. Phys.} {\bf 58} (2017), 081511, arXiv:1703.06932.

\bibitem{opan2020a}
Opanasenko S., Bihlo A., Popovych R.O. and Sergyeyev A.,
Extended symmetry analysis of isothermal no-slip drift flux model,
{\it Phys. D} {\bf 402} (2020), 132188, arXiv:1705.09277. 

\bibitem{opan2022a}
Opanasenko S. and Popovych R.O., Mapping method of group classification,
{\it J.~Math. Anal. Appl.} {\bf 513} (2022), 126209, arXiv:2109.11490. 

\bibitem{ovsi1982A}
Ovsiannikov L.V.,
{\it Group analysis of differential equations},
Academic Press, New York, 1982.

\bibitem{pavl2006a}
Pavlov M.V.,
Modified dispersionless Veselov--Novikov equation and corresponding hydrodynamic chains, 2006, arXiv:nlin/0611022.

\bibitem{poch2017a}
Pocheketa O.A. and Popovych R.O.,
Extended symmetry analysis of generalized Burgers equations,
{\it J.~Math. Phys.} {\bf 58} (2017), 101501, arXiv:1603.09377.

\bibitem{popo2003a}
Popovych R.O., Boyko V.M., Nesterenko M.O. and Lutfullin M.W.,
Realizations of real low-dimensional Lie algebras,
{\it J.~Phys.~A} {\bf 36} (2003), 7337--7360, arXiv:math-ph/0301029. 

\bibitem{popo2010c}
Popovych R.O. and Vaneeva O.O.,
More common errors in finding exact solutions of nonlinear differential equations. I,
{\it Commun. Nonlinear Sci. Numer. Simul.} {\bf 15} (2010), 3887--3899, arXiv:0911.1848.

\bibitem{popo1995b}
Popowych R.,
On Lie reduction of the Navier--Stokes equations,
{\it J.~Nonlinear Math. Phys.} {\bf 2} (1995), 301--311. 

\bibitem{serg2018a}
Sergyeyev A.,
New integrable (3+1)-dimensional systems and contact geometry,
{\it Lett. Math. Phys.} {\bf 108} (2018), 359--376, arXiv:1401.2122. 

\bibitem{vane2021a}
Vaneeva O.O., Popovych R.O. and Sophocleous C.,
Enhanced symmetry analysis of two-dimensional degenerate Burgers equation,
{\it J.~Geom. Phys.} {\bf 169} (2021), 104336, arXiv:1908.01877.

\bibitem{yeho2004a}
Yehorchenko I.,
Group classification with respect to hidden symmetry,
{\it Proceedings of Institute of Mathematics of NAS of Ukraine. Mathematics and its Applications} {\bf 50} (2004), Part 1, 290--297.

\bibitem{zakh1994a}
Zakharov V.E.,
Dispersionless limit of integrable systems in 2+1 dimensions,
in {\it Singular limits of dispersive waves (Lyon, 1991)}, {\it NATO Adv. Sci. Inst. Ser. B: Phys.}, 320, Plenum, New York, 1994, pp.~165--174.

\end{thebibliography}
\end{document}